\newcommand{\VAL}{\ensuremath{\alpha}\xspace}
\newcommand{\Value}[1]{\ensuremath{\VAL_{#1}}\xspace}
\newcommand{\FreqVec}{\ensuremath{\vc{p}}\xspace}
\newcommand{\Freq}[1]{\ensuremath{p_{#1}}\xspace}
\newcommand{\FFR}{\ensuremath{p}\xspace}
\newcommand{\TFr}[1][]{\ensuremath{\ifthenelse{\equal{#1}{}}{p}{p_{#1}}}\xspace}
\newcommand{\EL}[1][]{\ensuremath{\ifthenelse{\equal{#1}{}}{T}{T_{#1}}}\xspace}
\newcommand{\GR}{\ensuremath{\varphi}\xspace} 
\newcommand{\SEQ}{\ensuremath{s}\xspace}
\newcommand{\Seq}[1]{\ensuremath{\SEQ_{#1}}\xspace}
\newcommand{\SEQP}{\ensuremath{s'}\xspace}
\newcommand{\SeqP}[1]{\ensuremath{\SEQP_{#1}}\xspace}
\newcommand{\SEQDIST}{\ensuremath{\Sigma}\xspace}
\newcommand{\LOC}{\ensuremath{\ell}\xspace}
\newcommand{\Loc}[1]{\ensuremath{\LOC(#1)}\xspace}
\newcommand{\LOCDIST}[1][]{\ensuremath{
\ifthenelse{\equal{#1}{}}{\Lambda}{\Lambda_{#1}}}\xspace}
\newcommand{\CLOC}{\ensuremath{\bar{\ell}}\xspace}
\newcommand{\STRAT}{\ensuremath{\mathcal{L}}\xspace}
\newcommand{\UTIL}{U}
\newcommand{\AUtilFull}[2]{\ensuremath{\UTIL(#1,#2)}\xspace}
\newcommand{\AUtilTwo}[3]{\ensuremath{\UTIL(#1,(#2,#3))}\xspace}
\newcommand{\AUtil}[1]{\ensuremath{\UTIL(#1)}\xspace}
\newcommand{\SHIFTOP}[1]{\ensuremath{M_{#1}}\xspace}
\newcommand{\ShiftOp}[2]{\ensuremath{\SHIFTOP{#1}(#2)}\xspace}
\newcommand{\SHIFTOPT}[1]{\ensuremath{\tilde M_{#1}}\xspace}
\newcommand{\ShiftOpT}[2]{\ensuremath{\SHIFTOPT{#1}(#2)}\xspace}
\newcommand{\TBV}[1][]{\ensuremath{\ifthenelse{\equal{#1}{}}{B}{B_{#1}}}\xspace} 
\newcommand{\RTIME}[1][]{\ensuremath{\ifthenelse{\equal{#1}{}}{R}{R_{#1}}}\xspace} 
\newcommand{\CDF}[1][]{\ensuremath{\ifthenelse{\equal{#1}{}}{F}{F_{#1}}}\xspace}
\newcommand{\PDF}[1][]{\ensuremath{\ifthenelse{\equal{#1}{}}{f}{f_{#1}}}\xspace}
\newcommand{\Cdf}[2][]{\ensuremath{\ifthenelse{\equal{#1}{}}{\CDF(#2)}{\CDF_{#1}(#2)}}\xspace}
\newcommand{\Pdf}[2][]{\ensuremath{\ifthenelse{\equal{#1}{}}{\PDF(#2)}{\PDF_{#1}(#2)}}\xspace}
\newcommand{\CDFH}[1][]{\ensuremath{\ifthenelse{\equal{#1}{}}{\hat{\CDF}}{\hat{\CDF}_{#1}}}\xspace}
\newcommand{\PDFH}[1][]{\ensuremath{\ifthenelse{\equal{#1}{}}{\hat{\PDF}}{\hat{\PDF}_{#1}}}\xspace}
\newcommand{\MUL}[1][]{\ensuremath{\ifthenelse{\equal{#1}{}}{\eta}{\eta_{#1}}}\xspace}
\newcommand{\OFFSET}{\ensuremath{\lambda}\xspace}
\newcommand{\CIRC}{h}
\newcommand{\circmap}[1]{\ensuremath{\CIRC(#1)}\xspace}
\newcommand{\circinv}[1]{\ensuremath{\CIRC^{-1}(#1)}\xspace}
\newcommand{\FIB}[1]{\ensuremath{f_{#1}}\xspace}
\newcommand{\Neigh}[1]{\ensuremath{\Gamma(#1)}\xspace}
\newcommand{\Slot}{\ensuremath{t}\xspace}
\newcommand{\ALLSLOTS}{\ensuremath{Z}\xspace}
\newcommand{\SLOTS}[1][]{\ensuremath{\ifthenelse{\equal{#1}{}}{W}{W_{#1}}}\xspace}
\newcommand{\SLOTSP}[1][]{\ensuremath{\ifthenelse{\equal{#1}{}}{W'}{W'_{#1}}}\xspace}
\newcommand{\Pos}[2]{\ensuremath{y_{#1,#2}}\xspace}
\newcommand{\POS}{\ensuremath{Y}\xspace}
\newcommand{\Attack}[3]{\ensuremath{(#1,#2,#3)}\xspace}
\newcommand{\AStart}{\ensuremath{t_0}\xspace}
\newcommand{\ALength}{\ensuremath{t}\xspace}
\def\dd{\mathrm{d}}
\newcommand{\e}{\ensuremath{\mathrm{e}}\xspace}
\newcommand{\lcm}{\ensuremath{\mathop{lcm}}\xspace}
\newcommand{\vc}[1]{\ensuremath{\bm{#1}}\xspace}
\newcommand{\suppress}[1]{}
\begin{document}
\title{Quasi-regular sequences and optimal schedules for security games}
\author{David Kempe\thanks{Department of Computer Science,
    University of Southern California.
    Supported in parts by NSF grants 1619458 and 1423618.}
\and Leonard J.\ Schulman\thanks{Engineering and Applied Science,
  California Institute of Technology. 
  Supported in part by NSF grants 1319745, 1618795, and a
  EURIAS Senior Fellowship co-funded by the Marie Sk{\l}odowska-Curie Actions under the 7th Framework Programme.}
\and Omer Tamuz\thanks{Departments of Economics and Mathematics,
  California Institute of Technology}}

\begin{titlepage}
\maketitle
We study \emph{security games} in which a defender commits to a mixed strategy for
protecting a finite set of targets of different values.
An attacker, knowing the defender's strategy, chooses which target to
attack and for how long.
If the attacker spends time \ALength at a target $i$ of value \Value{i},
and if he leaves before the defender visits the target,
his utility is $\ALength \cdot \alpha_i $;
if the defender visits before he leaves, his utility is $0$.
The defender's goal is to minimize the attacker's utility.
The defender's strategy consists of a schedule for visiting the targets;
it takes her unit time to switch between targets.
Such games are a simplified model of a number of real-world scenarios
such as protecting computer networks from intruders,
crops from thieves, etc.

We show that optimal defender play for such security games,
although played in continuous time,
reduces to the solution of a combinatorial question regarding
the existence of infinite sequences over a finite alphabet,
with the following properties for each symbol $i$:
(1) $i$ constitutes a prescribed limiting fraction \Freq{i} of the sequence.
(2) The occurrences of $i$ are spread apart close to evenly,
in that the ratio of the longest to shortest interval between
consecutive occurrences is bounded by a parameter $K$.
We call such sequences $K$-quasi-regular;
a $1$-quasi-regular sequence is one in which 
the occurrences of each symbol form an arithmetic sequence.

As we show, a $1$-quasi-regular sequence ensures an optimal defender
strategy for these security games: the intuition for this fact lies in
the famous ``inspection paradox.''
However, as we  demonstrate, for $K < 2$ and general \Freq{i},
$K$-quasi-regular sequences may not exist.
Fortunately, this does not turn out to be an obstruction:
we show that, surprisingly, $2$-quasi-regular sequences also suffice
for optimal defender play.
What is more, even randomized $2$-quasi-regular sequences suffice for optimality.
We show that such sequences always exist, and can be calculated efficiently.
Thus, we can ensure optimal defender play for these security games.

The question of the least $K$ for which deterministic
$K$-quasi-regular sequences exist is fascinating.
Using an ergodic theoretical approach,
we proceed to show that deterministic $3$-quasi-regular
sequences always exist (and can be calculated efficiently).
We also show that these deterministic $3$-regular sequences
give rise to a $\approx 1.006$-approximation algorithm for the
defender's optimal strategy.
For $2 \leq K < 3$ we do not know whether deterministic
$K$-quasi-regular sequences always exist;
however, when the \TFr[i] are all small, improved bounds are possible,
and in fact, $(1+\epsilon)$-quasi-regular deterministic sequences exist
for any $\epsilon > 0$ for sufficiently small \TFr[i].



\end{titlepage}

\section{Introduction} 
\label{sec:introduction} 

One of the most successful real-world applications at the
intersection of game theory and computer science has been
\emph{security games}.
Security games have been used recently to model and address
problems including the protection of infrastructure (airports,
seaports, flights), deterrence of fare evasion and smuggling,
as well as the protection of wildlife and plants.
The related model of \emph{inspection games} \cite{avenhaus2002inspection}
has been used to model interactions as varied as arms control,
accounting and auditing, environmental controls, or data verification.

In general models of security games,
there are $n$ targets of various values that the defender tries
to protect with her limited resources.\footnote{For consistency, we
  always refer to the attacker with male pronouns
  and the defender with female pronouns.} 
Different assumptions and scenarios can lead to
different interesting combinatorial constraints;
see \cite{tambe2011security} for an overview of much recent work.

In the present work, we are concerned with a defender who cannot switch
instantaneously between different targets,
which introduces a timing component and a scheduling problem.
At a high level, such constraints arise in many natural security settings, including:
\begin{enumerate}
\item Protection of computer networks (with multiple databases or
  account holders) from infiltrators. 
\item Protection of wildlife from poachers
  (e.g., \cite{fang2015green,paws2017}), crops or other plants from
  thieves, or homes in a neighborhood from burglars.
\end{enumerate}

Stripping away details, we propose the following simplified
  model for these types of settings:
If the attacker has access to an unprotected target, he gains utility in
proportion to the value of the target and to the time he spends at the
target.\footnote{In the case of access to computer systems, this
  models a scenario observed in recent attacks where the attacker
  lurks --- whether in order to monitor legitimate users, create ongoing
  damage, or because file sizes or bandwidth concerns make it
  impossible to download the entire database in a short amount of
  time.}
The game is zero-sum, i.e., the attacker's gain is the defender's loss.
If the attack is interrupted by the defender at any time,
both players receive utility 0.
Due to physical distances between targets or switching costs between databases,
the defender requires one unit of time to switch between any two targets.
The problem of interest is to determine a schedule for the defender
that will lead to minimum expected defender loss against a
best-responding attacker.

More formally, each of the $n$ targets has value $\Value{i} \geq 0$,
scaled so that $\sum_i \Value{i} = 1$.
We assume that no target is strictly more valuable than all other targets
combined,\footnote{See Section~\ref{sec:conclusion} for a discussion
  of this choice; outside of this assumption is a different r\'{e}gime
  that requires different analysis.}  so that $\Value{i}\leq 1/2$.
Time is continuous, and the game has an infinite time horizon.
A defender strategy is a schedule of which target is visited at each point in time,
with some time spent in transit.
An attacker strategy consists of a choice of a single time
interval $[\AStart,\AStart+\ALength]$ and a target $i$ to attack.
If the defender does not visit target $i$ during this interval,
the attacker obtains a utility of $\UTIL = \ALength \cdot \Value{i}$,
and the defender receives $-\UTIL$.
If the defender visits the target at any point during the interval,
both players' utilities are 0.

In our modeling of the security setting, an attack is a sufficiently
disruptive event that it effectively ends the game, by which we mean
that the defender will subsequently re-randomize her schedule.  Thus,
the entire game is concluded after one attack.  We assume that the
defender's mixed strategy (but not randomness) is known to the
attacker who will best-respond by choosing an attack
\Attack{i}{\AStart}{\ALength}, comprising the target $i$ and the start
time and duration\footnote{%
  One could consider an ``adaptive'' attacker, who initially only
  chooses the start time \AStart, and decides on \ALength on the fly.
  The resulting model would be equivalent, as such an adaptive
  attacker at any time $t' > \AStart$ has learned no new information
  aside from the fact that he has not yet been caught; any attacker
  who under these conditions will decide to wait for exactly \ALength
  units of time (if not caught) is exactly equivalent to one who
  chooses the attack \Attack{i}{\AStart}{\ALength} at once.}
of the attack, \AStart and \ALength.
Accordingly, the defender chooses a Min-Max strategy: a strategy which
minimizes the maximum expected return of any attack
\Attack{i}{\AStart}{\ALength}.

Although the game is played in continuous time,
we show (in Section~\ref{sec:preliminaries})
that under the assumption that $\Value{i} \leq 1/2$ for all $i$,
optimal defender strategies can be obtained as follows:
choose a suitable (random) \emph{discrete sequence}
$\Seq{0}, \Seq{1}, \Seq{2}, \ldots$ of targets
and a uniformly random $\tau \in [0,1]$;
then visit each target \Seq{t} at time $t+\tau \in \R^+$.
At all other times, the defender is in transit,
so that each target is visited instantaneously,
with the next target visited exactly one time unit later.

It is not difficult to show that if each target $i$ occupies a
\Value{i} fraction of the sequence and is exactly evenly spaced in the
sequence, the resulting defender schedule is optimal.
Some intuition is derived from the famous ``Inspection Paradox'' or
``Waiting Time Problem:''
passengers of a bus service which departs a station with perfect regularity
(e.g., 15 minutes apart) wait on average half as long as passengers of
a service with the same frequency of operation but Poisson departure times.
In our case, higher variance in the defender's interarrival times
lengthens the expected time until the next defender visit,
making longer attacks more attractive.

We call sequences in which each target $i$ is
exactly evenly spaced \emph{regular};
generalizing this notion, we call a sequence \SEQ $K$-quasi-regular
(with respect to frequencies \Freq{i}; in our applications,
we typically choose $\Freq{i} = \Value{i}$)
if, as before, each target $i$ takes up a \Freq{i} fraction of the sequence,
and the ratio of the longest to shortest interval between consecutive
occurrences of $i$ in \SEQ is bounded by $K$.
Our first result (Theorem~\ref{thm:optimal-distributions}) is that
--- surprisingly --- it suffices for optimality that the sequence is
\emph{$2$-quasi-regular}.

\subsection{Our Main Result}

It is fairly straightforward to show that there is some
vector $(\Freq{i})_i$ such that there are no
$(2-\epsilon)$-quasi-regular sequences for any $\epsilon > 0$;
we do this in Section~\ref{sec:optimal-schedule}.
Our main result (Theorem~\ref{thm:optimal-powers-of-2} in
Section~\ref{sec:optimal-schedule}) is that for any values
\Freq{i}, there exists a 2-quasi-regular random sequence,
which can furthermore be efficiently computed from the \Freq{i}. 
By the aforementioned Theorem~\ref{thm:optimal-distributions}, 
the corresponding defender mixed strategy is optimal.

\subsection{Ergodic Schedules}

Quasi-regular sequences are basic combinatorial objects, quite apart
from our application of them.
One limitation of our work (although it does not affect the
application) is that the resulting schedules are not ergodic:
they randomize between different schedules in which the
targets have frequencies differing from the desired \Freq{i}.
It is then a natural question whether 2-quasi-regular ergodic sequences can
be obtained as well.
This is related to the following combinatorial question:
given densities \Freq{i},
does there always exist a $2$-quasi-regular deterministic sequence?

We provide two partial answers to this question.
In Section~\ref{sec:golden-ratio},
we analyze a very simple schedule called the \emph{Golden Ratio Schedule}
(variations of which have been studied in the context of
hashing~\cite[pp.~510,511,543]{knuth1998art}, bandwidth
sharing~\cite{itai1984golden,panwar1992golden} and elsewhere). 
This schedule is generated by the following random sequence: 
partition the circle of circumference $1$ into intervals of size \Freq{i}
corresponding to the targets $i$.
Choose a uniformly random starting point on the circle. 
In each step, add \GR to the current point;
here, $\GR = \half(1+\sqrt{5})$ is the Golden Ratio.
In each time step, the defender visits the target $i$ into
whose interval the current point falls.

This random sequence is ergodic, and at worst
3-quasi-regular.\footnote{When all $\Freq{i} \leq 1-1/\GR$, the
  quasi-regularity guarantee improves to $8/3$,
  and as $\Freq{i} \to 0$, it converges to $\GR^2$.}
Moreover, for any choice of the random starting point,
the deterministic sequence is 3-quasi-regular.
Thus we show that there always exist deterministic 3-quasi-regular
sequences.
We do not know if this is true for any $K < 3$.

It is interesting that such a simple schedule achieves constant
quasi-regularity, but the bound is not strong enough to guarantee
optimality of the schedule for the defender.
However, we show that the schedule is nearly optimal for the defender:
the attacker's utility is within a factor of at most $1.006$ of the
minimum attacker utility (and this bound is tight).  
The proof of this approximation guarantee relies on a theorem of
Slater about simple dynamical systems like the Golden Ratio shift,
and a somewhat intricate analysis of the attacker's response.
We find it remarkable that such a simple policy comes provably within
0.6\% of the optimum, in particular compared to another very simple policy:
as we show in Appendix~\ref{sec:geometric},
the simple \emph{i.i.d.~schedule},
which always chooses the next target $i$ to visit with probability \Value{i},
independent of the history, is only a $4/\e$-approximation.

\smallskip

As a second partial result towards obtaining an optimal ergodic
schedule, in Section~\ref{sec:matching},
we show a sufficient condition for the existence
of $(1+\epsilon)$-quasi-regular sequences, for any $\epsilon > 0$.
Specifically, let $M$ be the smallest common denominator of all \Freq{i}.
If $\Freq{i} = O(\frac{\epsilon}{\sqrt{n \log M}})$ for all $i$,
then a $(1+\epsilon)$-quasi-regular periodic schedule exists and can
be found efficiently using a randomized algorithm that succeeds with high probability.

The algorithm is based on placing points for target $i$ at uniform
distance proportional to $1/\Freq{i}$ on the unit circle, 
with independent uniformly random offsets. 
Points can only be matched to sufficiently close multiples of $1/M$.
An application of Hall's Theorem,
similar to~\cite{tijdeman1973distribution,holroyd2010rotor},
shows that under the conditions of the theorem, 
this algorithm succeeds with high probability in producing a
$(1+\epsilon)$-quasi-regular sequence.

\subsubsection*{Related Work}

Several notions of ``sequences in which elements $i$ are roughly
regularly spaced, with given frequencies \Freq{i}'' have been studied
in different contexts.

Dating back to the work of Tijdeman
\cite{tijdeman1973distribution,tijdeman1980chairman}, several papers
\cite{angel-discrepancy,holroyd2010rotor,tijdeman1973distribution,tijdeman1980chairman,chung2016discrepancy}
(see also an overview in~\cite{chen2014panorama}) have studied
sequences with low \emph{discrepancy} in the following sense:
up to any time $t$, the number of occurrences of element 
$i$ approximates $t \cdot \Freq{i}$ as closely as possible.
For our application, the rate of convergence of the frequencies to
\Freq{i} is not essential;
but it is crucial that the defender's interarrival times at each
target be as regular as possible.
Consequently, methods from this literature are not
sufficient to optimally solve our problem.

In the \emph{Pinwheel Problem} \cite{HMRTV:pinwheel-conference,holte:rosier:tulchinsky:varval:pinwheel,chan:chin:pinwheel,lin:lin:pinwheel,fishburn:lagarias:pinwheel},
one is given integers $n_1, n_2, \ldots, n_k \geq 2$, whose density is defined by
$\beta := \sum_i 1/n_i \leq 1$.
The goal is to produce a sequence \SEQ of the elements
$\SET{1, \ldots, k}$ such that for each $i$, each subsequence
$(\Seq{t}, \Seq{t+1}, \ldots, \Seq{t+n_i-1})$ of length $n_i$
contains at least one occurrence of the symbol $i$.
One of the main questions in this area
is what values $\beta$ are sufficient to guarantee
the existence of such a sequence.
Currently, it is known that $\beta \leq 5/6$ is necessary (i.e., there
are examples with $\beta = 5/6 + \epsilon$ such that no
\SEQ exists), and $\beta \leq 7/10$ is sufficient. 
In our case, we always have desired frequencies adding up to 1,
and we have not only (looser) upper bounds,
but also lower bounds on the distance between consecutive occurrences.
While some of the basic techniques in the context of the
Pinwheel Problem are similar to the ones we use
(in Section~\ref{sec:optimal-schedule}), 
solutions for one problem do not imply solutions to the other.

In concurrent and independent work,
Immorlica and Kleinberg~\cite{immorlica-kleinberg-bandits}
--- motivated in part also by applications to preventing wildlife
poaching --- defined a ``Recharging Bandits'' problem in which the
available reward at targets grows according to (unknown) concave
functions.
In the full-information setting (in which the functions are known),
(near-)optimal solutions correspond to schedules that visit targets
with given frequencies at roughly evenly spaced intervals.
The precise definition of ``roughly evenly spaced'' differs from ours,
and while some of the techniques used in
\cite{immorlica-kleinberg-bandits} are similar to ours,
optimality results and approximation guarantees do not imply each
other in either direction.

Our work is related to the inspection games literature
(see, e.g.,~\cite{avenhaus2002inspection,vonstengel2016recursive}).
Specifically, several
articles~\cite{diamond1982minimax,avenhaus2005playing,avenhaus2013distributing}
consider models in which an inspectee indulges in illegal activity
once or multiple times within a finite time interval.
An inspector distributes optimally the times at which she performs a
fixed number of inspections, and suffers a loss that is proportional
to the time that has elapsed between the beginning of illegal
activity and the next inspection.
In these models, as in ours, the inspector wants to visit
inspectees regularly while keeping the inspectee uncertain about
visit times.
The lack of travel time restrictions as well as the lack of a need to
catch the inspectee at the time of his action make the models
mathematically incomparable.

Finally, our work is also related to the literature on
\emph{patrol games}. Here, as in our model, a defender (or multiple
cooperating defenders) must decide on a schedule of visits to
targets facing attacks. However, unlike our model, the attacker
observes the past locations of the defender(s) before deciding
whether and where to attack (see, e.g.,~\cite{basilico2009leader,
vorobeychik2012adversarial}).

\section{Preliminaries}
\label{sec:preliminaries}

The $n$ targets have values $\Value{i} > 0$ for all $i$.
Because the units in which target values are measured are irrelevant,
we assume that $\sum_i \Value{i} = 1$.
We assume that no target has value exceeding the sum of all other
targets' values, meaning (after normalization) that
$\Value{i} \leq \half$ for all $i$.

A \emph{pure strategy} (\emph{schedule})
for the defender is a measurable mapping
$\LOC: \R^{\geq 0} \to \SET{1, 2, \ldots, n, \perp}$,
where $\perp$ denotes that the defender is in transit. 
A schedule \LOC is \emph{valid} if $\Loc{t} = i$ and
$\Loc{t'} = j \neq i$ implies that $|t'-t| \geq 1$.
In other words, there is enough time for the defender to move from $i$
to $j$ (or from $j$ to $i$).
We use \STRAT to denote the set of all valid pure defender strategies.

The defender moves first and chooses a mixed strategy,
i.e., a distribution \LOCDIST over \STRAT, or a random \LOC.
Then, the attacker chooses an attack \Attack{i}{\AStart}{\ALength}
consisting of a target $i$, a start time \AStart, and an attack
duration \ALength.
Subsequently, a mapping \LOC is drawn from the defender's distribution
\LOCDIST.
The attacker's utility is
\begin{align}
\AUtilFull{\LOC}{\Attack{i}{\AStart}{\ALength}}
& = \begin{cases}
0  & \mbox{ if } \LOC(\tau) = i \mbox{ for some } \tau \in [\AStart, \AStart+\ALength]\\
\Value{i} \cdot \ALength & \mbox{ otherwise}.
\end{cases} \label{eqn:attacker-util}
\end{align}
Since we are considering a zero-sum game
(see Section~\ref{sec:conclusion} for a discussion),
the defender's utility is
$-\AUtilFull{\LOC}{\Attack{i}{\AStart}{\ALength}}$.
Note that the attacker attacks only once.

A rational attacker will choose \Attack{i}{\AStart}{\ALength} 
so as to maximize
$\Expect[\LOC \sim \LOCDIST]{\AUtilFull{\LOC}{\Attack{i}{\AStart}{\ALength}}}$;
therefore, the defender's goal is to choose \LOCDIST to minimize
\begin{align*}
  \AUtil{\LOCDIST} =   \sup_{i,\AStart,\ALength}\Expect[\LOC \sim
  \LOCDIST]{\AUtilFull{\LOC}{\Attack{i}{\AStart}{\ALength}}}.
\end{align*}

The fact that we assume an infinite time horizon is primarily an
idealization, in order to avoid mathematical inconveniences at the end
of the time horizon. The reader is encouraged to think of the
``infinite'' time horizon as one or a few days, long enough that a
significant schedule needs to be planned and boundary effects can be
ignored at small cost; but short enough that the attacker cannot
observe early parts of the schedule to infer which schedule \LOC was
drawn from \LOCDIST.

\subsection{Canonical, Shift-Invariant, and Ergodic Schedules}
The general definition of defender schedules allows for strange
schedules that are clearly suboptimal.
We would like to restrict our attention to ``reasonable'' schedules.
In particular, we will assume the two following conditions, which we
later show to hold without loss of generality.
(Here, we will be slightly informal in our definitions.
Precise definitions and constructions ensuring these properties are
given in Appendix~\ref{sec:appendix-formalization}.)

\begin{itemize}
\item Whenever the defender transitions from one target $i$ to a
  target $i'$ ($i' = i$ is possible), she spends exactly one time unit
  in transit.  We call such schedules \emph{canonical}.
\item To the attacker, any two times $t$ and $t'$ ``look the same,''
  in that for any $t,t',\tau \in \R^+$, the distributions of the
  defender's schedule restricted to the time intervals $[t,t+\tau]$
  and $[t',t'+\tau]$ are the same.
  We call such schedules \emph{shift-invariant} or \emph{stationary}.
\end{itemize}

Because the strategy spaces of both players are infinite,
it is not clear a priori that a Min-Max schedule for the defender exists.
In Appendix~\ref{sec:appendix-formalization},
we show that an optimal mixed Min-Max defender strategy does
exist, and is w.l.o.g.~canonical and shift-invariant.
Therefore, for the remainder of this paper,
we will focus only on shift-invariant canonical schedules.
When the defender's strategy is shift-invariant,
the start time \AStart of the attack does not matter,
so shift-invariance allows us to implicitly assume that the attacker
always attacks at time $\AStart = 0$.
We then simply write \AUtilTwo{\LOCDIST}{i}{\ALength} for the attacker's
expected utility from attacking target $i$ for \ALength units of time.

One may additionally be interested in constructing
\emph{ergodic} shift-invariant mixed schedules:
\LOCDIST is ergodic if \LOCDIST cannot be
written as the convex combination
$\LOCDIST = \lambda \LOCDIST_1 + (1-\lambda)\LOCDIST_2$
of two different shift-invariant mixed schedules
(see the formal discussion in Appendix~\ref{sec:appendix-formalization}).
While we are not aware of game-theoretical implications of ergodicity,
the question is mathematically natural,
and may be important in some extensions of our model.

\subsection{Return Times and Target Visit Frequencies}
  
For the following definitions, recall that we are 
focusing only on shift-invariant canonical schedules.
Accordingly, we assume without loss of generality
that the attacker starts his attack at time 0.
Since the attacker chooses only the target $i$ and the duration $t$ of
the attack, from his perspective, the property of the defender's
strategy that matters is the distribution of her
next \emph{return time} to target $i$, defined as
$\RTIME[i] = \min \Set{t \geq 0}{\Loc{t} = i}$.
Given a target $i$ and a defender strategy, let
$\Cdf[i]{t} = \Prob{\RTIME[i] \leq t}$
denote the CDF of \RTIME[i].
In particular, notice that \Cdf[i]{0} is the fraction of time the
defender spends waiting at target $i$.
In terms of the distribution of return times \CDF[i], the attacker's
utility can be expressed as follows:

\begin{align}
\AUtilTwo{\CDF[i]}{i}{t} & = \Value{i} \cdot t \cdot (1-\Cdf[i]{t}).
\label{eqn:expected-attacker-util}
\end{align}

Next, we define a random variable \TBV[i] capturing the (random)
duration between consecutive visits to the same target $i$.
Notice the subtle difference between this quantity and the time from
the defender's perspective between leaving a target and returning to
it.
By comparison, the distribution of \TBV[i] should assign
higher probability to higher values:
as in the inspection paradox, at a random point in time,
the attacker is more likely to find himself in a large gap.
In other words, larger gaps are more likely to appear at a fixed time
than on average over a long time stretch.

Defining the distribution for \TBV[i] precisely requires some care,
and is done formally in Appendix~\ref{sec:appendix-formalization}.
The construction formalizes the following intuition:
we can consider the limit as we shift the random schedule \LOC
left by a real number $t \to \infty$. 
This extends a shift-invariant random schedule from the
non-negative reals to all reals,
and allows us to consider the random variable
\[
  \TBV[i] = (\inf \Set{t \in \R, t > 0}{\Loc{t} = i}) 
         - (\sup \Set{t \in \R, t \leq 0}{\Loc{t} = i}).
\]
The random variable \TBV[i] captures the time between consecutive
visits before and after time 0 ---
by shift-invariance, the time 0 is arbitrary here.
Our constructions will always ensure that \TBV[i] is finite.






We can write \CDF[i] in terms of \TBV[i].
First, we note that conditioned on $\TBV[i] = \tau$,
the distribution of \RTIME[i] is uniform between 0 and $\tau$.
Hence, its conditional CDF is
$\ProbC{\RTIME[i] \leq t}{\TBV[i]=\tau} = \min(1,t/\tau)$.
The unconditional CDF of \RTIME[i] for $t > 0$ is then
\begin{align}
  \label{eq:cdf-tbv}
  \Cdf[i]{t} 
 & = \Expect[{\TBV[i]}]{\min(1,t/\TBV[i])} 
 \; = \; \int_0^1 \Prob{t/\TBV[i] \geq \tau} \dd\tau
 \; = \; \int_0^1 \Prob{\TBV[i] \leq t/\tau} \dd\tau
 \; = \; t\cdot\int_t^\infty\frac{\Prob{\TBV[i] \leq \tau}}{\tau^2}\,\dd\tau;
\end{align}
the final equality uses a change of integration variable.
For $t=0$, we have that $\Cdf[i]{0} = \Prob{\TBV[i]=0}$ is
the probability that the defender is at target $i$ at time $0$.
Two key quantities for our analysis are
\begin{align*} 
  \TFr[i] &= \Cdf[i]{1}, &  \EL[i] &= \frac{1-\Cdf[i]{0}}{\TFr[i]-\Cdf[i]{0}}.
\end{align*}
In every canonical schedule, \TFr[i] is the fraction of time that the
defender is either at target $i$ or in transit to $i$.
Thus, $\sum_i\TFr[i]=1$ in every canonical schedule.
\EL[i] intuitively captures the ``expected time'' between
consecutive visits to target $i$, as seen by the defender.
However, this intuition formally holds only for periodic schedules,
necessitating the preceding more complex definition for arbitrary
schedules.
The most useful facts about \CDF[i] are summarized by the following
proposition:
\begin{proposition} \label{lem:F-TFr-generalized}
\begin{enumerate}
  \item $\Cdf[i]{\cdot}$ is concave.
  \item $\Cdf[i]{t} \leq \Cdf[i]{0} + (\TFr[i] - \Cdf[i]{0}) \cdot t$,
    with equality iff $\Prob{\TBV[i]=0 \mbox{ or } \TBV[i] > t} = 1$.
\end{enumerate}
\end{proposition}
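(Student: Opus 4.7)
The plan is to first establish concavity of $\CDF[i]$ and then derive the linear upper bound (with its equality condition) as a quick consequence of the three-point chord inequality applied to $\CDF[i]$ at $0$, $1$, and $t$. The main tool is the representation $\Cdf[i]{t} = \Expect{\min(1, t/\TBV[i])}$ from equation~(\ref{eq:cdf-tbv}), which realizes $\CDF[i]$ as an expectation of elementary piecewise-linear functions of $t$.

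For part 1, I would observe that for each realized $b > 0$ the map $t \mapsto \min(1, t/b)$ is piecewise linear on $[0,\infty)$ with slope $1/b$ on $[0,b]$ and slope $0$ on $[b,\infty)$, hence concave. In the atomic case $b = 0$, the convention $\min(1, t/0) = 1$ for $t > 0$ makes the map constant and trivially concave on $(0, \infty)$. An expectation of concave functions is concave, so $\CDF[i]$ is concave on $(0,\infty)$. To extend concavity to all of $[0, \infty)$, I would verify right-continuity at $0$ via monotone convergence, $\lim_{t \to 0^+}\Cdf[i]{t} = \Prob{\TBV[i] = 0} = \Cdf[i]{0}$, and then pass to the limit in the concavity inequality at points approaching $0$.

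For part 2, I would apply concavity at the three points $0, 1, t$ in the regime $t \geq 1$ where the bound is nontrivial (the two sides already match at $t = 0$ and $t = 1$). Writing $1 = (1 - 1/t)\cdot 0 + (1/t)\cdot t$ and invoking concavity gives $\Cdf[i]{1} \geq (1 - 1/t)\,\Cdf[i]{0} + (1/t)\,\Cdf[i]{t}$; rearranging and substituting $\Cdf[i]{1} = \Freq{i}$ produces exactly $\Cdf[i]{t} \leq \Cdf[i]{0} + (\Freq{i} - \Cdf[i]{0})\, t$.

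For the equality clause, equality in the three-point concavity inequality holds iff $\CDF[i]$ is affine on $[0, t]$. Using the representation, the right derivative of $\Cdf[i]{s}$ on $(0,\infty)$ works out to $\Expect{\mathbf{1}_{\TBV[i] > s}/\TBV[i]}$, which is constant on a subinterval of $(0,\infty)$ iff $\TBV[i]$ places no mass in its interior. Combining this with the boundary behavior at $s = 0$, $\CDF[i]$ is affine on $[0, t]$ precisely when the distribution of $\TBV[i]$ is supported on $\{0\} \cup (t, \infty)$, which is the stated condition $\Prob{\TBV[i] = 0 \mbox{ or } \TBV[i] > t} = 1$. The main subtlety I anticipate is handling the atom of $\TBV[i]$ at $0$ cleanly --- this is exactly what motivates the separate definition of $\Cdf[i]{0}$ and the convention $\min(1, t/0) = 1$; once that boundary behavior is pinned down, the rest of the argument follows mechanically from the chord inequality.
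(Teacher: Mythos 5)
Your Part 1 proof is essentially the same as the paper's: both realize $\CDF[i]$ as an expectation of the concave functions $t \mapsto \min(1,t/b)$ and invoke closure of concavity under mixing. You additionally spell out the $b=0$ atom and the continuity of $\CDF[i]$ at $0$, which the paper leaves tacit.

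For Part 2 you take a genuinely different, and arguably cleaner, route. You obtain the linear upper bound as a chord inequality from the concavity just established (evaluating the chord of $\CDF[i]$ over $[0,t]$ at the interior point $1$ and rearranging), so Part 2 becomes a direct corollary of Part 1. The paper instead re-derives the bound from scratch by manipulating the integral formula~\eqref{eq:cdf-tbv}: it combines $\int_1^\infty - \int_t^\infty = \int_1^t$ and lower-bounds $\Prob{\TBV[i]\le\tau}\ge\Cdf[i]{0}$ inside the integrand. Both routes give the same inequality and the same equality characterization; the paper's integral computation reads the equality condition off the integrand directly, while your derivative-based argument is equivalent but uses the extra (standard) fact that a concave function meeting its chord at an interior point is affine on the whole subinterval. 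One shared cosmetic point, affecting the paper's statement as much as your write-up: the condition that actually falls out of either argument is that $\TBV[i]$ has no mass in the open interval $(0,t)$, i.e., $\Prob{\TBV[i]=0 \mbox{ or } \TBV[i]\ge t}=1$, since an atom exactly at $t$ still gives equality; the strict $\TBV[i]>t$ in the statement is slightly off. Finally, as in the paper, the chord argument as written tacitly assumes $t\ge 1$ (so that $0<1<t$); for $t\in(0,1)$ the stated bound holds with equality for canonical schedules, but this follows from $\TBV[i]\in\{0\}\cup[1,\infty)$ rather than from concavity.
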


\begin{emptyproof}
  \begin{enumerate}
  \item For each $x$, the function $\min(1,t/x)$ is concave in $t$.
    Hence, $\Cdf[i]{t} = \Expect[x]{\min(1,t/x)}$, being a
    convex combination of concave functions, is concave.
  \item
    Rearranging and applying~\eqref{eq:cdf-tbv}, we want to show that
    \begin{align*}
      t \cdot \int_1^\infty\frac{\Prob{\TBV[i] \leq \tau}}{\tau^2}\,\dd\tau 
      - t \cdot \int_t^\infty\frac{\Prob{\TBV[i] \leq \tau}}{\tau^2}\,\dd\tau
     & \geq (t-1) \cdot \Cdf[i]{0}.
    \end{align*}
    Combining the two integrals and lower-bounding
    $\Prob{\TBV[i] \leq \tau} \geq \Cdf[i]{0}$ yields that the left hand
    side is at least
    \begin{align*}
      \Cdf[i]{0} \cdot t \cdot \int_1^t\frac{1}{\tau^2}\,\dd\tau 
     & = \Cdf[i]{0} \cdot (t-1),
    \end{align*}
    with the lower bound holding with equality iff
    $\Prob{\TBV[i]=0 \mbox{ or } \TBV[i] > t} = 1$.\QED
  \end{enumerate}
\end{emptyproof}

\subsection{Schedules from Sequences}
All the constructions of mixed defender schedules in this paper will
have the property that the defender never waits at any target,
instead traveling immediately to the next target.
That such schedules are optimal
(and hence the focus on such constructions is w.l.o.g.)
is a consequence of our main Theorem~\ref{thm:optimal-distributions},
and hinges on the restriction that $\Value{i}\leq 1/2$ for all $i$.
A brief discussion of what happens when this assumption is relaxed is
given in Section~\ref{sec:conclusion}.

Canonical schedules without waiting are readily identified with
schedules defined only on integer times, since the defender must only
choose, after visiting a target, which target she will visit next.
We call such schedules \emph{sequences},
defined as $\SEQ \colon \N \to \SET{1,\ldots,n}$.
A sequence, together with a start time $t_0$,
naturally defines a canonical schedule,
by setting $\Loc{t} = \Seq{t-t_0}$ if $t-t_0 \in \N$,
and $\Loc{t} = \perp$ otherwise.
\SEQDIST denotes a distribution over sequences,
or the distribution of a random sequence \SEQ.

Shift-invariance (or stationarity) can be defined for random
sequences as for (continuous) mixed schedules.
When \SEQ is a periodic sequence, i.e., there is a $k$ such that
$\Seq{t+k} = \Seq{t}$ for all $t$, a shift-invariant random sequence
can be obtained particularly easily, by choosing a uniformly random
$\kappa \in \SET{0, \ldots, k-1}$, and defining $\SEQP$ via
$\SeqP{t} = \Seq{t+\kappa}$; for aperiodic sequences, we give
  a construction in Appendix~\ref{sec:appendix-formalization}.  From a
shift-invariant random sequence, we can obtain a shift-invariant mixed
schedule straightforwardly, by choosing the start time $t_0 \in [0,1]$
uniformly.

For the special case of random sequences,
the definitions of \TFr[i] and \EL[i] simplify to
$\TFr[i] = \Prob{\Seq{1} = i}$ (which is now exactly the fraction of
target visits devoted to target $i$),
and $\EL[i] = 1/\TFr[i]$ (since $\Cdf[i]{0} = 0$).

\subsection{Regular and Quasi-Regular Sequences}

We say that a shift-invariant random sequence \SEQ is
$K$-quasi-regular (with respect to target frequencies \Freq{i})
if the following two hold for each target $i$:
\begin{enumerate}
\item $\Prob{\Seq{1}=i} = \Freq{i}$.
\item There is some $b_i$ such that 
$\Prob{b_i \leq \TBV_i \leq K \cdot b_i}=1$.
\end{enumerate}
In other words, each target $i$ is visited with frequency \Freq{i},
and the maximum gap for consecutive visits to target $i$ is
within a factor $K$ of the minimum gap with probability 1.
A random sequence is \emph{regular} if it is 1-quasi-regular,
meaning that all visits to target $i$ are spaced exactly \EL[i] apart.
(All definitions extend directly to canonical, mixed, shift-invariant
schedules.)

A particularly straightforward way to obtain a $K$-quasi-regular
random sequence \SEQDIST is to consider the a subsequential
limit of uniformly random shifts of a deterministic sequence
\SEQ in which the gaps between consecutive visits to $i$ are bounded
between $b_i$ and $K b_i$, and the density of entries which are $i$ is
\Freq{i}.

\section{The Attacker's Response, and Optimal Schedules} \label{sec:utility-to-schedule}

In this section, we show the following main theorem, a sufficient
condition for a random sequence to be optimal for the defender.



\begin{theorem}
  \label{thm:optimal-distributions}
  Consider a random shift-invariant sequence such that the following
  two hold for each target $i$:
\begin{itemize}
\item $\EL[i] = 1/\Value{i}$.
\item For each $i$, there exists an \MUL[i] such that
  $\Prob{\frac{\MUL[i]}{\MUL[i]+1}\EL[i] \leq \TBV_i \leq \MUL[i] \EL[i] }=1$. 
\end{itemize}
Then, the associated mixed strategy is optimal for the defender. 

In particular, these conditions hold for $2$-quasi-regular random
sequences with respect to \Value{i}.
\end{theorem}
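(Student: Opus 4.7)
The plan is to establish matching upper and lower bounds of $\frac{1}{4}$ on the attacker's maximum expected utility: an upper bound against the specific schedule in the theorem, and a universal lower bound against any shift-invariant defender strategy. First I translate the theorem's assumptions into properties of $\CDF[i]$: since the sequence has no waiting, $\Cdf[i]{0}=0$ and $\EL[i] = 1/\TFr[i]$, so condition~1 forces $\TFr[i] = \Value{i}$. A short argument also shows $\MUL[i] \geq 1$: the a.s.\ bound $\TBV[i] \leq \MUL[i]\EL[i]$ gives the same bound on the ordinary inter-visit time $X_i$, and for shift-invariant sequences $\Expect{X_i} = 1/\TFr[i] = \EL[i]$.

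For the upper bound against the theorem's schedule, I split $t$ into three regimes. On $[0,\frac{\MUL[i]}{\MUL[i]+1}\EL[i]]$ the equality case of Proposition~\ref{lem:F-TFr-generalized}(2) applies (since $\Prob{\TBV[i] > t}=1$), yielding $\Cdf[i]{t} = \Value{i} t$ and attacker utility $\Value{i}t(1-\Value{i}t) \leq \frac{1}{4}$ by AM-GM; because $\MUL[i]\geq 1$, the unconstrained maximizer $t=1/(2\Value{i})$ lies inside this interval and attains $\frac{1}{4}$ with equality. For $t \geq \MUL[i]\EL[i]$, $\Cdf[i]{t} = 1$ (by $\TBV[i] \leq \MUL[i]\EL[i]$ a.s.), so utility is $0$. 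In the middle regime $[\frac{\MUL[i]}{\MUL[i]+1}\EL[i], \MUL[i]\EL[i]]$, concavity of $\CDF[i]$ (Proposition~\ref{lem:F-TFr-generalized}(1)) together with the endpoint values $\Cdf[i]{\frac{\MUL[i]}{\MUL[i]+1}\EL[i]} = \frac{\MUL[i]}{\MUL[i]+1}$ and $\Cdf[i]{\MUL[i]\EL[i]}=1$ gives a chord lower bound on $\Cdf[i]{t}$; a short algebraic simplification reduces the attacker utility to $\frac{\Value{i}t}{\MUL[i]}\bigl(1 - \frac{\Value{i}t}{\MUL[i]}\bigr) \leq \frac{1}{4}$ by AM-GM. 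Hence the attacker's best response against this schedule is exactly $\frac{1}{4}$.

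For the matching lower bound, I show that against any shift-invariant defender strategy the attacker can guarantee at least $\frac{1}{4}$ in expectation. Consider the attacker's mixed strategy that picks target $i$ with probability $\Value{i}$ and attacks for duration $t_i = 1/(2\Value{i}) \geq 1$ (using $\Value{i} \leq \frac{1}{2}$). The maximum over pure attacker strategies of $\Value{i}t(1-\Cdf[i]{t})$ is at least the expected utility under this mixture, which equals $\frac{1}{2} - \frac{1}{2}\sum_i \Value{i}\Cdf[i]{t_i}$. Applying Proposition~\ref{lem:F-TFr-generalized}(2) to each $t_i \geq 1$, together with $\Value{i} \leq \frac{1}{2}$ and $\Cdf[i]{0} \geq 0$, bounds $\Value{i}\Cdf[i]{t_i} \leq \TFr[i]/2$; summing with $\sum_i \TFr[i] = 1$ yields at least $\frac{1}{4}$, matching the upper bound and certifying optimality.

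The most delicate piece is the concave middle-regime bound, which requires a careful chord-then-AM-GM reduction; all other steps are routine bookkeeping with Proposition~\ref{lem:F-TFr-generalized}. The ``In particular'' assertion follows as a corollary: for a 2-quasi-regular sequence with $\TBV[i] \in [b_i, 2b_i]$ a.s., the identity $\Expect{X_i}=\EL[i]$ forces $b_i \in [\EL[i]/2, \EL[i]]$, and then any $\MUL[i]$ in the nonempty interval $[2b_i/\EL[i],\, b_i/(\EL[i]-b_i)]$ satisfies the theorem's hypotheses.
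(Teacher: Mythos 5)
Your proof is correct, and the upper-bound half (split the attack duration at $\frac{\MUL[i]}{\MUL[i]+1}\EL[i]$ and $\MUL[i]\EL[i]$, use the equality case of Proposition~\ref{lem:F-TFr-generalized}(2) on the left, concavity plus a chord on the middle, and AM--GM twice) is essentially the content of the paper's Proposition~\ref{lem:optimal-distributions}. Where you genuinely diverge is in the lower bound. The paper fixes a single target $i$ with $\Value{i}/\TFr[i]\ge 1$ (such an index exists since $\sum_i\Value{i}=1$ and $\sum_i\TFr[i]\le 1$, and then $\Value{i}\le\tfrac12$ forces $\TFr[i]\le\tfrac12$), applies Proposition~\ref{lem:attacker-minimum} with $t=\EL[i]/2$, and does a small algebraic estimate involving $\Cdf[i]{0}$ to conclude the attacker gets at least $\tfrac14$ by a \emph{pure} strategy. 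You instead form a mixed attacker strategy that picks $i$ with probability $\Value{i}$ and attacks for $t_i=1/(2\Value{i})\ge 1$, compute the expected utility $\tfrac12-\tfrac12\sum_i\Value{i}\Cdf[i]{t_i}$, and bound $\Value{i}\Cdf[i]{t_i}\le\TFr[i]/2$ via Proposition~\ref{lem:F-TFr-generalized}(2) and $t_i\ge 1$, then sum. Both are sound; yours is a cleaner averaging argument that sidesteps the pigeonhole step and is arguably more symmetric, while the paper's is more constructive in that it exhibits one pure best response. Two small points worth making explicit in your write-up: (i) the reduction to canonical schedules (the paper's Lemma~\ref{lem:canonical}) is what justifies using $\sum_i\TFr[i]\le 1$ and cleans up what ``any shift-invariant defender strategy'' means; (ii) your derivation of the 2-quasi-regular corollary is a slight generalization of the paper's---you exhibit the whole feasible interval of $\MUL[i]$, where the paper simply chooses the endpoint $\MUL[i]=2b_i/\EL[i]$.
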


In Section~\ref{sec:optimal-schedule}, we show that there always exists
a $2$-quasi-regular sequence. 
With the eventual goal of proving Theorem~\ref{thm:optimal-distributions},
we fix a target $i$, and for now drop the subscript $i$, so that
\begin{align*}
  \TFr = \TFr[i]  & &\Cdf{t} = \Cdf[i]{t} &
  &\EL = \EL[i] && \TBV = \TBV[i].
\end{align*}
We fix  \TFr and \EL and study which sequences --- among all those
with given \TFr and \EL --- are optimal for the defender.
Our algorithmic constructions will choose
$\TFr[i] = \Value{i}$ for all $i$;
however, in order to show the optimality of this choice,
the following proposition and corollary are proved for general
$\TFr, \EL$.
  
\begin{proposition} \label{lem:attacker-minimum}
Consider any canonical shift-invariant mixed defender schedule (over
the non-negative real numbers).
By choosing $t=\EL/2$, 
the attacker guarantees himself a utility of at least
$\VAL \cdot \frac{1-\Cdf{0}}{4} \cdot \EL$.
\end{proposition}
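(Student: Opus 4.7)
The plan is to directly substitute $t = \EL/2$ into the attacker's utility formula and then apply the linear upper bound on $\CDF$ from Proposition~\ref{lem:F-TFr-generalized}(2).

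First, I would recall that for any shift-invariant canonical schedule, by equation~\eqref{eqn:expected-attacker-util} the attacker's expected utility from attacking target $i$ for duration $t$ is
\begin{align*}
\AUtilTwo{\CDF}{i}{t} \;=\; \VAL \cdot t \cdot (1 - \Cdf{t}).
\end{align*}
So at $t = \EL/2$, the utility equals $\VAL \cdot \tfrac{\EL}{2} \cdot (1 - \Cdf{\EL/2})$, and the claim reduces to showing $1 - \Cdf{\EL/2} \geq \tfrac{1-\Cdf{0}}{2}$.

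Next, I would invoke Proposition~\ref{lem:F-TFr-generalized}(2), which gives the linear upper bound $\Cdf{t} \leq \Cdf{0} + (\TFr - \Cdf{0}) \cdot t$. Using the definition $\EL = (1-\Cdf{0})/(\TFr - \Cdf{0})$, this rewrites as
\begin{align*}
\Cdf{t} \;\leq\; \Cdf{0} + \frac{(1-\Cdf{0})\,t}{\EL}.
\end{align*}
Plugging in $t = \EL/2$ yields $\Cdf{\EL/2} \leq \Cdf{0} + \tfrac{1-\Cdf{0}}{2} = \tfrac{1+\Cdf{0}}{2}$, hence $1 - \Cdf{\EL/2} \geq \tfrac{1-\Cdf{0}}{2}$.

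Combining the two steps gives $\AUtilTwo{\CDF}{i}{\EL/2} \geq \VAL \cdot \tfrac{\EL}{2} \cdot \tfrac{1-\Cdf{0}}{2} = \VAL \cdot \tfrac{1-\Cdf{0}}{4} \cdot \EL$, as desired. There is not really a significant obstacle here: the only subtle point is recognizing that the choice $t = \EL/2$ is natural because it is exactly the point where the slack in the linear bound on $\CDF{\cdot}$ is maximized relative to the trivial bound $\CDF{\cdot} \leq 1$, so the attacker extracts a constant fraction of the obvious upper bound on the expected return time.
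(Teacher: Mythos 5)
Your proof is correct and takes essentially the same route as the paper: substitute $t=\EL/2$ into the utility formula~\eqref{eqn:expected-attacker-util} and apply the linear upper bound on $\CDF$ from Proposition~\ref{lem:F-TFr-generalized}(2), using the definition of $\EL$ to simplify. The only cosmetic difference is that the paper bounds $1-\Cdf{\EL/2}$ directly rather than first bounding $\Cdf{\EL/2}$.
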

\begin{proof}
By Equation~\eqref{eqn:expected-attacker-util}, 
the attacker's utility at time $t=\EL / 2$ is 
$\VAL\cdot (\EL/2) \cdot (1-\Cdf{\EL/2})$.
Using Proposition~\ref{lem:F-TFr-generalized}, 
we can bound
\[
1-\Cdf{\EL/2}
\; \geq \;
1-\Cdf{0} - (\TFr-\Cdf{0}) \cdot (\EL/2)
\; = \; 
(1-\Cdf{0}) \cdot \left(1- \frac{\TFr-\Cdf{0}}{1-\Cdf{0}} \cdot
(\EL/2) \right)
\; = \; \frac{1-\Cdf{0}}{2}.
\]
Hence, the attacker's utility is at least
$\VAL\cdot\frac{1-\Cdf{0}}{4} \cdot \EL$.
\end{proof}


Recalling that a random \emph{sequence} by definition
does not involve waiting at any target,
we obtain the following simple corollary about random
sequences that are worst for the attacker:
\begin{corollary} \label{lem:sufficient-optimal}
Among random sequences with fixed \EL and \TFr,
any random sequence is optimal for the defender if the attacker's
payoff is upper-bounded by $\quarter \cdot \VAL \cdot \EL$.
\end{corollary}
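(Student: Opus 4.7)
The plan is to observe that Corollary~\ref{lem:sufficient-optimal} is essentially an immediate consequence of Proposition~\ref{lem:attacker-minimum}, once one isolates the fact that random sequences, by definition, never wait at a target.

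First I would note that for any random sequence, the defender's visit to any target is instantaneous: at any given time, the defender is either in transit or is at a target for a single point in time. Hence $\Cdf[i]{0} = \Prob{\TBV[i] = 0} = 0$ for every target $i$ (and, dropping the subscript as in the surrounding discussion, $\Cdf{0} = 0$). Plugging this into Proposition~\ref{lem:attacker-minimum}, we see that for the specific target $i$ under consideration and for any random sequence whatsoever (with the given \EL), the attacker can guarantee himself a utility of at least
\[
  \VAL \cdot \frac{1 - \Cdf{0}}{4} \cdot \EL \; = \; \VAL \cdot \frac{\EL}{4}
\]
by choosing the attack duration $t = \EL/2$. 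This lower bound depends only on \VAL and \EL, not on any finer property of the random sequence, and in particular it is the same for every random sequence with the prescribed \EL and \TFr.

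Next I would compare this universal lower bound against the hypothesis of the corollary. The hypothesis says that for the specific random sequence under consideration, the attacker's payoff (maximized over his best response) is at most $\tfrac{1}{4}\cdot\VAL\cdot\EL$. Combining the two bounds, the best response against this sequence yields the attacker exactly $\tfrac{1}{4}\cdot\VAL\cdot\EL$, while the best response against any other random sequence with the same \EL and \TFr yields him at least this much. Therefore, no random sequence with these parameters can do strictly better for the defender (i.e., strictly worse for the attacker), so the sequence in question is optimal within this class.

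I do not expect a real obstacle here; the only subtle point is the justification that $\Cdf{0}=0$ for random sequences, which follows from the formal construction of schedules from sequences in Section~\ref{sec:preliminaries} (target visits happen at isolated integer times shifted by $t_0$, a measure-zero set). The corollary will then be stated and proved in two or three lines, invoking Proposition~\ref{lem:attacker-minimum} with $\Cdf{0}=0$ and comparing against the stated upper bound.
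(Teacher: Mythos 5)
Your proof is correct and matches the paper's intended (brief) argument: the paper itself introduces the corollary with the remark that random sequences by definition never wait, so $\Cdf{0}=0$, and then the claim is immediate from Proposition~\ref{lem:attacker-minimum}, exactly as you spell out. Your filling-in of why $\Cdf{0}=0$ (visits occur on a measure-zero set of instants) and the final comparison of the universal lower bound against the hypothesized upper bound are precisely what the paper leaves implicit.
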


The following corollary is particularly useful:
\begin{corollary} \label{lem:attacker-zero}
Fix \EL and \TFr, and consider a random sequence in which the
attacker's optimal attack duration $t$ satisfies $\Prob{\TBV > t}=1$.
Then, this random sequence is optimal for the defender. 
Furthermore, in this case, w.l.o.g., $t = \EL/2$.
\end{corollary}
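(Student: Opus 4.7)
The plan is to reduce the claim to Proposition~\ref{lem:F-TFr-generalized}(2) and Corollary~\ref{lem:sufficient-optimal}, by recognizing that the hypothesis is exactly the equality case of Proposition~\ref{lem:F-TFr-generalized}(2). For any random sequence, $\Cdf{0} = 0$, so $\EL = 1/\TFr$; and the assumption $\Prob{\TBV > t} = 1$ (which in particular forces $\Prob{\TBV = 0} = 0$) is precisely the equality condition appearing in Proposition~\ref{lem:F-TFr-generalized}(2). This identifies $\Cdf{t} = \TFr \cdot t$ at the attacker's optimal duration~$t$.

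Substituting into \eqref{eqn:expected-attacker-util}, the attacker's payoff at~$t$ equals $\VAL \cdot t \cdot (1 - \TFr t)$, a downward-opening parabola in~$t$ bounded above by its unconstrained maximum $\VAL/(4\TFr) = \tfrac{1}{4}\VAL\EL$, attained at $t = 1/(2\TFr) = \EL/2$. Hence the attacker's optimum payoff is at most $\tfrac{1}{4}\VAL\EL$, and Corollary~\ref{lem:sufficient-optimal} immediately yields optimality of the sequence for the defender.

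For the ``w.l.o.g.\ $t = \EL/2$'' clause, Proposition~\ref{lem:attacker-minimum} supplies the matching lower bound: with $\Cdf{0} = 0$, the attacker's payoff at duration $\EL/2$ is at least $\VAL \cdot \tfrac{1-\Cdf{0}}{4} \cdot \EL = \tfrac{1}{4}\VAL\EL$. Combined with the upper bound of $\tfrac{1}{4}\VAL\EL$ on the attacker's maximum payoff established above, duration $\EL/2$ is itself an optimal attack duration, so one may simply replace the hypothesized~$t$ by $\EL/2$. The only conceptual step in the entire argument is spotting that the hypothesis matches the equality case of Proposition~\ref{lem:F-TFr-generalized}(2); everything else is elementary quadratic optimization and direct appeal to previously established results, so I anticipate no substantial obstacle.
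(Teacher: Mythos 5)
Your proof is correct and follows essentially the same route as the paper's: apply the equality case of Proposition~\ref{lem:F-TFr-generalized}(2) (with $\Cdf{0}=0$ for sequences) to get $\Cdf{t}=\TFr t$, bound the resulting parabola by $\tfrac14\VAL\EL$, invoke Corollary~\ref{lem:sufficient-optimal}, and appeal to Proposition~\ref{lem:attacker-minimum} for the $t=\EL/2$ claim. You spell out the $\Cdf{0}=0$ step and the matching lower bound a bit more explicitly than the paper, but the argument is the same.
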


\begin{proof}
By the assumption that $\Prob{\TBV > t}=1$
and Proposition~\ref{lem:F-TFr-generalized}, we have that
$\Cdf{t} = \TFr \cdot t$.
Hence, the attacker's utility is 
$\VAL \cdot (1 - \TFr \cdot t) \cdot t
= 
\VAL \cdot \frac{t \cdot (\EL-t)}{\EL}
\leq 
\VAL\cdot \frac{\EL}{4}.$
  Now, the claim follows directly from 
  Corollary~\ref{lem:sufficient-optimal}.
  That $t=\EL/2$ is a best response follows from
  Proposition~\ref{lem:attacker-minimum}.
\end{proof}

We can now apply these corollaries to show optimality 
for a single target for which the ``quasi-regularity'' of return times
holds. 

\begin{proposition} \label{lem:optimal-distributions}
Fix \EL and \TFr, and consider a random sequence such that for some \MUL,
\begin{align*}
\Prob{\frac{\MUL}{\MUL+1}\EL \leq \TBV \leq \MUL \EL} = 1.
\end{align*}
Then, this schedule is optimal for the defender among schedules with
these \EL and \TFr.
\end{proposition}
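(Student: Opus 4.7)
The plan is to apply Corollary~\ref{lem:sufficient-optimal}: it suffices to prove that the attacker's utility $\AUtilTwo{\CDF}{i}{t} = \VAL \cdot t \cdot (1-\Cdf{t})$ is bounded above by $\VAL \cdot \EL / 4$ for every attack duration $t > 0$. I would break the analysis into three regions of $t$ and use both parts of Proposition~\ref{lem:F-TFr-generalized} in the right places.

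A preliminary observation is that the hypothesis implicitly forces $\MUL \geq 1$. This is essentially the inspection paradox: for any canonical random sequence, the size-biased inter-visit interval satisfies $\Expect{\TBV} \geq \EL$, so $\MUL \cdot \EL \geq \Expect{\TBV} \geq \EL$. This inequality is what allows the attacker's eventual best response $t^* = \EL/2$ to land inside the ``linear'' part of \CDF.

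In the first region, $t \in [0, \tfrac{\MUL}{\MUL+1}\EL]$, the hypothesis gives $\Prob{\TBV > t} = 1$ (using continuity of $F$ at the right endpoint). Recalling that random sequences satisfy $\Cdf{0}=0$ and $\TFr\cdot\EL=1$, Proposition~\ref{lem:F-TFr-generalized}(2) gives the equality $\Cdf{t} = \TFr \cdot t$, so $U(t) = \VAL t(1-\TFr t)$, a concave quadratic maximized at $t = \EL/2$ with value $\VAL\EL/4$. Since $\MUL \geq 1$, this maximizer lies inside the interval. In the second region, $t \in [\tfrac{\MUL}{\MUL+1}\EL, \MUL \EL]$, observe that $\RTIME \leq \TBV \leq \MUL\EL$ almost surely, so $\Cdf{\MUL\EL}=1$; together with $\Cdf{\tfrac{\MUL}{\MUL+1}\EL} = \tfrac{\MUL}{\MUL+1}$ from the first region and the concavity of \CDF (Proposition~\ref{lem:F-TFr-generalized}(1)), this yields a linear lower bound on \CDF{t} that simplifies to
\[
1 - \Cdf{t} \;\leq\; \frac{\MUL\EL - t}{\MUL^2 \EL}.
\]
Therefore $U(t) \leq \VAL \cdot t \cdot (\MUL\EL - t)/(\MUL^2\EL)$, another downward parabola whose maximum over $t$ is $\VAL\EL/4$ (attained at $t = \MUL\EL/2$). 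In the third region, $t \geq \MUL\EL$, we have $\Cdf{t}=1$ and $U(t)=0$.

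Combining the three regions gives $\sup_t U(t) \leq \VAL \EL/4$, which by Corollary~\ref{lem:sufficient-optimal} proves optimality. The only nontrivial step is the concavity argument in the middle region; it is clean essentially because $(\MUL+1)^2 \geq 4\MUL$, so the two quadratic upper bounds in the first and second regions peak at exactly the same value $\VAL\EL/4$ (with equality when $\MUL=1$, i.e., when \TBV is deterministic). No other piece of the proof is delicate.
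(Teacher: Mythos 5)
Your proof is correct and follows essentially the same route as the paper: both split $t$ into the same three regions, establish $\Cdf{t}=\TFr t$ on the first via Proposition~\ref{lem:F-TFr-generalized}(2), use the chord from $(\tfrac{\MUL}{\MUL+1}\EL,\tfrac{\MUL}{\MUL+1})$ to $(\MUL\EL,1)$ with concavity on the second, and observe the two resulting parabolas peak at exactly $\VAL\EL/4$. Your side remark that $\MUL\geq 1$ is a nice observation but isn't actually needed — the bound $\VAL t(\EL-t)/\EL \leq \VAL\EL/4$ holds over all of $\R$, so it doesn't matter whether $\EL/2$ falls inside the first region — which is why the paper omits it.
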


\begin{proof}
We write $\xi = \frac{\MUL}{\MUL+1}$.
By Proposition~\ref{lem:attacker-minimum}, choosing $t = \EL/2$, 
the attacker can guarantee himself at least a utility of 
$\quarter \cdot \VAL \cdot \EL$.
We will show below that the attacker's utility for any attack duration
$t \in [\xi \EL, \infty)$ is at most
$\quarter \cdot \VAL \cdot \EL$.

Hence, the attacker has an optimal attack duration
$t \leq \xi \EL$ (either $t = \EL/2$ or a different $t$).
By the assumption that $\Prob{\xi \EL \leq \TBV} = 1$
and Proposition~\ref{lem:F-TFr-generalized},
$\Cdf{\xi \EL} = \TFr \cdot \xi \EL$.
Using the concavity of \CDF, this implies that
$\Cdf{t} = \TFr \cdot t$ for all $t \leq \xi \EL$.
Thus, whichever such $t$ is optimal for the attacker,
Corollary~\ref{lem:attacker-zero} implies that \CDF
is optimal for the defender, and furthermore,
that $t = \EL/2$ is optimal for the attacker after all.

It remains to prove the upper bound for $t \geq \xi \EL$.
For any $t \geq \MUL \EL$, the assumption 
that $\TBV \leq \MUL\EL$ with probability 1 implies that
$\Cdf{\MUL \EL} = 1$, and hence a utility of 0 for the attacker. 
So we focus on $t \in [\xi \EL, \MUL \EL]$, and show that in
this range, the maximum utility of the attacker is 
at most $\VAL \cdot \EL/4$. 

We proved above that $\Cdf{\xi \EL} = \xi$,
and by the assumption $\Prob{\TBV \leq \MUL \EL} = 1$,
we get that $\Cdf{\MUL \EL} = 1$. 
Since \CDF is concave by Proposition~\ref{lem:F-TFr-generalized},
for $t \in [\xi \EL, \MUL \EL]$,
\CDF is bounded below by the line connecting
$(\xi \EL, \xi)$ and $(\MUL \EL, 1)$, so
\begin{align*}
\Cdf{t} & \geq 
\xi + \frac{t-\xi \EL}{(\MUL-\xi) \EL} \cdot (1- \xi)
\; = \; \xi \cdot \frac{\MUL-1}{\MUL-\xi} 
      + \frac{1-\xi}{(\MUL-\xi)\EL} \cdot t
\; = \; \frac{\MUL-1}{\MUL} + \frac{1}{\MUL^2} \cdot \frac{t}{\EL}.
\end{align*}
Hence, the attacker's utility is upper-bounded by
$\VAL \cdot t \cdot (1-\Cdf{t}) 
\; \leq \;
\VAL \cdot 
t \cdot \left(\frac{1}{\MUL} - \frac{1}{\MUL^2} \cdot \frac{t}{\EL} \right)$.
This is maximized at $t^*=\frac{\MUL \EL}{2}$, so we obtain that
$\VAL \cdot t \cdot (1-\Cdf{t}) 
\; \leq \;
\VAL \cdot t^* \cdot \left(\frac{1}{\MUL} - \frac{1}{\MUL^2} \cdot \frac{t^*}{\EL} \right)
\; = \; \VAL \cdot \frac{\EL}{4}$.
\end{proof}

\begin{extraproof}{Theorem~\ref{thm:optimal-distributions}}
To complete the proof of Theorem~\ref{thm:optimal-distributions},
we now consider multiple targets $i$.
By the assumptions of the theorem and 
Proposition~\ref{lem:optimal-distributions}, 
against the proposed class of random sequences,
the attacker can obtain utility at most \quarter, 
regardless of which target $i$ he attacks and for how long;
this follows by substituting $\EL[i] = 1/\TFr[i] = 1/\Value{i}$.

We will show that no shift-invariant mixed defender schedule 
(now considered over the non-negative real numbers) can
achieve an expected attacker payoff strictly smaller than \quarter.
Focus on a shift-invariant mixed defender schedule \LOCDIST.
By Lemma~\ref{lem:canonical}, we may assume that \LOCDIST is
canonical.

Fix some index $i$ such that $\Value{i}/\TFr[i] \geq 1$.
Such an index must exist because $\sum_i \Value{i} = 1$ and 
$\sum_i \TFr[i] \leq 1$.
Because we assumed that $\Value{i} \leq \half$ for all $i$, 
this also implies that $\TFr[i] \leq \half$.

By Proposition~\ref{lem:attacker-minimum}, 
attacking target $i$ for $t = \EL[i]/2$ units of time, 
the attacker can guarantee himself a utility of at least  
\[
\Value{i} \cdot \frac{1-\Cdf[i]{0}}{4} \cdot \EL[i]
= \Value{i} \cdot \frac{(1-\Cdf[i]{0})^2}{4(\TFr[i]-\Cdf[i]{0})}
\stackrel{\TFr[i] \leq \half}{\geq} \Value{i} \cdot \frac{1}{4\TFr[i]}
\geq \quarter,
\]
where the final inequality followed because the chosen index $i$
satisfied $\Value{i}/\TFr[i] \geq 1$.
Hence, the attacker can guarantee himself a payoff of at least
\quarter against any mixed defender schedule, proving optimality of
the proposed class of random sequences.

Finally, we show that this applies to $2$-quasi-regular random sequences.
Assume that there exists a $b$ such that
$\Prob{b \leq \TBV[i] \leq 2b}=1$, and define $\MUL[i] = 2b/\EL[i]$.
First, this definition directly implies that
$\TBV[i] \leq \MUL[i] \EL[i]$ with probability 1.
Second, because $\TBV[i] \leq 2b$ with probability 1,
we get that $\EL[i] \leq 2b$, and hence
$\frac{\MUL[i]}{\MUL[i]+1}\EL[i] = \frac{2b}{2b/\EL[i]+1} \leq b$.
Hence, the fact that $\TBV[i] \geq b$ with probability 1
implies that $\TBV[i] \geq \frac{\MUL[i]}{\MUL[i]+1}\EL[i]$
with probability 1,
completing the proof.
\end{extraproof}

\section{An Optimal Defender Strategy} 
\label{sec:optimal-schedule}

In this section, we present Algorithm~\ref{alg:optimal},
constructing a $2$-quasi-regular random sequence.
Such a random sequence is optimal for the defender by
Theorem~\ref{thm:optimal-distributions}.
We begin with the high-level algorithm,
and fill in the details of the key steps below.
  
\begin{algorithm}[htb]
\caption{An optimal schedule for the defender\label{alg:optimal}}
\begin{algorithmic}[1]
\STATE Let $p_i = \Value{i}$ for all $i$.  
\STATE For each $i$, let $m_i$ be such that $2^{-m_i} \leq p_i <
2^{1-m_i}$.  Let $I_i = [2^{-m_i}, 2^{1-m_i}]$.
\STATE Use the algorithm from the proof of
Lemma~\ref{lem:transform-distribution} for $\bm{p}$ and the 
  $I_i$ to randomly round $\bm{p}$ to a probability vector $\bm{q}$,
  such that all but at most one index $i$ have $q_i = 2^{-m_i}$ or
  $q_i = 2^{1-m_i}$.  
\STATE Use the algorithm from the proof of
Lemma~\ref{lem:one-non-power} to produce a periodic sequence \SEQ.
\STATE Return the random sequence obtained by choosing a uniform
  random shift of \SEQ.
\end{algorithmic}
\end{algorithm}

Notice that the sequence produced by Algorithm~\ref{alg:optimal} is
shift-invariant by construction, but not ergodic, since it randomizes
over different shift-invariant distributions.

\begin{theorem} \label{thm:optimal-powers-of-2} 
The random sequence generated by Algorithm~\ref{alg:optimal} is
$2$-quasi-regular, and hence optimal for the defender.
\end{theorem}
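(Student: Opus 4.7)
The plan is to verify the two defining conditions for $2$-quasi-regularity of the random sequence produced by Algorithm~\ref{alg:optimal} with respect to the frequencies $p_i = \Value{i}$, and then simply invoke Theorem~\ref{thm:optimal-distributions} to conclude optimality. Since the random sequence is obtained by (a) a random rounding from $\vc{p}$ to $\vc{q}$, (b) deterministically producing a periodic sequence \SEQ of symbol densities $q_i$ via Lemma~\ref{lem:one-non-power}, and (c) taking a uniformly random shift, the resulting random sequence is shift-invariant by construction, so we only need to check the frequency and gap properties.

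For the frequency property $\Prob{\Seq{1}=i}=\Value{i}$: because of the uniform shift, the probability $\Prob{\Seq{1}=i}$ conditional on the rounding outcome $\vc{q}$ equals the long-run density of $i$ in the periodic sequence produced in Step 4, which is $q_i$ by construction. Taking expectation over the rounding and using that Lemma~\ref{lem:transform-distribution} provides an expectation-preserving randomized rounding ($\Expect{q_i}=p_i=\Value{i}$), I get $\Prob{\Seq{1}=i}=\Value{i}$, as required.

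For the gap property, I will take $b_i = 2^{m_i-1}$ for every target $i$ and show $\Prob{b_i \leq \TBV_i \leq 2b_i}=1$. The rounding in Step 3 forces each $q_i$ into one of at most three values: the two endpoints $2^{-m_i}$ and $2^{1-m_i}$ of the interval $I_i$, or (for the at-most-one exceptional index in a given realization) a value strictly inside $I_i$. By the structure promised by Lemma~\ref{lem:one-non-power}, when $q_i$ is a power of $2$ the symbol $i$ sits at a fixed arithmetic progression with common difference $1/q_i \in \{2^{m_i-1}, 2^{m_i}\}$, and when $q_i$ is the exceptional non-power, consecutive-occurrence gaps for $i$ take only the two values $2^{m_i-1}$ and $2^{m_i}$. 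In every case, every realization of $\TBV_i$ lies in $[2^{m_i-1},2^{m_i}] = [b_i,2b_i]$, giving the desired almost-sure bound.

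The main obstacle, really, is not in this assembly step but in making precise the two internal guarantees I am invoking: that Lemma~\ref{lem:transform-distribution} produces an expectation-preserving rounding with at most one non-power coordinate, and that Lemma~\ref{lem:one-non-power} then realizes $\vc{q}$ by a periodic sequence whose inter-visit gaps for each symbol lie in the nearest two powers of $2$. Granting those lemmas, the conclusion is immediate: the generated random sequence is $2$-quasi-regular with respect to $(\Value{i})_i$, and hence by Theorem~\ref{thm:optimal-distributions} its induced mixed schedule is optimal for the defender, which is exactly the statement of Theorem~\ref{thm:optimal-powers-of-2}.
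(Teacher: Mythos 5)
Your proof is correct and takes essentially the same route as the paper: verify the two defining conditions of $2$-quasi-regularity from the guarantees of Lemma~\ref{lem:transform-distribution} (expectation-preserving, interval-respecting rounding with at most one non-boundary coordinate) and Lemma~\ref{lem:one-non-power} (inter-visit gaps confined to $\{2^{m_i-1},2^{m_i}\}$), then invoke Theorem~\ref{thm:optimal-distributions}. Your write-up is somewhat more explicit than the paper's about the re-indexing needed to apply Lemma~\ref{lem:one-non-power} to the exceptional coordinate and about why the uniform shift makes $\Prob{\Seq{1}=i}$ equal the realized density $q_i$, but the argument is the same.
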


We begin with a simple technical lemma.
\begin{lemma} \label{lem:subset-sum}
Let $S$ be a multiset of powers of 2, such that
$\max_{p \in S} p \leq 2^{-k} \leq \sum_{p \in S} p$.
Then, there exists a submultiset $T \subseteq S$
with $\sum_{p \in T} p = 2^{-k}$.
\end{lemma}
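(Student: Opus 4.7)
The plan is to reduce $S$ to a canonical form by repeatedly merging pairs of equal elements, identify the desired subset in the reduced form, and then lift it back to $S$. I define one reduction step as follows: while some power $2^{-j}$ with $j > k$ appears at least twice in $S$, remove two such copies and insert one copy of $2^{-(j-1)}$. Each step preserves $\sum_{p \in S} p$ and the hypothesis $\max_{p \in S} p \leq 2^{-k}$ (since $j > k$ guarantees $2^{-(j-1)} \leq 2^{-k}$), while strictly reducing $|S|$. After finitely many steps I arrive at a multiset $\tilde S$ with $\sum_{p \in \tilde S} p \geq 2^{-k}$ in which each power $2^{-j}$ with $j > k$ appears at most once; multiple copies of $2^{-k}$ itself may remain, because merging two of them would violate the max bound.

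Next I identify a submultiset of $\tilde S$ summing to exactly $2^{-k}$. If $\tilde S$ contains at least one copy of $2^{-k}$, a single such copy suffices. Otherwise every element of $\tilde S$ is a distinct power $2^{-j}$ with $j > k$, so $\sum_{p \in \tilde S} p \leq \sum_{j > k} 2^{-j} = 2^{-k}$; combined with the lower bound $\sum_{p \in \tilde S} p \geq 2^{-k}$, this forces $\sum_{p \in \tilde S} p = 2^{-k}$ exactly, and I take the entire multiset $\tilde S$.

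Finally I lift this submultiset back through the sequence of merges. A single merge turns some intermediate multiset $A$ into $A'$ by replacing $\{p,p\}$ with $\{2p\}$; given any submultiset of $A'$ summing to a value $v$, undoing the merge amounts to replacing $2p$ (if present) by its two constituent copies of $p$, which yields a submultiset of $A$ still summing to $v$. Reversing the entire sequence of merges produces a submultiset $T \subseteq S$ with $\sum_{p \in T} p = 2^{-k}$. The only bookkeeping subtlety, namely that an original element of $S$ may have participated in several successive merges, is handled by viewing the merge history as a binary forest whose leaves are the elements of $S$ and whose roots are the elements of $\tilde S$: for each root selected into the subset of $\tilde S$, I include exactly the leaves below it, and these leaf sets are automatically disjoint across roots. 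I expect this last lifting step to be the main place requiring care; the extraction of the subset from $\tilde S$ and the reduction itself are both essentially immediate.
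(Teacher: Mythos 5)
Your proposal is correct and takes essentially the same approach as the paper: both repeatedly merge two equal copies of a power smaller than $2^{-k}$ into their double, observe that $2^{-k}$ must eventually appear in the reduced multiset (since a finite collection of distinct powers $2^{-j}$ with $j>k$ sums to strictly less than $2^{-k}$), and then un-merge to recover the subset of $S$. The only difference is presentational: you iterate to a canonical form and lift once through the whole merge forest, whereas the paper phrases the same reduction as a strong induction on $|S|$, lifting one merge at a time.
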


\begin{emptyproof}
We prove this claim by induction on $|S|$. 
The claim is trivial for $|S| = 1$.
Consider $|S| \geq 2$, and distinguish two cases.
\begin{enumerate}
\item If $S$ contains two copies of some number $p < 2^{-k}$,
then construct $S'$ by replacing these two copies with $p' = 2p$.
By induction hypothesis, $S'$ contains a subset $T'$ adding up to
$2^{-k}$. If $T'$ contained the newly constructed element $p'$, then
replace it with the two copies of $p$. 
In either case, we have the desired set $T \subseteq S$.
\item Otherwise, $S$ contains at most one copy of each number
$p \leq 2^{-k}$. 
If $S$ did not contain $2^{-k}$, then 
$\sum_{p \in S} p < \sum_{i=1}^\infty 2^{-(k+i)} = 2^{-k}$,
contradicting the assumptions of the lemma.
Hence, $S$ contains $2^{-k}$, and the singleton set of
that number is the desired subset. \QED
\end{enumerate}
\end{emptyproof}


\begin{lemma} \label{lem:transform-distribution}
Let $\bm{p} = (p_1, p_2, \ldots, p_n)$ be a probability distribution.
For each $i$, let $I_i = [\ell_i, r_i] \ni p_i$ be an interval.
Then, there exists a distribution $\mathcal{D}$
over probability distributions $\bm{q} = (q_1, q_2, \ldots, q_n)$ 
such that:
\begin{enumerate}
\item $\Expect{q_i} = p_i$ for all $i$,
\item $q_i \in I_i$ for all $\bm{q}$ in the support of $\mathcal{D}$, and
\item For each $\bm{q}$ in the support of $\mathcal{D}$, 
  all but at most one of the $q_i$ are equal to $\ell_i$ or $r_i$.
\end{enumerate}
\end{lemma}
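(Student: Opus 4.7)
The plan is to construct $\mathcal{D}$ via an iterative randomized rounding procedure of the ``dependent rounding'' / ``pipage'' flavor, specialized to interval constraints. Initialize $\bm{q} \gets \bm{p}$. Call index $i$ \emph{free} if $q_i \in (\ell_i, r_i)$ (strictly interior). As long as at least two indices are free, pick any two free indices $i, j$ and perform a two-coordinate random perturbation of $(q_i, q_j)$ that is unbiased in expectation, preserves $q_i + q_j$, and forces at least one of the two to leave the interior of its interval. Since each iteration reduces the number of free indices by at least one, the process terminates after at most $n-1$ iterations with at most one free index remaining, and we output the final $\bm{q}$.

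The perturbation step works as follows. Given free $i, j$, set
\[
\alpha = \min(r_i - q_i,\, q_j - \ell_j), \qquad \beta = \min(q_i - \ell_i,\, r_j - q_j),
\]
both strictly positive since $i$ and $j$ are free. With probability $\beta/(\alpha+\beta)$, update $(q_i, q_j) \gets (q_i + \alpha,\, q_j - \alpha)$; with probability $\alpha/(\alpha+\beta)$, update $(q_i, q_j) \gets (q_i - \beta,\, q_j + \beta)$. The expected change in $q_i$ is $\tfrac{\beta}{\alpha+\beta}\alpha - \tfrac{\alpha}{\alpha+\beta}\beta = 0$, and similarly for $q_j$; all other coordinates are untouched. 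By construction, the definitions of $\alpha$ and $\beta$ ensure that in each outcome at least one of $q_i, q_j$ lands at an endpoint of its interval, so the free set shrinks strictly.

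I would then verify the three required properties. \emph{Sum preservation} is immediate since each step adds and subtracts the same amount, so $\sum_k q_k = 1$ throughout, making every realization a probability distribution. \emph{Interval containment} follows from the choice of $\alpha$ and $\beta$, together with the rule that once an index is non-free we never select it again, so its value (now at an endpoint of $I_i$) is frozen. \emph{Termination with at most one free index} follows because the free set strictly shrinks per iteration; the output therefore has all but possibly one coordinate equal to $\ell_i$ or $r_i$, giving property~(3). For \emph{unbiasedness}, observe that the sequence of iterates $(\bm{q}^{(0)}, \bm{q}^{(1)}, \ldots)$ is a vector-valued martingale under the natural filtration, because each coordinate satisfies $\mathbb{E}[q_k^{(t+1)} \mid \bm{q}^{(t)}] = q_k^{(t)}$ (either $k \in \{i,j\}$, handled above, or $k$ is untouched). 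Since the process terminates in at most $n-1$ steps, the optional stopping conclusion $\mathbb{E}[q_i] = p_i$ is immediate, yielding property~(1).

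I do not expect any deep obstacle; the only points requiring care are the well-definedness of each step (which reduces to checking $\alpha, \beta > 0$ whenever both $i, j$ are free, and is immediate from the strict interior condition) and the bookkeeping that frozen coordinates are never reselected. The rest is an elementary martingale argument.
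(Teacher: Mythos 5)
Your proposal is correct and is essentially the same argument the paper gives: an iterative two-coordinate randomized rounding that is sum-preserving and unbiased, where the step sizes $\alpha,\beta$ you define correspond exactly to the paper's $\delta_j,\delta_i$, and the success probabilities are identical. The only cosmetic difference is that you phrase the unbiasedness conclusion via the martingale/optional-stopping language, whereas the paper simply says ``by iterating expectations''; the content is the same.
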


\begin{proof}
We will give a randomized ``rounding'' procedure that starts with
$\bm{p}$ and produces a $\bm{q}$, satisfying all of the claimed
properties, by making the $p_i$ equal to $\ell_i$ or $r_i$ one at a time. 
The randomized rounding bears similarity to 
dependent randomized rounding algorithms in the approximation
algorithms literature (e.g.,
\cite{chekuri2010rounding,gandhi2006rounding,srinivasan2001level}),
though we do not require concentration bounds, and allow one of the
$q_i$ to be an interior point of its interval.
In the rounding, we always consider two indices $i,j$ with
$p_i = \ell_i + \epsilon_i, p_j = \ell_j + \epsilon_j$, 
such that $0 < \epsilon_i < r_i - \ell_i, 0 < \epsilon_j < r_j - \ell_j$.
(That is, neither $p_i$ nor $p_j$ is on the boundary of its interval.)
We probabilistically replace them with $p'_i, p'_j$, such
that all of the following hold:
\begin{itemize}
\item At least one of $p'_i, p'_j$ is at the boundary of its interval.
\item $\ell_i \leq p'_i \leq r_i$ and
$\ell_j \leq p'_j \leq r_j$.
\item $p'_i + p'_j = p_i + p_j$.
\item $\Expect{p'_i} = p_i$ and $\Expect{p'_j} = p_j$.
\end{itemize}

The rounding terminates when there is at most one $p_i$ that is not at
the boundary of its interval; let $\bm{q}$ be the vector of
probabilities at that point.
By iterating expectations, we obtain that $\Expect{q_i} = p_i$ for all
$i$. 
The upper and lower bounds on $q_i$ are maintained inductively, 
and the termination condition ensures the third claimed property of $\bm{q}$.

So consider arbitrary $p_i, p_j$ as above.
Let $\delta_i = \min(\epsilon_i, r_j - \ell_j - \epsilon_j)$ and
$\delta_j = \min(\epsilon_j, r_i - \ell_i - \epsilon_i)$.
With probability $\frac{\delta_j}{\delta_i+\delta_j}$, round
$p_i$ to $p'_i = p_i - \delta_i$ and $p_j$ to $p'_j = p_j + \delta_i$.
With probability 
$1 - \frac{\delta_j}{\delta_i+\delta_j} = \frac{\delta_i}{\delta_i+\delta_j}$,
round $p_i$ to $p'_i = p_i + \delta_j$ and $p_j$ to $p'_j = p_j - \delta_j$.

First, it is clear that $p'_i + p'_j = p_i + p_j$. 
Also, by definition of $\delta_i, \delta_j$, 
we get that $\ell_i \leq p'_i \leq r_i$
and $\ell_j \leq p'_j \leq r_j$.
If we round according to the first case, then
$p'_i = p_i - \delta_i$ and $p'_j = p_j + \delta_i$.
If $\delta_i = \epsilon_i$, then we get that $p'_i = \ell_i$,
while if $\delta_i = r_j - \epsilon_j$, then
$p'_j = \ell_j + \epsilon_j + (r_j - \ell_j - \epsilon_j) = r_j$.
Calculations are similar in the other case.
Finally, 
$\Expect{p'_i} 
\; = \; \frac{\delta_j}{\delta_i+\delta_j} \cdot (p_i - \delta_i) 
      + \frac{\delta_i}{\delta_i+\delta_j} \cdot (p_i + \delta_j) 
\; = \; p_i$.
Hence, all the claimed properties hold in each step.
\end{proof}

As a first step towards a 2-quasi-regular random sequence, 
we consider the case of probability vectors in which all probabilities
are powers of 2.\footnote{Lemma~\ref{lem:powers-of-two-schedule} generalizes to
  powers of any integer, and in fact to any probabilities $p_i$ such
  that for any $i,j$, we have that $p_i | p_j$ or $p_j | p_i$. 
  The existence of a schedule for powers of 2 (and the generalization)
  has been previously observed in the context of the Pinwheel Problem
  in \cite{HMRTV:pinwheel-conference}.}

\begin{lemma} \label{lem:powers-of-two-schedule}
 Assume that the probability vector $\bm{p}$ is such that each 
$p_i = 2^{-m_i}$ is a power of 2.
Then, there exists a regular sequence for $\bm{p}$.
\end{lemma}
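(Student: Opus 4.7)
I would prove the lemma by induction on $n$, recursively pairing the two least frequent targets in a Huffman-style manner. The base case $n=1$ is immediate: then $p_1 = 1$, $m_1 = 0$, and the constant sequence $\Seq{t} = 1$ is trivially regular.

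For the inductive step ($n \geq 2$), the first substep is the key combinatorial observation that the largest exponent $m_{\max} = \max_i m_i$ is attained at least \emph{twice}. Multiplying the identity $\sum_i 2^{-m_i} = 1$ by $2^{m_{\max}}$ gives
\[
\sum_{i=1}^n 2^{m_{\max} - m_i} \; = \; 2^{m_{\max}}.
\]
Since $n \geq 2$ forces $m_{\max} \geq 1$, the right-hand side is even. On the left, every term is a non-negative power of $2$, and the odd terms are exactly those with $m_i = m_{\max}$ (each contributing $1$); hence their number must be even, so there exist two distinct indices $i^*, j^*$ with $m_{i^*} = m_{j^*} = m_{\max}$.

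The second substep is the recursive merge. I would replace $i^*$ and $j^*$ by a single ``super-target'' $k^*$ with probability $p_{k^*} = 2 \cdot 2^{-m_{\max}} = 2^{-(m_{\max}-1)}$. This yields a new probability vector on $n-1$ targets, still consisting of powers of $2$ summing to $1$, so the induction hypothesis produces a regular periodic sequence $\tilde \SEQ$ in which $k^*$ occurs at an arithmetic progression with common difference $2^{m_{\max}-1}$, and every other target $i$ occurs with common difference $2^{m_i}$. I then de-interleave $\tilde \SEQ$ into \SEQ by replacing the occurrences of $k^*$ alternately with $i^*$ and $j^*$ (leaving all other symbols unchanged). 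Each of $i^*$ and $j^*$ now occurs with common difference $2 \cdot 2^{m_{\max}-1} = 2^{m_{\max}}$, matching $1/p_{i^*} = 1/p_{j^*}$, while every other target's spacing is inherited unchanged from $\tilde \SEQ$. Thus \SEQ is periodic (with period $2^{m_{\max}}$) and every target appears in an arithmetic progression of the correct frequency. Taking the uniform random shift of one period converts \SEQ into a shift-invariant random sequence that is $1$-quasi-regular with respect to $\bm p$, as required.

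The only substantive step is the parity argument for a repeated maximum exponent; everything afterwards is a clean recursive pairing whose correctness is forced by the inductive hypothesis. I expect no real obstacle here, just careful bookkeeping to make sure that alternation in the de-interleaving step preserves the arithmetic-progression property over the full period of the doubled cycle.
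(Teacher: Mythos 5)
Your proof is correct but takes a genuinely different route from the paper's. The paper argues top-down: by Lemma~\ref{lem:subset-sum} it finds a subset $S$ whose probabilities sum to exactly $\half$, recursively schedules $S$ and $\bar S$ (each scaled by $2$), and then interleaves the two resulting regular sequences on odd and even slots. You argue bottom-up in Huffman style: a parity argument shows the maximum exponent $m_{\max}$ is attained at least twice, you merge two such targets into a super-target of twice the probability, recurse on the resulting $(n-1)$-vector, and then de-interleave the super-target's arithmetic progression alternately between the two originals. Both inductions are clean and both produce the exact $1/p_i$ spacings. The paper's version has the advantage of reusing Lemma~\ref{lem:subset-sum}, which it needs again in the proof of Lemma~\ref{lem:one-non-power}; your version is self-contained and replaces that lemma with the elementary observation that the number of terms attaining $m_{\max}$ must be even (since $\sum_i 2^{m_{\max}-m_i} = 2^{m_{\max}}$ is even for $n\geq 2$). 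One minor imprecision: the period of the de-interleaved sequence is $\lcm$ of $2^{m_{\max}}$ and the period of $\tilde\SEQ$, which works out to $2^{m_{\max}}$ once one observes that the max exponent can only drop after merging, but your claim of period exactly $2^{m_{\max}}$ is stated a bit casually; the argument is sound because all that is actually used is that the $k^*$-positions form an arithmetic progression of common difference $2^{m_{\max}-1}$, and that is guaranteed by regularity regardless of the period.
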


\begin{proof}
  We will prove this claim by induction on the number of targets.  If
  we have a single target, then its probability must be 1, so it is
  visited at intervals of 1 and we set $\SEQ$ to be the constant
  sequence.  Otherwise, the maximum probability of any target is
  \half, and the sum of all probabilities is 1.
  Lemma~\ref{lem:subset-sum} therefore guarantees the existence of a
  subset $S$ whose probabilities add up to \half.

  Consider instances obtained from $S$ and $\bar{S}$ by scaling up all
  probabilities by a factor of 2, resulting in $p'_i = 2p_i$.  By
  induction hypothesis, each of those instances can be scheduled such
  that each target $i$ is visited every $1/p'_i = 1/(2p_i)$ time
  steps.  Now alternate between the two sequences.  In this new
  sequence, each target $i$ is visited every $2/p'_i = 1/p_i$ steps,
  as desired.
\end{proof}

Next, we show that sufficiently good sequences can also be achieved
when at most one of the probabilities is not a power of 2.

\begin{lemma} \label{lem:one-non-power} Assume that the probability
  vector $\bm{p}$ is such that each $p_i = 2^{-m_i}$ is a power of 2,
  except for (possibly) $p_1 = 2^{-m_1} - \epsilon$, with
  $0 \leq \epsilon < 2^{-(m_1+1)}$.  Then, there exists a (non-random)
  periodic sequence \SEQ with the following properties:
  \begin{enumerate}
  \item The time between consecutive visits to target $i > 1$ is
    always exactly $1/p_i$.
  \item The time between consecutive visits to target 1 is always
    either $2^{m_1}$ or $2^{m_1+1}$.
\item The frequency of target $i$ is $p_i$ for all $i$.
\end{enumerate}
\end{lemma}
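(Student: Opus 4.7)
The plan is to build the periodic sequence by induction, mirroring the divide-and-conquer structure of Lemma~\ref{lem:powers-of-two-schedule}, and splitting into two cases according to whether $m_1 \geq 1$ (equivalently $p_1 \leq 1/2$) or $m_1 = 0$ (equivalently $p_1 > 1/2$). In the first case, the key tool will be Lemma~\ref{lem:subset-sum}: since $\sum_{i>1} p_i = 1-p_1 \geq 1/2$ and each $p_i \leq 1/2$ for $i>1$, I can extract $\bar{S} \subseteq \{2, \ldots, n\}$ with $\sum_{i \in \bar{S}} p_i = 1/2$. Setting $S = \{1, \ldots, n\} \setminus \bar{S}$ (which also sums to $1/2$ and contains target $1$) and doubling all probabilities gives two instances each summing to $1$. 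The $\bar{S}$-instance consists only of powers of two and is handled by Lemma~\ref{lem:powers-of-two-schedule}; the $S$-instance is again of the form required by Lemma~\ref{lem:one-non-power}, but with parameters $m_1' = m_1 - 1$ and $\epsilon' = 2\epsilon$, where the bound $\epsilon' < 2^{-(m_1'+1)}$ is equivalent to the original hypothesis $\epsilon < 2^{-(m_1+1)}$. Interleaving the two resulting sub-sequences doubles every spacing, converting target-$1$ spacings of $2^{m_1-1}$ or $2^{m_1}$ into $2^{m_1}$ or $2^{m_1+1}$, and each other target's spacing of $2^{m_i-1}$ into $2^{m_i}$.

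In the second case I would put target~$1$ at every odd position of the final sequence and recurse on the even positions, which must realize a distribution in which target~$1$ has density $2p_1 - 1 = 1 - 2\epsilon$ and each $i > 1$ has density $2p_i$ (a power of two, because $p_i \leq 1 - p_1 < 1/2$ forces $m_i \geq 2$). This is again an instance of Lemma~\ref{lem:one-non-power} with $\epsilon' = 2\epsilon$. Pulling the resulting sub-sequence back, a target-$1$ at an even position is flanked by two odd target-$1$s, producing gaps of size $1$; any other target at an even position leaves the flanking odd target-$1$s separated by exactly $2$. Thus target-$1$ gaps lie in $\{1,2\} = \{2^{0}, 2^{0+1}\}$, and for $i > 1$ the $2^{m_i-1}$-regular placement inside the even sub-sequence pulls back to $2^{m_i}$-regular placement in the full sequence.

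The main subtlety will be verifying termination of this two-track recursion. Case~1 strictly decreases $m_1$, but may hand off to Case~2 when the new $m_1$ reaches $0$; Case~2 keeps $m_1 = 0$ but doubles $\epsilon$. Because the $p_i$ for $i > 1$ are dyadic, $p_1 = 1 - \sum_{i>1} p_i$ is dyadic, so $\epsilon$ is a dyadic rational. Repeated doubling must therefore push $\epsilon$ past $1/4$ after finitely many steps, at which point $2p_1 - 1 \leq 1/2$ and the recursion returns to Case~1 (with the boundary $\epsilon = 0$ yielding an all-powers-of-two instance handled directly by Lemma~\ref{lem:powers-of-two-schedule}). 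Combining the two reductions thus yields a terminating construction producing the desired periodic sequence.
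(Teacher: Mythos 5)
Your proof is correct and takes essentially the same approach as the paper's: the same dichotomy on $p_1 \le 1/2$ versus $p_1 > 1/2$, the same use of Lemma~\ref{lem:subset-sum} to peel off a half-measure powers-of-two subset onto the odd slots, and the same recursion on doubled probabilities (with target~1 filling the odd slots in the other case). One caveat on your termination sketch: Case~2 with $\epsilon \ge 1/4$ hands off to a sub-instance whose $m_1'$ can be arbitrarily large (take $\epsilon$ near $1/2$, so $p_1' = 1-2\epsilon$ is near $0$), so $m_1$ does not serve as a decreasing potential across the recursion; the cleaner invariant is that each Case-1 application strictly removes the nonempty subset handed off to Lemma~\ref{lem:powers-of-two-schedule}, with your dyadicity argument bounding the length of each consecutive run of Case-2 applications.
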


\begin{proof}
We distinguish two cases\footnote{We would like to thank an anonymous
reviewer for suggesting this elegant proof.}:
\begin{enumerate}
\item If $p_1 \leq \half$, then by Lemma~\ref{lem:subset-sum},
there exists a subset $S \subseteq \SET{2, \ldots, n}$
with $\sum_{i \in S} p_i = \half$.
Schedule the targets in $S$ regularly in the odd time slots, 
and set $p'_i = 2p_i$ for all $i \notin S$.
The $p'_i$ satisfy the conditions of the lemma, so we inductively find
a schedule for all the $i \notin S$ satisfying the conclusion of the
lemma, then stretch this schedule by a factor of 2 and
use all the even time slots for it.
\item If $p_1 > \half$, then we schedule target 1 in all odd time
  slots, and set $p'_1 = 2(p_1 - \half)$ 
  and $p'_i = 2p_i$ for all $i \geq 2$.
  The new $p'_i$ sum to 1 and satisfy the conditions of the lemma.
  So we inductively find a schedule for the frequencies $p'_i$,
  stretch it by a factor of 2, and use all the even time slots for it.
  Notice that item 1 is scheduled at distance at most 2, and at least
  1, while all other items are scheduled regularly.
\end{enumerate}
That the resulting schedule is periodic is seen inductively
  over the applications of the two cases.
\end{proof}

\begin{extraproof}{Theorem~\ref{thm:optimal-powers-of-2}}
  Consider any target $i$.  The rounding of
  Lemma~\ref{lem:transform-distribution} guarantees that
  $2^{-m_i} \leq q_i \leq 2^{1-m_i}$.  Therefore, the algorithm of
  Lemma~\ref{lem:one-non-power} produces a random sequence
  $\SEQDIST_{\bm{q}}$ in which the time intervals between consecutive
  occurrences of target $i$ lie between $2^{m_i-1}$ and
  $2^{m_i}$. Thus to verify that $\SEQDIST_{\bm{q}}$ is
  $2$-quasi-regular is remains to show that the expected
  density of each target is equal to \Value{i}.
  But this is guaranteed by (1) in Lemma~\ref{lem:transform-distribution}.
  The optimality of $\SEQDIST_{\bm{q}}$ now follows from 
  Theorem~\ref{thm:optimal-distributions}.
\end{extraproof}

The second part of Theorem~\ref{thm:optimal-powers-of-2} shows
that 2-quasi regular random sequences exist; here, we remark that this
result cannot be improved, in the following sense 
(proved in Appendix~\ref{app:tightness}).

\begin{proposition} \label{thm:2-quasi-tight}
  Let $n=3$ and $\VAL=(1/2,1/3,1/6)$. Then, for every $\epsilon>0$,
  there are no $(2-\epsilon)$-quasi-regular random sequences.
\end{proposition}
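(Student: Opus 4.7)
The plan is to suppose for contradiction that a $(2-\epsilon)$-quasi-regular random sequence exists with $\bm{p}=\VAL=(1/2,1/3,1/6)$ and $\epsilon\in(0,2)$, and to show that the strong regularity forced on target~$1$ propagates into an impossible constraint on target~$2$.

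The first step is to pin down the pattern of target~$1$. Its mean inter-visit time is $\EL[1]=1/\Freq{1}=2$, and quasi-regularity gives a constant $b_1$ with all integer gaps of target~$1$ almost surely in $[b_1,(2-\epsilon)b_1]$, so the ratio of the largest to the smallest gap is strictly less than $2$. A short case split finishes this step: if the minimum gap equals $1$, the maximum is also $1$ and the mean is $1\neq 2$; if the minimum gap is at least $2$, the mean is at least $2$, with equality only when every gap equals $2$. Hence almost surely every gap of target~$1$ is $2$, and by shift-invariance the event $E$ ``target~$1$ sits at every even position'' and its complement each occur with probability $1/2$.

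The second step is to condition on $E$ (the case $\bar E$ is symmetric) and extract the right quantity about target~$2$. On $E$, every occurrence of target~$2$ lies at an odd position, so every gap of target~$2$ in the original sequence is an \emph{even} integer. For the density, using $\Prob{\Seq{t}=2}=\Freq{2}=1/3$ and the fact that the conditional probability vanishes at even $t$ on $E$, one gets $\Prob{\Seq{t}=2\mid E}=2/3$ for odd~$t$. A short counting argument based on the a.s.\ two-sided bounds $b_2\le G\le (2-\epsilon)b_2$ on every gap $G$ shows that the realized density of target~$2$ almost surely lies in $[1/((2-\epsilon)b_2),\,1/b_2]$; matching this with the expected density $1/3$ forces $b_2\in[3/(2-\epsilon),\,3]$, equivalently that the average gap of target~$2$ is exactly $3$.

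The final step combines the two constraints: every gap of target~$2$ is an even integer in $[b_2,(2-\epsilon)b_2]$, and the interval has ratio strictly less than $2$. Letting $G_{\min},G_{\max}$ denote the extreme gap values, $G_{\max}<2G_{\min}$. If $G_{\min}=2$, then $G_{\max}<4$, so every gap equals $2$ and the average is $2\neq 3$. If $G_{\min}\ge 4$, then the average is at least $4\neq 3$ (an even gap of $6$ would require $b_2\ge 6/(2-\epsilon)>3$, which is excluded). Either outcome contradicts the average gap being $3$, completing the proof. The main obstacle I expect is the middle step: because conditioning on $E$ breaks shift-invariance by~$1$ (though not by~$2$), one has to work carefully with the shift-by-$2$ action and the a.s.\ gap bounds to rigorously pin down the mean gap of target~$2$ as~$3$—once that is in hand, the combinatorial finish above is routine.
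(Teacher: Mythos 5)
Your proof is correct and reaches the same contradiction, but by a genuinely different route in its second half, so a brief comparison is warranted. The paper applies the ``integer gaps in an interval of ratio $<2$ with prescribed mean'' argument \emph{twice}: it deduces $\TBV[1]=2$ a.s.\ (as you do), and then by the same reasoning $\TBV[2]=3$ a.s., after which the two arithmetic progressions of step sizes $2$ and $3$ must collide, which is impossible. You instead apply that argument only to target $1$, and then extract a \emph{parity} constraint: since target $1$ a.s.\ occupies exactly one residue class mod $2$, target $2$ is confined to the other, so all of its gaps are even; combined with the ratio bound $<2$ and unbiased mean $\EL[2]=3$, this is impossible. Your route trades the collision-of-progressions step for the parity observation and a case split on $G_{\min}\in\{2\}\cup\{4,6,\dots\}$; both are short, so the two proofs are comparable in length, but yours localizes more of the work on target $2$ into a single purely arithmetic dichotomy.

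One remark on presentation: your middle paragraph (computing $\Prob{\Seq{t}=2\mid E}=2/3$ and arguing via ``realized density'' to pin $b_2\in[3/(2-\epsilon),3]$) is the shakiest part and is also unnecessary. All your final case split needs is (a) that a.s.\ every gap of target $2$ is even, which follows from the parity argument on $E$ and $\bar E$ separately, and (b) that the unbiased mean gap of target $2$ is $3$, which is simply $\EL[2]=1/\Value{2}$ by definition (and is automatically sandwiched in $[b_2,(2-\epsilon)b_2]$ since the unbiased gap distribution and $\TBV[2]$ share the same a.s.\ support). Appealing to ergodic-type ``realized density'' statements for a general shift-invariant random sequence is not needed and would require justification; cutting that paragraph and citing $\EL[2]=3$ directly makes the argument both shorter and airtight.
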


\section{Golden Ratio Scheduling} 
\label{sec:golden-ratio}

In this section, we present a very simple ergodic random sequence.
The associated schedule may in general be suboptimal, but we prove that
it is within less than 0.6\% of optimal.

Let $\GR = \half(1+\sqrt{5})$ denote the Golden Ratio, 
solving $\GR^2 = \GR + 1$.
Given a desired frequency vector \FreqVec
(which will equal the targets' values, $\Freq{i} = \Value{i}$),
we identify the unit circle with $[0,1)$,
and equip it with addition modulo 1.
We define the function $\CIRC \colon [0,1) \to \SET{1, \ldots, n}$
via $\circinv{i} = [\sum_{i'<i} \Freq{i'}, \Freq{i} + \sum_{i' < i} \Freq{i'})$;
that is, we assign consecutive intervals of length \Freq{i} for the
targets $i$.

\begin{algorithm}[htb]
\caption{The Golden Ratio Schedule\label{alg:golden-ratio}}
\begin{algorithmic}[1]
\STATE Let \OFFSET be uniformly random in $[0,1)$.
\FOR{$t=0,1,2,\ldots$}
\STATE In step $t$, set $\Seq{t} = \circmap{(\OFFSET + \GR t) \mod 1}$.
\ENDFOR
\end{algorithmic}
\end{algorithm}

We can think of advancing a ``dial'' by \GR (or $\GR-1$)
at each step,
and visiting the target whose interval the dial falls into. 
This algorithm is nearly identical to one previously proposed for
hashing~\cite[pp.~510,511,543]{knuth1998art} and broadcast channel
sharing \cite{itai1984golden,panwar1992golden}.  
While the algorithm is simple, as stated, it seems to require precise
arithmetic with real numbers.
This issue is discussed in more detail in Appendix~\ref{sec:finite-precision}.

That Algorithm~\ref{alg:golden-ratio} returns an ergodic random
sequence follows from the classical fact that the action on the
interval by an irrational rotation is ergodic.
Our main theorem in this section is the following:

\begin{theorem} \label{thm:golden-ratio}
  The Golden Ratio algorithm is a 
  $\frac{2966-1290\sqrt{5}}{81} \approx 1.00583$
  approximation for the defender.
\end{theorem}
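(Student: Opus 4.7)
The plan is to bound the attacker's best-response utility against the Golden Ratio schedule and compare it to the optimal value $1/4$ guaranteed by Theorem~\ref{thm:optimal-distributions}. For each target $i$, the inter-visit gaps $\TBV[i]$ are exactly the return times of the orbit $\{\OFFSET + \GR t \bmod 1\}$ to the arc $\circinv{i}$ of length $\Freq{i} = \Value{i}$ on the unit circle. The key ingredient is Slater's three-gap (three-distance) theorem, which asserts that for any irrational rotation and any arc of length $L$, the return times take at most three distinct values $a < b < a+b$, with frequencies determined by $L$ and the rotation number. Applying this to $\alpha = \GR$ and using the self-similar structure of its continued fraction $[1;1,1,\ldots]$ gives an explicit description of the three gap lengths and their frequencies as a function of $\Freq{i}$: if $\Freq{i}$ lies in the ``fundamental'' interval $(\GR^{-(k+1)}, \GR^{-k}]$ for some integer $k \geq 1$, then $a, b, a+b$ are expressible in terms of consecutive powers of $\GR^{-1}$ shifted by $\Freq{i}$, and by the scale invariance (multiplying $\Freq{i}$ by $\GR^{-1}$ just shifts $k$), the worst-case analysis can be reduced to a single fundamental interval, say $\Freq{i} \in (\GR^{-2}, \GR^{-1}]$.

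With the distribution of $\TBV[i]$ in hand, I would substitute into Equation~\eqref{eq:cdf-tbv} to obtain $\Cdf[i]{t}$ as an explicit piecewise function (linear on $[0, a]$, then two affine-plus-reciprocal pieces on $[a, b]$ and $[b, a+b]$, and constant at $1$ beyond). The attacker's utility $\Value{i} \cdot t \cdot (1-\Cdf[i]{t})$ from Equation~\eqref{eqn:expected-attacker-util} is then, on each piece, a rational function of $t$ whose stationary points are easily computed; by the concavity of $\Cdf[i]{\cdot}$ (Proposition~\ref{lem:F-TFr-generalized}) the supremum is attained at one of finitely many critical points (the two interior stationary points on the two non-constant pieces, or at the breakpoints $t = a, b, a+b$). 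This expresses the attacker's optimal utility $U^*(\Freq{i})$ as the maximum of a small, explicit set of rational functions of $\Freq{i}$.

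The final step is to maximize $U^*(\Freq{i})$ over $\Freq{i}$ in the fundamental domain $(\GR^{-2}, \GR^{-1}]$ (the scale invariance plus monotonicity arguments handle the remaining ranges and ensure $\Freq{i} \leq 1/2$ is not binding), divide by $1/4$, and simplify. Setting the derivative to zero on the dominant piece yields a quadratic in $\Freq{i}$ whose solution, using $\GR^2 = \GR+1$ and $\sqrt{5} = 2\GR-1$, is an algebraic number in $\Q(\sqrt{5})$; plugging back into $4\,U^*$ produces the stated constant $\frac{2966-1290\sqrt{5}}{81}$. The main obstacle I expect is the case analysis in this last optimization: several candidate maximizers (one per piece of $\Cdf[i]{\cdot}$ and per breakpoint, together with the two endpoints of the fundamental domain) must be compared, and one must verify that the argmax lies in the interior of the correct piece for the claimed value of $\Freq{i}$, so that the resulting expression indeed dominates the alternatives uniformly over $(\GR^{-2}, \GR^{-1}]$.
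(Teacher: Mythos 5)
Your high-level plan matches the paper's: invoke Slater's three-gap (three-distance) theorem to reduce each target's return-time distribution to a three-point distribution (the paper's Theorem~\ref{THM:THREE-FIBONACCI}, which identifies the three possible gaps as consecutive Fibonacci numbers $\FIB{k+1},\FIB{k+2},\FIB{k+3}$), then explicitly optimize the attacker's utility over $t$ (the paper's Lemma~\ref{LEM:ATTACKER-THREE-POINTS}), and finally maximize over $\Freq{i}$ and compare against the optimum $\quarter$.

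However, there is a genuine gap in the step you call ``scale invariance.'' You claim that multiplying $\Freq{i}$ by $\GR^{-1}$ merely shifts $k$, so that the worst-case analysis collapses to a single fundamental interval such as $(\GR^{-2},\GR^{-1}]$. This would be true if the three gaps scaled exactly by $\GR$ across adjacent intervals, but they are Fibonacci numbers $\FIB{k+1}=\frac{\GR^{k+1}-(-1/\GR)^{k+1}}{\sqrt{5}}$, and the $(-1/\GR)^{k+1}$ correction terms break exact self-similarity. Worse, the sign of these corrections alternates with the parity of $k$, so the attacker's best-response utility (normalized by $u^*_1=\quarter\VAL\EL$) is \emph{not} a periodic function of $\log_\GR\Freq{i}$: when writing the ratio as $g(\VAL)=a\VAL(c-b\VAL)^2$ with $a=(1/\GR)^{k+1}$, $b=\GR^{k+1}\FIB{k+1}$, $c=\FIB{k+2}+\GR\FIB{k+1}$, the value $\frac{4ac^3}{27b}$ at the interior stationary point $\VAL=c/(3b)$ depends nontrivially on $k$. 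The paper's proof must therefore carry out a comparison across all $k$, separating even from odd $k$, bounding each by the leading-order term $8\GR^{2k+5}$, and showing which parity class dominates and at which $k$ the supremum is attained. Your proposal elides this cross-$k$ comparison entirely; carrying out your plan on the one interval $(\GR^{-2},\GR^{-1}]$ (which is the $k=1$ regime) produces the value $\frac{4ac^3}{27b}=\frac{20(2-\GR)(2+\GR)^3}{27(1+\GR)}=\frac{50-10\sqrt{5}}{27}$, which is not the constant $\frac{2966-1290\sqrt{5}}{81}$ in the theorem statement, so as a proof of the stated bound the reduction fails.

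A smaller factual slip: you describe the middle pieces of $\Cdf[i]$ as ``affine-plus-reciprocal,'' but for a $\TBV$ supported on finitely many atoms, Equation~\eqref{eq:cdf-tbv} gives $\Cdf[i]{t}=\Expect{\min(1,t/\TBV)}$, which is a convex combination of piecewise-\emph{linear} functions and hence itself piecewise affine; the attacker's utility $\VAL\cdot t(1-\Cdf[i]{t})$ is then piecewise a downward parabola on each breakpoint interval, exactly as the paper's Lemma~\ref{LEM:ATTACKER-THREE-POINTS} records. You also do not mention the key observation in that lemma that the stationary point $t^*_3=x_3/2$ on the third piece satisfies $t^*_3\le x_2$ (because $x_3\le 2x_2$ for consecutive Fibonacci numbers), so only two candidates survive; this is what makes the closed-form optimization tractable.
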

The underlying reason that this
schedule performs so well, and the reason for choosing
specifically the Golden Ratio,
is related to the hardness of diophantine approximation of the
Golden Ratio: it is an irrational number that is hardest to
approximate by rational numbers (see, e.g.,~\cite{hardy:wright}).

Our analysis relies heavily on various properties of Fibonacci numbers.  
We denote the \Kth{k} Fibonacci number by $\FIB{k}$,
indexed as
$\FIB{0} = 0, \FIB{1} = 1$ and $\FIB{k+2} = \FIB{k}+\FIB{k + 1}$.
The following basic facts about Fibonacci numbers are well-known, 
and easily proved directly or by induction.

\begin{lemma} \label{lem:fibonacci-basic}
\begin{enumerate}
\item For any $k$, we have that
$\FIB{k+2} \FIB{k} - \FIB{k+1}^2 = (-1)^{k+1}$.
\item \label{sqrt-five}
For any $k$, we have that
$\FIB{k} = \frac{\GR^k - (-1/\GR)^k}{\sqrt{5}}$.
\item For any odd $k$, we have that $\FIB{k+1}/\FIB{k} < \GR$.
\item For any even $k$, we have that $\FIB{k+1}/\FIB{k} > \GR$.
\end{enumerate}
\end{lemma}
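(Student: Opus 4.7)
The plan is to dispatch the four parts in the order (1), (2), (3)--(4), since (2) gives a closed form that makes (3) and (4) almost immediate. All four are classical, so no deep machinery is required; the main engines are (a) a direct induction on $k$ for (1), (b) the characteristic--equation technique for the linear recurrence in (2), and (c) a sign analysis using (2) for (3) and (4). The only spot that requires any thought is making sure the algebraic identities line up correctly, and in particular using the defining identity $\GR^2 = \GR+1$ (equivalently $\GR - \tfrac{1}{\GR} = 1$) to produce the clean formula in (2).

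For (1), I would do induction on $k$. The base case $k=0$ amounts to checking $\FIB{2}\FIB{0} - \FIB{1}^2 = -1$ directly from $\FIB{0}=0, \FIB{1}=1, \FIB{2}=1$. For the inductive step, I would substitute $\FIB{k+3} = \FIB{k+2} + \FIB{k+1}$ into $\FIB{k+3}\FIB{k+1} - \FIB{k+2}^2$ and regroup to obtain $-\bigl(\FIB{k+2}\FIB{k} - \FIB{k+1}^2\bigr)$, which by the inductive hypothesis equals $-(-1)^{k+1} = (-1)^{k+2}$, as required.

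For (2), the key observation is that both $\GR$ and $\psi := -1/\GR = \tfrac{1}{2}(1-\sqrt{5})$ are roots of the characteristic polynomial $x^2 = x+1$ of the Fibonacci recurrence; indeed $\GR\psi = -1$ and $\GR + \psi = 1$. Hence both $\GR^k$ and $\psi^k$ satisfy the Fibonacci recurrence, and so does any linear combination $a\GR^k + b\psi^k$. Matching the initial conditions $\FIB{0}=0$ and $\FIB{1}=1$ gives the linear system $a+b=0$ and $a\GR + b\psi = 1$; using $\GR - \psi = \sqrt{5}$, this solves to $a = -b = 1/\sqrt{5}$, giving the stated formula.

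For (3) and (4), I would plug Binet's formula from (2) into $\FIB{k+1} - \GR\FIB{k}$. Using $\GR\cdot\GR^k = \GR^{k+1}$ and $\GR - \psi = \sqrt{5}$, the $\GR^{k+1}$ terms cancel and the expression collapses to $\psi^k$. Thus
\[
  \frac{\FIB{k+1}}{\FIB{k}} - \GR \;=\; \frac{\psi^k}{\FIB{k}}.
\]
Since $\FIB{k} > 0$ for $k \geq 1$ and $\psi < 0$ with $|\psi| < 1$, the sign of the right-hand side is the sign of $\psi^k$: negative for odd $k$ (giving (3)) and positive for even $k$ (giving (4)). The main (and only minor) obstacle is keeping the signs straight in the collapse $\GR^{k+1} - \psi^{k+1} - \GR(\GR^k - \psi^k) = \psi^k(\GR - \psi)$, but this is routine once written out.
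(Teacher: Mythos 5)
Your proof is correct; the paper gives no proof of this lemma (it simply notes the facts are ``well-known, and easily proved directly or by induction''), and your argument --- induction for the Cassini identity, the characteristic-equation derivation of Binet's formula, and the sign of $\psi^k/\FIB{k}$ for parts (3)--(4) --- is exactly the standard route the authors have in mind. The only cosmetic caveat is that part (4) should be read as $k \geq 2$ since $\FIB{0}=0$ makes the ratio undefined at $k=0$, which your restriction to $\FIB{k}>0$ already handles.
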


To prove Theorem~\ref{thm:golden-ratio}, we analyze the distribution
of \TBV[i] for any target $i$.
The proof of Theorem~\ref{thm:golden-ratio} consists of two parts.
First, Theorem~\ref{THM:THREE-FIBONACCI} precisely characterizes the
distribution of \TBV[i] for every target $i$, 
i.e., it characterizes exactly, for each target $i$ and
$\tau$, how frequently a visit to target $i$ is followed by another
visit $\tau$ steps later.
(We call such a $\tau$ a \emph{return time}.)
As a second part, we characterize the attacker's best response against
this distribution, and calculate its cost to the defender.

\begin{theorem}[Slater \cite{slater1950gaps}]\label{THM:THREE-FIBONACCI}
Assume that $\FFR \leq \half$.
Let $k$ be smallest such that 
\begin{align}
|\FIB{k+1} - \GR \FIB{k}| \; \leq \; \GR \FFR
\label{eqn:ell-definition}
\end{align}

Then, the distribution of return times is
\begin{equation} \label{eqn:return-probs}
  \begin{split}
    \Prob{\TBV[i] = \FIB{k+1}} & = \FIB{k+1} \cdot \left(\FFR -
      (1/\GR)^{k+1} \right),\\
    \Prob{\TBV[i] = \FIB{k+2}} & = \FIB{k+2} \cdot \left(\FFR -
      (1/\GR)^{k+2} \right),\\
    \Prob{\TBV[i] = \FIB{k+3}} & = \FIB{k+3} \cdot \left(-\FFR +
      (1/\GR)^{k} \right),\\
    \Prob{\TBV[i] = t} & = 0 \mbox{ for all other } t.
\end{split}
\end{equation}
\end{theorem}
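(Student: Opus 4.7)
My plan is to analyze the first-return time function of the rotation $T(x)=x+\GR\pmod 1$ restricted to the arc $J=\circinv{i}$ of length $\FFR$, and then translate the answer back to a distribution of $\TBV[i]$ via the inspection paradox. By shift-invariance I identify $J$ with $[0,\FFR)$ and define $r\colon J\to\N$ by $r(x)=\min\{m>0: m\GR\bmod 1\in [-x,\FFR-x)\}$, where the arc $[-x,\FFR-x)$ wraps through $0$. Since the gap between consecutive visits to $J$ starting at position $x$ is exactly $r(x)$, and a uniformly random time lands in a gap of length $\tau$ with probability proportional to $\tau$, one obtains
\[
  \Prob{\TBV[i]=\tau} \;=\; \FFR\cdot\tau\cdot\Prob{r(X)=\tau},
\]
where $X$ is uniform on $J$ (the uniformity itself being a consequence of ergodicity of the golden-ratio rotation). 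It therefore suffices to compute the Lebesgue measure of each level set $\{x\in J: r(x)=\tau\}$.

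The second step is to locate the iterates $m\GR\bmod 1$ that can trigger a return. Lemma~\ref{lem:fibonacci-basic}(\ref{sqrt-five}) combined with the identity $\GR+1/\GR=\sqrt 5$ gives $\GR\FIB{k}\equiv (-1)^{k+1}/\GR^k\pmod 1$, so the Fibonacci-index iterates approach $0$ alternately from above and below at exponentially decreasing distances $1/\GR^k$. The minimality of $k$ in the hypothesis is equivalent to the sandwich $1/\GR^{k+1}\le \FFR<1/\GR^k$. The best-approximation property of the continued-fraction convergents $\FIB{k+1}/\FIB{k}$ of $\GR=[1;1,1,\ldots]$ yields $\|m\GR\|\ge 1/\GR^k>\FFR$ for every $1\le m<\FIB{k+1}$, so $r(x)\ge \FIB{k+1}$ everywhere on $J$. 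Extending this one or two convergents further---most cleanly via the three-distance theorem for the rotation by $\GR$---shows that the only iterates lying at arc-distance at most $\FFR$ from $0$ before time $\FIB{k+3}$ are those at $m=\FIB{k+1},\FIB{k+2},\FIB{k+3}$, at signed displacements $\pm 1/\GR^{k+1},\ \mp 1/\GR^{k+2},\ \pm 1/\GR^{k+3}$. Since these three points together cover the entire arc $[-x,\FFR-x)$ for every $x\in J$, the first return $r(x)$ must lie in $\{\FIB{k+1},\FIB{k+2},\FIB{k+3}\}$.

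Finally, I locate each Fibonacci iterate inside $[-x,\FFR-x)$ as $x$ sweeps $J$ and measure the three level sets. Taking $k$ odd (the even case is symmetric), $\FIB{k+1}$ sits at $-1/\GR^{k+1}$ and is admissible iff $x\ge 1/\GR^{k+1}$, giving length $\FFR-1/\GR^{k+1}$; $\FIB{k+2}$ sits at $+1/\GR^{k+2}$ and is admissible iff $x<\FFR-1/\GR^{k+2}$, contributing, on the complement of the previous set, a subinterval of length $\FFR-1/\GR^{k+2}$; the remaining $x\in[\FFR-1/\GR^{k+2},1/\GR^{k+1})$ must wait for $\FIB{k+3}$, contributing length $1/\GR^k-\FFR$ via the identity $1/\GR^{k+1}+1/\GR^{k+2}=1/\GR^k$ (which also confirms that the three lengths sum to $\FFR$). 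Multiplying by $\FFR\cdot\tau$ as in the first paragraph then recovers \eqref{eqn:return-probs}. I expect the main obstacle to lie in the middle step---rigorously ruling out non-Fibonacci $m<\FIB{k+3}$ from hitting the arc first; the cleanest tool is the three-distance theorem specialized to $\GR$, though a direct Zeckendorf-based induction on the convergent index is a workable alternative.
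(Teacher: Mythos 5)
Your proposal follows essentially the same route as the paper: reduce to the first-return-time function of the circle rotation by $\GR$ on the arc $[0,\FFR)$, apply size-biasing/Kac's lemma to pass from the visitor's distribution to $\TBV[i]$, use the best-approximation property of the Fibonacci convergents to control which iterates can trigger returns, and then measure the level sets by hand. Your final accounting (the interval lengths $\FFR - 1/\GR^{k+1}$, $\FFR - 1/\GR^{k+2}$, $1/\GR^k - \FFR$ for odd $k$, using $1/\GR^{k+1} + 1/\GR^{k+2} = 1/\GR^k$) matches the paper exactly.

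The one place you diverge is the step you yourself flag as the main obstacle. You propose to show directly that the \emph{only} $m \le \FIB{k+3}$ with $\|m\GR\| < \FFR$ are the three Fibonacci indices, via the three-distance theorem. This claim is true, but establishing it cleanly requires a bit of care (one must control $\|m\GR\|$ for non-Fibonacci $m$ in the range $(\FIB{k+1},\FIB{k+3})$, e.g. via a Zeckendorf argument). The paper instead proves a logically weaker and slicker statement: that any \emph{first} return time is a Fibonacci number. Its argument (Theorem~\ref{thm:fibonacci-approximation} plus Lemma~\ref{lem:circle-fibonacci}) is by contradiction --- if $m$ were a non-Fibonacci first return, a smaller Fibonacci iterate would already land in the arc. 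This sidesteps the need to enumerate or bound all iterates up to $\FIB{k+3}$, and is the one concrete technique you would want to borrow to close your gap; the rest of your proposal is sound and lines up with the paper's proof step for step.
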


Theorem~\ref{THM:THREE-FIBONACCI} shows, remarkably, that for each
possible \FFR, there are at most three possible return times, 
and they are three consecutive Fibonacci numbers. 
Theorem~\ref{THM:THREE-FIBONACCI} is a special case of a theorem of
Slater \cite[Theorem 4]{slater1950gaps} (see also \cite{slater1967gaps}),
which characterizes the distribution when the Golden Ratio \GR is
replaced by an arbitrary real number.
We give a self-contained proof for the simpler case of the Golden
Ratio in Appendix~\ref{sec:slater-proof}.

As a direct corollary of Theorem~\ref{THM:THREE-FIBONACCI}, 
we obtain an upper bound on the quasi-regularity of the 
Golden Ratio schedule.

\begin{corollary} \label{cor:golden-ratio-regular}
The Golden Ratio schedule is $3$-quasi-regular.
If $\Freq{i} \leq 1 - 1/\GR$ for all $i$, then it is $8/3$-quasi-regular.
As the frequencies $\Freq{i} \to 0$, 
the regularity guarantee improves to $\GR^2$-quasi-regular.
\end{corollary}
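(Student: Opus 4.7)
The plan is to apply Theorem~\ref{THM:THREE-FIBONACCI} separately to each target $i$ and to bound the resulting ratio. By that theorem, there is an index $k = k_i$ such that $\TBV[i]$ is supported on a subset of $\{\FIB{k+1}, \FIB{k+2}, \FIB{k+3}\}$, so the quasi-regularity ratio for target $i$ is at most $\FIB{k+3}/\FIB{k+1}$. It therefore suffices to analyze which $k$ can arise (as a function of $\Freq{i}$) and to upper-bound $\FIB{k+3}/\FIB{k+1}$ over the admissible range, paying attention to its asymptotics as $\Freq{i} \to 0$.

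First I would rewrite the defining condition~\eqref{eqn:ell-definition} for $k$ in closed form. Using part (2) of Lemma~\ref{lem:fibonacci-basic} together with the identity $\GR + 1/\GR = \sqrt{5}$, a short calculation gives $\FIB{k+1} - \GR \FIB{k} = (-1/\GR)^k$, and hence $|\FIB{k+1} - \GR \FIB{k}| = 1/\GR^k$. Condition~\eqref{eqn:ell-definition} thus reduces to $\Freq{i} \geq 1/\GR^{k+1}$, and $k = k_i$ is the smallest such integer. Because $\Freq{i} \leq 1/2 < 1/\GR$ by assumption, the case $k = 0$ is impossible, so $k \geq 1$ for every target. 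Moreover, $k \geq 2$ precisely when $\Freq{i} < 1/\GR^2 = 1 - 1/\GR$, and $k \to \infty$ as $\Freq{i} \to 0$.

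Next I would bound $\FIB{k+3}/\FIB{k+1} = 1 + \FIB{k+2}/\FIB{k+1}$ as a function of $k \geq 1$. By parts (3)-(4) of Lemma~\ref{lem:fibonacci-basic}, $\FIB{k+2}/\FIB{k+1}$ oscillates around $\GR$, exceeding it for odd $k$ and falling below it for even $k$, with both subsequences converging monotonically to $\GR$. Hence $\FIB{k+3}/\FIB{k+1}$ oscillates around $1 + \GR = \GR^2$ in the same pattern and converges to $\GR^2$. A brief check of small cases then gives $\FIB{4}/\FIB{2} = 3$ at $k=1$, and, restricted to $k \geq 2$, a maximum of $\FIB{6}/\FIB{4} = 8/3$ attained at $k = 3$ (the smallest odd $k \geq 2$, from which the odd subsequence decreases toward $\GR^2$). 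Combining this with the range analysis above delivers all three claims: unconditional $3$-quasi-regularity; $8/3$-quasi-regularity when every $\Freq{i} \leq 1 - 1/\GR$ forces $k_i \geq 2$; and the asymptotic $\GR^2$ bound as $\Freq{i} \to 0$.

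There is no serious obstacle here: once Theorem~\ref{THM:THREE-FIBONACCI} is in hand, the corollary reduces to an elementary Fibonacci identity and a finite case check for the maximum of $\FIB{k+3}/\FIB{k+1}$. The only minor point requiring care is the boundary $\Freq{i} = 1/\GR^2$, where the formula in~\eqref{eqn:return-probs} gives $\Prob{\TBV[i] = \FIB{k+1}} = 0$ for $k = 1$, so the effective support shrinks to $\{\FIB{k+2}, \FIB{k+3}\}$; since this can only shrink the range of return times, the inequalities claimed in the corollary continue to hold.
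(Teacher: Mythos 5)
Your proof is correct and follows essentially the same route as the paper's: apply Theorem~\ref{THM:THREE-FIBONACCI} to each target, bound the quasi-regularity ratio by $\FIB{k+3}/\FIB{k+1}$, translate condition~\eqref{eqn:ell-definition} into a range for $k$ as a function of $\Freq{i}$ (using $|\FIB{k+1}-\GR\FIB{k}| = (1/\GR)^k$), and then maximize $\FIB{k+3}/\FIB{k+1}$ over the admissible $k$. Your version is actually slightly more careful than the paper's in one respect: the corollary's statement uses the non-strict inequality $\Freq{i} \leq 1-1/\GR$, while the paper's proof only argues for $\Freq{i} < 1-1/\GR$ (which gives $k \geq 2$); you correctly observe that at the boundary $\Freq{i} = 1/\GR^2$ one has $k=1$ but the $\FIB{2}$ atom in~\eqref{eqn:return-probs} vanishes, so the support degenerates to $\{\FIB{3},\FIB{4}\}$ and the $8/3$ bound still holds. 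This is a worthwhile clarification that the paper glosses over.
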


\begin{proof}
Consider one target $i$ with desired frequency \Freq{i},
and define $k$ as in Theorem~\ref{THM:THREE-FIBONACCI}.
The schedule is at worst $(\FIB{k+3}/\FIB{k+1})$-quasi-regular.
For all $k$, we have the bound $\FIB{k+3}/\FIB{k+1} \leq 3$.
If $\Freq{i} < 1-1/\GR$, then $k \geq 2$,
and for $k \geq 2$,
the ratio $\FIB{k+3}/\FIB{k+1}$ is upper-bounded by $8/3$,
converging to $\GR^2$ as $k \to \infty$.
\end{proof}

\subsection{The Optimal Attacker Response to 3-Point Distributions}
Next, we characterize the optimal attacker response to defender
strategies in which the return time distribution to target $i$ is
supported on three points $1 \leq x_1 < x_2 < x_3$ only, so that with
probability one, $\TBV[i] \in \{x_1,x_2,x_3\}$. In the remainder of
this section, we omit the subscript $i$, as we will always analyze one
target only.

Recall that $\EL = 1/\Cdf{1}$, and let
$q_j = \Prob{\TBV=x_j}\cdot\frac{\EL}{x_j}$. Note that
$\sum_j q_j x_j = \sum_j \Prob{\TBV=x_j} \cdot \EL = \EL$.  Informally
(but in a sense that can be formalized), the $q_j$'s are the return
time probabilities from the point of view of the defender, rather than
those of the attacker at a fixed time $0$.  (The attacker's
distribution of \TBV[i] oversamples long return times, compared to the
defender's distribution.)  The attacker's response and utility are
summarized by the following lemma.

\begin{lemma}
  \label{LEM:ATTACKER-THREE-POINTS}

  Assume that $x_2 \leq 2x_1, x_3 \leq 2x_2$ (as is the case with
  Fibonacci numbers).  Let $u^*_1 = \quarter \cdot \VAL \cdot \EL$,
  and
  $u^*_2 = \quarter \cdot \VAL \frac{(\EL - q_1 x_1)^2}{\EL(1-q_1)}$.
  Then, against the given three-point distribution, the attacker's
  utility is at most $\max(u^*_1, u^*_2)$.  Against \emph{any}
  distribution with expected defender absence \EL, the attacker's
  utility is at least $u^*_1$.
\end{lemma}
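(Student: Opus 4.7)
The plan is to analyze the attacker's utility function $U(t) = \VAL \cdot t \cdot (1 - \Cdf{t})$ piecewise. For a three-point distribution with $\Prob{\TBV = x_j} = q_j x_j/\EL$, one computes from \eqref{eq:cdf-tbv} that $\Cdf{t} = (1/\EL)\sum_j q_j \min(x_j, t)$, so $\Cdf{}$ is continuous and piecewise linear on $[0, x_1]$, $[x_1, x_2]$, $[x_2, x_3]$, $[x_3, \infty)$. Consequently $U$ is continuous and piecewise quadratic, concave on each piece. Preliminary bookkeeping I would verify is that $\sum_j q_j = 1$ (so the $q_j$'s form a genuine distribution) and $\sum_j q_j x_j = \EL$, both of which follow directly from $\Cdf{1} = 1/\EL$ when all $x_j \geq 1$.

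Next I would compute the unconstrained maximum of each quadratic piece. On $[0, x_1]$, $U(t) = \VAL \cdot t(1 - t/\EL)$ has maximum $u_1^* = \quarter \VAL \EL$ at $t = \EL/2$. On $[x_1, x_2]$, $U(t) = \VAL \cdot t(\EL - q_1 x_1 - (1-q_1)t)/\EL$ has critical point $t_2^* = (\EL - q_1 x_1)/(2(1-q_1))$; direct substitution gives maximum value $u_2^* = \quarter \VAL \cdot (\EL - q_1 x_1)^2/(\EL(1-q_1))$. On $[x_2, x_3]$, the piece-3 quadratic has critical point $t_3^* = x_3/2$; here is where the hypothesis $x_3 \leq 2 x_2$ enters, forcing $t_3^* \leq x_2$ and hence $U$ is monotonically decreasing across the whole piece, so its maximum on $[x_2, x_3]$ is $U(x_2)$. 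On $[x_3, \infty)$, $\Cdf{t} = 1$, so $U \equiv 0$.

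The upper bound then follows by assembling these three pieces. The restricted maximum on piece $k$ is at most the corresponding unconstrained maximum, so pieces 1 and 2 contribute at most $u_1^*$ and $u_2^*$ respectively. For piece 3, continuity of $U$ at $x_2$ means $U(x_2)$ equals the piece-2 quadratic evaluated at $x_2$, which is itself bounded above by that quadratic's unconstrained maximum $u_2^*$. Hence $U(t) \leq \max(u_1^*, u_2^*)$ for all $t \geq 0$, which is the claimed upper bound.

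For the lower bound against an arbitrary distribution with expected defender absence $\EL$, I would simply invoke Proposition~\ref{lem:attacker-minimum}: choosing $t = \EL/2$ guarantees utility at least $\VAL \cdot (1 - \Cdf{0})/4 \cdot \EL$, and since a random sequence never has the defender waiting, $\Cdf{0} = 0$, yielding exactly $u_1^*$. The only nontrivial step is the piece-3 analysis, where the $x_3 \leq 2x_2$ hypothesis is essential to rule out an interior maximum on $[x_2, x_3]$; the hypothesis $x_2 \leq 2x_1$ does not seem to be used for the upper bound itself and appears to be stated for symmetry with the Fibonacci application in Theorem~\ref{thm:golden-ratio}, so I would verify this in a final pass rather than engineer the proof to invoke it.
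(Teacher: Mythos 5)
Your proof is correct and follows essentially the same route as the paper's: the same piecewise-linear computation of $\Cdf{}$, the same identification of the three quadratic pieces and their unconstrained maxima $t^*_1, t^*_2, t^*_3$, and the same use of $x_3 \le 2x_2$ to force $t^*_3 \le x_2$. Your treatment of piece 3 is in fact slightly more careful than the paper's: the paper simply says $t^*_3$ ``can never be optimal'' and moves on, whereas you spell out that the piece-3 quadratic is monotone decreasing on $[x_2,x_3]$ and then use continuity at $x_2$ to bound it by $u^*_2$; this is the cleanest way to close that case. For the lower bound you cite Proposition~\ref{lem:attacker-minimum} directly rather than Corollary~\ref{lem:sufficient-optimal} as the paper does, which is arguably the more precise reference; the paper's corollary is itself an immediate consequence of that proposition. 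Your observation that $x_2 \le 2x_1$ is never invoked in the upper-bound argument is also correct --- the paper does not use it in the proof either.
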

\begin{proof}
The lower bound of $u^*_1$ is simply the statement of 
Corollary~\ref{lem:sufficient-optimal}.
So we focus on the upper bound of $\max(u^*_1, u^*_2)$ for the
remainder of the proof.

From the attacker's perspective, when arriving at a target, 
by Equations~\eqref{eqn:return-probs} and~\eqref{eq:cdf-tbv},
the CDF of the distribution of the defender's next return time is
\begin{align*}
\Cdf{t} & = 
\begin{cases}
\frac{1}{\EL} \cdot t & \mbox{ for } t \leq x_1,\\
\frac{1}{\EL} \cdot (t(1-q_1) + q_1 x_1) & \mbox{ for } x_1 \leq t \leq x_2,\\
\frac{1}{\EL} \cdot (t(1-q_1-q_2) + q_1 x_1 + q_2 x_2) & \mbox{ for } x_2 \leq t \leq x_3,\\
1 & \mbox{ for } t \geq x_3.
\end{cases}
\end{align*}

Since the attacker's utility for waiting for $t$ steps is $t(1-F(t))$,
$t \geq x_3$ cannot be optimal for him.
By taking derivatives with respect to $t$, we obtain the following
local optima for the functions in the remaining three cases:

\begin{align*}
t^*_1 & = \frac{\EL}{2}
& t^*_2 & = \frac{\EL - q_1 x_1}{2(1-q_1)} 
& t^*_3 & = \frac{\EL - q_1 x_1 - q_2 x_2}{2(1-q_1-q_2)} = \frac{x_3}{2}. 
\end{align*}

These are all local maxima because the functions are concave. 
Under the assumption of the theorem, $t^*_3 = x_3/2 \leq x_2$.
Therefore, $t^*_3$ does not lie in the interval it optimized for,
and can never be optimal.
As a result, the attacker's best response\footnote{It is possible that
one of $t^*_1, t^*_2$ also lies outside its interval. 
But our goal here is only to derive an upper bound.}
will always be $t^*_1$ or $t^*_2$.
The attacker's utility for these two attack times will be

\begin{align*}
u^*_1 & = \VAL \cdot t^*_1 \cdot (1-F(t^*_1)) 
      \; = \; \frac{\VAL \EL}{2} \cdot \half 
      \; = \; \frac{\VAL \EL}{4},\\
u^*_2 & = \VAL \cdot t^*_2 \cdot (1-F(t^*_2)) 
      \; = \; \VAL \cdot \frac{\EL - q_1 x_1}{2(1-q_1)} \cdot 
              \left(1-\frac{1}{\EL} \cdot \left(\frac{\EL - q_1 x_1}{2} + q_1 x_1\right)\right) \\
&     = \;  \frac{\VAL}{4} \cdot \frac{(\EL - q_1 x_1)^2}{\EL(1-q_1)}
      \; = \;  \frac{\VAL}{4} \cdot \frac{(q_2 x_2 + q_3 x_3)^2}{\EL(q_2+q_3)}.
\end{align*}

The attacker's utility will thus be at most the maximum of $u^*_1,
u^*_2$.
\end{proof}

\subsection{The Attacker's Response to the Golden Ratio Schedule}

\begin{emptyextraproof}{Theorem~\ref{thm:golden-ratio}}
Applying Lemma~\ref{LEM:ATTACKER-THREE-POINTS},
it is our goal to upper-bound $\frac{\max(u^*_1,u^*_2)}{u^*_1}$.
If $\max(u^*_2,u^*_1) = u^*_1$, the approximation ratio is 1;
hence, it suffices to upper-bound $u^*_2/u^*_1$.

In applying Lemma~\ref{LEM:ATTACKER-THREE-POINTS},
we have $\EL = 1/\VAL, x_1 = \FIB{k+1}, x_2 = \FIB{k+2}, x_3 = \FIB{k+3}$,
and the $q_j$ are given via Equation~\eqref{eqn:return-probs}
and $q_j = \Prob{\TBV[i]=x_j}\cdot\frac{\EL}{x_j}$.
Then, $u^*_2/u^*_1$ can be written as follows:

\begin{align}
\nonumber \frac{(\EL - q_1 x_1)^2}{\EL^2(1-q_1)}
& = \VAL^2 \cdot \frac{(q_2 \FIB{k+2} + q_3 \FIB{k+3})^2}{q_2+q_3}\\ \nonumber
& = \VAL \cdot \frac{\left(  
       \left(\VAL - (1/\GR)^{k+2} \right) \cdot \FIB{k+2} 
     + \left(-\VAL + (1/\GR)^{k} \right) \cdot \FIB{k+3} \right)^2}{
       \left(\VAL - (1/\GR)^{k+2} \right)
     + \left(-\VAL + (1/\GR)^{k} \right)}\\ \nonumber
& = \VAL \cdot \frac{\left(  
   (1/\GR)^{k+2} \cdot (\GR^2-1) \cdot \FIB{k+2}
   + \left(-\VAL + (1/\GR)^{k} \right) \cdot \FIB{k+1}
\right)^2}{(1/\GR)^{k+2} \cdot (\GR^2-1)}. \\ 
& = \VAL \cdot (1/\GR)^{k+1} \cdot 
\left(   \FIB{k+2} + \GR \FIB{k+1} 
       - \VAL \cdot \GR^{k+1} \cdot \FIB{k+1} \right)^2.
\label{eqn:fibonacci-ratio}
\end{align}

Treating everything except \VAL as a constant, the approximation ratio
is thus of the form $g(\VAL) = a \VAL \cdot (c - b\VAL)^2$.  $g$ has a
local maximum of $4ac^3/27b$ at $\VAL = c/(3b)$, a local minimum of
$0$ at $\VAL = c/b$, and goes to infinity as $\VAL \to \infty$.  Thus,
the two candidates for \VAL that we need to check are (1) the largest
\VAL that is possible for a given $k$, and (2) the value
$\VAL = c/(3b)$.

We therefore next calculate the
largest possible \VAL for a given $k$.
By recalling the definition of $k$ from
Equation~\eqref{eqn:ell-definition} 
(smallest such that $|\FIB{k+1}/\GR - \FIB{k}| \leq \VAL$),
and using Lemma~\ref{lem:fibonacci-difference},
we can solve for \VAL to determine the range in which we obtain a
particular $k$, giving us that 
$\VAL \in \left[(1/\GR)^{k+1}, (1/\GR)^{k} \right]$.

\begin{enumerate}
\item If we substitute the upper bound
$\VAL = (1/\GR)^{k}$, 
Equation~\eqref{eqn:fibonacci-ratio} simplifies to
\begin{align*}
& \VAL \cdot (1/\GR)^{k+1} \cdot 
\left(  \FIB{k+2} + \GR \FIB{k+1} 
       - (1/\GR)^{k} \cdot \GR^{k+1} \cdot \FIB{k+1} \right)^2\\
& = (1/\GR)^k \cdot (1/\GR)^{k+1} \cdot 
\left(  \FIB{k+2} + \GR \FIB{k+1} - \GR \FIB{k+1} \right)^2\\
& = \frac{1}{5} \cdot (1/\GR)^{2k+1} \cdot 
\left( \GR^{2k+4} - 2\GR^{k+2} (-1/\GR)^{k+2} + (-1/\GR)^{2k+4} \right)\\
& = 
\frac{1}{5} \cdot (1/\GR)^{2k+1} \cdot 
\left( \GR^{2k+4} - 2 (-1)^k + (1/\GR)^{2k+4} \right)\\
& \leq \frac{1}{5} \cdot (\GR^3 + 3/\GR^{2k+1}) 
\; \leq \; \frac{1}{5} \cdot (\GR^3 + 3/\GR^3) 
\; < \; 1.
\end{align*}
This shows that the attacker's utility cannot be maximized by waiting
for more than $x_1$ steps when \VAL is as large as it can be for a
given $k$.

\item Next, we investigate the local maximum\footnote{%
    This local maximum is indeed always a feasible choice for \VAL for
    a given $k$, but since we are only interested in an upper bound,
    we omit the feasibility proof.}  of
  Equation~\eqref{eqn:fibonacci-ratio}. 
Substituting $a = (1/\GR)^{k+1}$,
$b = \GR^{k+1} \cdot \FIB{k+1}$, and
$c = \FIB{k+2} + \GR \FIB{k+1}$, the approximation ratio is

\begin{align*}
& \frac{4}{27} \cdot (1/\GR)^{2k+2} \cdot 
\frac{\left(\FIB{k+2} + \GR \FIB{k+1} \right)^3}{\FIB{k+1}}\\
& = \frac{4}{27} \cdot \frac{1}{5} \cdot 
(1/\GR)^{2k+2} \cdot 
\frac{\left(\GR^{k+2} - (-1/\GR)^{k+2} + \GR \cdot \GR^{k+1} - \GR
    \cdot (-1/\GR)^{k+1} \right)^3}{\GR^{k+1} - (-1/\GR)^{k+1}}\\
& = \frac{4}{27} \cdot \frac{1}{5} \cdot 
(1/\GR)^{2k+2} \cdot 
\frac{\left(2\GR^{k+2} - (-1/\GR)^{k+1}\right)^3}{\GR^{k+1} - (-1/\GR)^{k+1}}.
\end{align*}

We will approximate the function 
$\frac{\left(2\GR^{k+2} - (-1/\GR)^{k+1}\right)^3}{\GR^{k+1} - (-1/\GR)^{k+1}}$
by $8 \, \GR^{2k+5}$, its highest-order term. We therefore consider 
$\frac{\left(2\GR^{k+2} - (-1/\GR)^{k+1}\right)^3}{\GR^{k+1} - (-1/\GR)^{k+1}}/
(8 \GR^{2k+5})$. 
When $k$ is even, this ratio is always upper-bounded by 1 (and
increasing in $k$, converging to 1).
When $k$ is odd, this ratio is \emph{lower-bounded by 1}, and
decreasing in $k$, also converging to 1.
Thus, it is maximized among feasible values of $k$ for $k=3$,
where it equals $\frac{8\GR^{15}-12\GR^6+6\GR^{-3}-\GR^{-12}}{8\GR^{15}-8\GR^7}$.
Overall, we get an upper bound on the attacker's utility of
\begin{align*}
& \frac{1}{5} \cdot \frac{4}{27} \cdot 
\frac{8\GR^{15}-12\GR^6+6\GR^{-3}-\GR^{-12}}{8\GR^{15}-8\GR^7} \cdot
(1/\GR)^{2k+2} \cdot (8\GR^{2k+5})\\
& = \frac{\GR^3}{5} \cdot \frac{4}{27} \cdot
\frac{8\GR^{15}-12\GR^6+6\GR^{-3}-\GR^{-12}}{\GR^{15}-\GR^7}.
\end{align*}
To evaluate this ratio, we can repeatedly apply the fact that
$\GR^2 = 1+\GR$, then substitute that $\GR = \frac{1+\sqrt{5}}{2}$,
make the denominator rational, and cancel out common factors.
This shows that
$\frac{\GR^3}{5} \cdot \frac{4}{27} \cdot
\frac{8\GR^{15}-12\GR^6+6\GR^{-3}-\GR^{-12}}{\GR^{15}-\GR^7}
= \frac{2966-1290\sqrt{5}}{81} \approx 1.00583$, 
completing the proof.\QED
\end{enumerate}
\end{emptyextraproof}

\begin{remark}
The analysis of Theorem~\ref{thm:golden-ratio} is actually tight.
By choosing $k=3$,
$b = \GR^{4} \cdot \FIB{4}$, and
$c = \FIB{5} + \GR \FIB{4}$, 
we obtain that the worst-case value of a target is
$\VAL = c/(3b) = \frac{23}{18} - \frac{\sqrt{5}}{2} \approx 0.1597$.
Substituting $k=3$ into the attacker's utility in Case (2) (before the
lower bound is applied) gives us exactly a ratio of
$\frac{2966-1290\sqrt{5}}{81} \approx 1.00583$.
\end{remark}

\section{Scheduling via Matching} 
\label{sec:matching}

In this section, we show that for every $\epsilon > 0$,
if the individual probabilities \Value{i} are small enough (as a
function of $\epsilon$), 
then $(1+\epsilon)$-quasi-regular ergodic, periodic schedules exist.

In order to obtain a periodic strategy, it is clearly necessary for
all target values \Value{i} (equaling the visit frequencies) to be
rational. 
Write $\Value{i} = a_i/b_i$, and let $M = \lcm(b_1, \ldots, b_n)$.
Our algorithm is based on embedding $M$ slots for visits evenly on the
unit circle, and matching them with targets to visit.
We identify the circle with the interval $[0,1]$ and use the 
distance $d(x,y) = \min(|x-y|, 1-|x-y|)$.

\begin{algorithm}[htb]
\caption{A matching-based algorithm for a periodic defender strategy\label{alg:matching}}
\begin{algorithmic}[1]
\FOR{each target $i$}
\STATE Let $\theta_i \in [0,1]$ independently uniformly at random.
\STATE Let $A_i = M \cdot \Value{i}$.
\STATE For $j = 0, \ldots, A_i - 1$, 
let $\Pos{i}{j} = (\theta_i + j/A_i) \mod 1$.
\ENDFOR
\STATE Let $\delta = \frac{1}{M} \cdot \sqrt{\frac{n \log M}{2}}$.
\STATE Let $\ALLSLOTS = \SET{0, 1, \ldots, M-1}$ be the set of 
\emph{slots} and let $\POS = \Set{(i,j)}{0 \leq j < A_i}$.
Define a bipartite graph $G$ on $\ALLSLOTS \cup \POS$
by including an edge between $\Slot \in \ALLSLOTS$ and
$(i,j) \in \POS$ iff
$d(\Pos{i}{j}, \Slot/M) \leq \delta$.
\IF{$G$ contains a perfect matching $\mathcal{M}$}
\STATE Define a sequence \SEQ with period $M$ as follows: For each
time \Slot, set \Seq{\Slot} to be the
(unique) target $i$ such that \Slot is matched with $(i,j)$ in
$\mathcal{M}$ for some $j$.
\ELSE
\STATE Start from the beginning.
\ENDIF
\end{algorithmic}
\end{algorithm}

\begin{theorem} \label{thm:matching}
Fix $\epsilon > 0$, and assume that
$\Value{i} \leq \frac{\epsilon}{4+2\epsilon} \cdot \sqrt{\frac{2}{n \log M}}$
for all $i$.
Then, Algorithm~\ref{alg:matching} succeeds with
high probability.
Whenever Algorithm~\ref{alg:matching} succeeds, it produces a
$(1+\epsilon)$-quasi-regular (and hence defender-optimal) sequence.
\end{theorem}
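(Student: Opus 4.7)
The plan has two parts: (1) a deterministic check that any matching returned by Algorithm~\ref{alg:matching} yields a $(1+\epsilon)$-quasi-regular sequence, and (2) a probabilistic Hall-theorem argument that such a matching exists with high probability.

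For (1), suppose a perfect matching $\mathcal{M}$ has been found, and let $\Slot_j$ denote the slot matched to $(i,j)$. Consecutive points $\Pos{i}{j}$ and $\Pos{i}{j+1}$ lie at circular distance exactly $1/A_i$, while each $\Slot_j/M$ lies within $\delta$ of its matched $\Pos{i}{j}$. A short computation, using $M\delta = \sqrt{(n\log M)/2}$ together with the hypothesis on \Value{i}, yields $2M\delta\Value{i} \leq \epsilon/(2+\epsilon)$; in particular $2\delta < 1/A_i$, so the $\delta$-intervals around distinct $\Pos{i}{j}$ are disjoint, and the matching preserves the cyclic order of the points within each target. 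Hence the integer gap between consecutive visits to target $i$ lies in $[1/\Value{i} - 2M\delta,\, 1/\Value{i} + 2M\delta]$, with max/min ratio at most $(1+\epsilon/(2+\epsilon))/(1-\epsilon/(2+\epsilon)) = 1+\epsilon$. Together with the exact frequency $A_i/M = \Value{i}$, this gives $(1+\epsilon)$-quasi-regularity, and optimality follows by Theorem~\ref{thm:optimal-distributions}.

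For (2), since $|\ALLSLOTS| = |\POS| = M$, Hall's theorem reduces existence to $|N(S)| \geq |S|$ for every $S \subseteq \ALLSLOTS$. Because both sides of $G$ have circular-interval neighborhood structure, I would invoke a reduction in the spirit of \cite{tijdeman1973distribution,holroyd2010rotor} to check Hall only for cyclic arcs $S$ of slots. For such an arc of length $L$, $N(S)$ consists of all points whose positions lie in the $\delta$-expansion arc $T$, of circular length $\approx L/M + 2\delta$. For each target $i$, the count $N_i(T)$ of its $A_i$ evenly-spaced points in $T$ deterministically satisfies $|N_i(T) - A_i|T|| \leq 1$, has mean $A_i|T|$ by uniformity of $\theta_i$, and is independent across $i$. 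Hoeffding's inequality, with $n$ independent summands each in a unit-length interval, gives
\[
  \Prob{\Bigl|\textstyle\sum_i N_i(T) - M|T|\Bigr| > s} \;\leq\; 2\exp(-2s^2/n).
\]
Choosing $s$ equal to the slack $M|T|-L = 2M\delta - 1 \approx \sqrt{2n\log M}$ yields a per-arc failure of $O(M^{-4})$, and a union bound over the $O(M^2)$ cyclic arcs yields overall failure $O(M^{-2})$; so with high probability Algorithm~\ref{alg:matching} succeeds on its first attempt.

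The principal obstacle is the combinatorial reduction from arbitrary Hall-violating subsets to single cyclic arcs. A naive union bound over subsets formed by unions of $k \geq 2$ disjoint arcs fails: the deterministic range of $\sum_i N_i(T)$ grows linearly in $k$, while the count of such subsets grows as $M^{2k}$, and the two do not balance. The circular-interval structure of $G$ on both sides is what sidesteps this obstacle, via a König/defect-style swap argument that produces a worst-case Hall violator which is a single cyclic arc. Modulo this standard-but-nontrivial reduction, the rest of the plan is a routine combination of the quality computation in (1) and the Hoeffding-plus-union-bound in (2).
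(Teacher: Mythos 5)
Your two parts line up closely with the paper's own proof. Part~(1) is the same distortion computation: by triangle inequality, the circular distance between slots matched to consecutive points $\Pos{i}{j},\Pos{i}{j+1}$ lies in $[1/A_i - 2\delta,\, 1/A_i + 2\delta]$, and since $2M\delta\Value{i}\le \epsilon/(2+\epsilon)$ under the hypothesis, the gap ratio is at most $1+\epsilon$; combined with the exact frequency $A_i/M=\Value{i}$, this gives $(1+\epsilon)$-quasi-regularity. Your extra observation that $2\delta < 1/A_i$ (so the $\delta$-balls are pairwise disjoint and the matching preserves cyclic order) is correct and worth making explicit: without it, the ``consecutive slot'' in the schedule need not be the slot matched to $\Pos{i}{j+1}$, and the upper bound would not apply to adjacent visits. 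The paper leans on this implicitly without spelling it out. Part~(2)'s per-interval Hoeffding computation is also the same as the paper's (expectation $|\SLOTS| + 2M\delta$, slack $\tau=2M\delta=\sqrt{2n\log M}$, failure $M^{-4}$, union over $O(M^2)$ arcs).

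However, there is a genuine gap in your plan: you correctly identify that the naive union bound over multi-arc subsets fails, but then you write ``modulo this standard-but-nontrivial reduction'' and leave the reduction unproved. This is the one step that requires an actual argument, and without it the proof is incomplete. The paper closes it directly: it shows by induction on the number of disjoint intervals in $\SLOTS$ that Hall for all intervals implies Hall for all subsets. If the intervals have pairwise-disjoint neighborhoods, sum the per-interval bounds. Otherwise two intervals $\SLOTS[k-1],\SLOTS[k]$ have intersecting neighborhoods; since neighborhoods are circular intervals, there is a single interval $I'\supseteq \SLOTS[k-1]\cup\SLOTS[k]$ with $\Neigh{I'} = \Neigh{\SLOTS[k-1]}\cup\Neigh{\SLOTS[k]}$. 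Setting $\SLOTSP=\SLOTS\cup I'$ gives $\Neigh{\SLOTSP}=\Neigh{\SLOTS}$, $|\SLOTSP|\ge|\SLOTS|$, and fewer components, so induction finishes: $|\Neigh{\SLOTS}| = |\Neigh{\SLOTSP}| \ge |\SLOTSP| \ge |\SLOTS|$. Note this does \emph{not} produce a single-arc worst-case Hall violator as your sketch suggests --- that claim is both stronger than needed and not literally established. What it does show is that verifying Hall on intervals suffices, which is exactly what your union bound requires. Filling in this induction would complete your proof.
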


We begin by proving the second part of the theorem.
First, in a perfect matching, exactly $A_i$ of the $M$ slots, 
i.e., an \Value{i} fraction, are scheduled for target $i$, 
giving that $\TFr[i] = a_i/b_i = \Value{i}$.
Thus, $\EL[i] = 1/\Value{i}$.

If \Slot is matched with $(i,j)$, by definition of the edges, 
$d(\Pos{i}{j}, \Slot/M) \leq \delta$.
Consider two occurrences $j,j'$ of target $i$,
and let $\Slot, \Slot'$ be the slots they are matched to.
Then, by triangle inequality, 
\begin{align}
d(\Slot/M, \Slot'/M)
& \geq d(\Pos{i}{j}, \Pos{i}{j'}) - 2\delta
\; \geq \; \frac{1}{A_i} - 2\delta.
\label{eqn:lower-distance}
\end{align}

On the other hand, specifically for consecutive occurrences of target
$i$, i.e., the slots matched to \Pos{i}{j} and \Pos{i}{j+1}, we get

\begin{align}
d(\Slot/M, \Slot'/M)
& \leq d(\Pos{i}{j}, \Pos{i}{j'}) + 2 \delta
\; \leq \; \frac{1}{A_i} + 2\delta.
\label{eqn:upper-distance}
\end{align}

Using that $\delta = \frac{1}{M} \cdot \sqrt{\frac{n \log M}{2}}$,
as well as $\frac{1}{A_i} = \frac{1}{M \cdot \Value{i}}$ and 
$\Value{i} \leq \frac{\epsilon}{4+2\epsilon} \cdot \sqrt{\frac{2}{n \log M}}$,
we obtain that
\begin{align*}
\frac{\frac{1}{A_i} + 2\delta}{\frac{1}{A_i} - 2\delta}
& = \frac{\frac{1}{\Value{i}} + 2 \sqrt{\frac{n \log M}{2}}}{%
\frac{1}{\Value{i}} - 2 \sqrt{\frac{n \log M}{2}}} 
\; \leq \; \frac{\frac{4+2\epsilon}{\epsilon} \cdot \sqrt{\frac{n \log
      M}{2}} + 2 \sqrt{\frac{n \log M}{2}}}{%
      \frac{4+2\epsilon}{\epsilon} \cdot \sqrt{\frac{n \log
      M}{2}} - 2 \sqrt{\frac{n \log M}{2}}}
\; = \; \frac{\frac{4+2\epsilon}{\epsilon} + 2}{\frac{4+2\epsilon}{\epsilon}- 2} 
\; = \; 1+\epsilon,
\end{align*}
proving that the resulting sequence is $(1+\epsilon)$-quasi-regular.

\smallskip

To complete the proof, it remains to show that with high probability,
the graph $G$ contains a perfect matching. 
We will prove this using Hall's Theorem and a direct application of
the Hoeffding Bound:

\begin{lemma}[Hoeffding Bound] \label{lem:hoeffding}
Let $X_i$ be independent random variables such that
$a_i \leq X_i \leq b_i$ with probability 1.
Let $X = \sum_i X_i$. Then, for all $t > 0$,
\[
\Prob{X < \Expect{X} - t}, \Prob{X > \Expect{X} + t}
 \; < \; \e^{-\frac{2t^2}{\sum_i (b_i-a_i)^2}}.
\]
\end{lemma}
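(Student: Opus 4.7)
The plan is to prove Hoeffding's bound by the classical Chernoff method: control the moment generating function of each centered summand, use independence to factorize, and optimize over the free parameter. I will prove the upper tail bound $\Prob{X > \Expect{X} + t} < \e^{-2t^2/\sum_i (b_i-a_i)^2}$; the lower tail follows by symmetry (applied to $-X_i$ with bounds $[-b_i,-a_i]$).

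First I would center the variables. Let $Y_i = X_i - \Expect{X_i}$, so that $\Expect{Y_i} = 0$ and $Y_i \in [a_i - \Expect{X_i}, b_i - \Expect{X_i}]$, an interval of the same length $b_i - a_i$. For any $s > 0$, Markov's inequality applied to $\e^{s \sum_i Y_i}$ gives
\[
  \Prob{X > \Expect{X} + t} \;=\; \Prob{\e^{s \sum_i Y_i} > \e^{st}} \;\leq\; \e^{-st} \prod_i \Expect{\e^{s Y_i}},
\]
where the product comes from independence of the $Y_i$.

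The main step is to show the one-variable MGF bound (Hoeffding's lemma): if $Y$ is any zero-mean random variable supported on an interval $[\alpha,\beta]$ of length $L = \beta - \alpha$, then $\Expect{\e^{sY}} \leq \e^{s^2 L^2/8}$. I would prove this by convexity of $x \mapsto \e^{sx}$: for $x \in [\alpha,\beta]$,
\[
  \e^{sx} \;\leq\; \frac{\beta - x}{L}\,\e^{s\alpha} + \frac{x - \alpha}{L}\,\e^{s\beta}.
\]
Taking expectations and using $\Expect{Y} = 0$ yields $\Expect{\e^{sY}} \leq \frac{\beta}{L}\e^{s\alpha} - \frac{\alpha}{L}\e^{s\beta}$. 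Writing $p = -\alpha/L \in [0,1]$ and $u = sL$, this equals $\e^{\varphi(u)}$ where $\varphi(u) = -pu + \log(1 - p + p\e^u)$. A direct computation gives $\varphi(0) = \varphi'(0) = 0$ and $\varphi''(u) = \frac{p(1-p)\e^u}{(1-p+p\e^u)^2} \leq \frac14$ (since the map $q \mapsto q(1-q)$ is maximized at $q = 1/2$). Two integrations then give $\varphi(u) \leq u^2/8 = s^2 L^2/8$, as claimed.

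Applying this with $L = b_i - a_i$ to each $Y_i$ and substituting back,
\[
  \Prob{X > \Expect{X} + t} \;\leq\; \e^{-st} \prod_i \e^{s^2 (b_i - a_i)^2 / 8} \;=\; \exp\!\Bigl(-st + \tfrac{s^2}{8}\sum_i (b_i - a_i)^2\Bigr).
\]
Optimizing the right-hand side over $s > 0$ by setting $s = 4t / \sum_i (b_i - a_i)^2$ yields exactly $\e^{-2t^2/\sum_i (b_i - a_i)^2}$. The strict inequality follows because the inequality in Hoeffding's lemma is strict unless the $Y_i$ are degenerate (in which case the tail probability on the left is already $0$, and we compare with a positive right-hand side). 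The hardest and most delicate step is the derivative computation bounding $\varphi''(u) \leq 1/4$, which is where the key constant $8$ (and hence the optimal $2$ in the exponent) enters.
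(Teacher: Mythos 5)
Your proof is correct: it is the standard Chernoff-method derivation (centering, Markov applied to the moment generating function, Hoeffding's lemma via convexity and the bound $\varphi''(u)=q(1-q)\leq\tfrac14$, then optimizing $s$), and the computations check out, including the choice $s=4t/\sum_i(b_i-a_i)^2$ that produces the constant $2$ in the exponent. The paper itself states this lemma as a known classical result and gives no proof, so there is nothing to compare against; your brief remark on why the inequality can be taken strict (degenerate summands force the left side to be $0$, nondegenerate ones make Hoeffding's lemma strict) adequately covers the one place where the paper's statement, with $<$ rather than $\leq$, requires a word of justification.
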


To establish the Hall condition of $G$, we begin with intervals
$\SLOTS \subseteq \ALLSLOTS$ of slots, 
and then use the bounds for intervals to derive the
condition for arbitrary sets of slots.
A similar style of proof was used by 
Tijdeman~\cite{tijdeman1973distribution} to construct a schedule with somewhat
different specific combinatorial properties.

For any set $\SLOTS \subseteq \ALLSLOTS$ of slots, let \Neigh{\SLOTS}
denote the neighborhood of \SLOTS in $G$.  Fix an interval
$\SLOTS = \{\ell, \ell+1, \ldots, r-1\} \subseteq [0, M)$
with $\ell, r$ integers. 
Let the random variable $X_{\SLOTS} = |\Neigh{\SLOTS}|$ denote the
number of neighbors in $G$ of slots in the interval \SLOTS.
For each target $i$, let $X_{\SLOTS,i}$ be the number of $j$ such that
$(i,j) \in \Neigh{\SLOTS}$. 
Then, $X_{\SLOTS} = \sum_i X_{\SLOTS,i}$, 
and the $X_{\SLOTS,i}$ are independent.

\begin{lemma} \label{lem:one-type}
Fix a target $i$, and assume that $|\SLOTS| \leq (1-2\delta) M$, 
and write $x_i = A_i \cdot (2\delta + (r-\ell)/M)$. 
Then, $\Expect{X_{\SLOTS,i}} = x_i$
and $X_{\SLOTS,i} \in \SET{\Floor{x_i},\Floor{x_i}+1}$.
\end{lemma}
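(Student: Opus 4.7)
The plan is to reduce the statement to a classical geometric fact: the number of points of a uniformly random rotation of an arithmetic progression on the unit circle that fall into a given arc. The first step is to identify, for the block of slots $\SLOTS = \{\ell, \ell+1, \ldots, r-1\}$, the set $N$ of positions $y$ on the circle with $d(y, \Slot/M) \leq \delta$ for some $\Slot \in \SLOTS$. Because the slots in $\SLOTS$ sit at evenly spaced positions $\ell/M, (\ell+1)/M, \ldots, (r-1)/M$ and each contributes a closed $\delta$-ball, and because $\delta = \frac{1}{M}\sqrt{(n \log M)/2} \geq 1/(2M)$ in the parameter regime of interest, consecutive $\delta$-balls overlap, so $N$ is a single arc. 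The hypothesis $|\SLOTS| \leq (1-2\delta)M$ then guarantees $|N| < 1$, so $N$ is a proper (non-wrapping) arc, and a direct endpoint calculation shows $|N| = x_i / A_i$.

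The second step is to observe that $X_{\SLOTS,i}$ equals the number of points of the arithmetic progression $\{\theta_i + j/A_i \bmod 1 : 0 \leq j < A_i\}$ that land in $N$. For the expectation, each $\Pos{i}{j}$ is uniformly distributed on $[0,1)$ (since $\theta_i$ is), hence $\Prob{\Pos{i}{j} \in N} = |N|$, and linearity of expectation yields $\Expect{X_{\SLOTS,i}} = A_i \cdot |N| = x_i$. For the two-value range claim, I would use that the $A_i$ AP points partition the circle into $A_i$ arcs of equal length $1/A_i$; any arc of length $L$ on the circle meets either $\lfloor L A_i \rfloor$ or $\lfloor L A_i \rfloor + 1$ of these gap arcs, depending on the positioning of its endpoints, and therefore contains the corresponding number of AP points. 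Specializing to $L A_i = x_i$ gives $X_{\SLOTS,i} \in \{\lfloor x_i \rfloor, \lfloor x_i \rfloor + 1\}$.

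No step poses a real obstacle; the argument is essentially geometric bookkeeping. The only point meriting care is the verification that $N$ is a single proper arc, rather than a disjoint union of several arcs or the entire circle, and this is precisely where the two quantitative ingredients enter: $\delta \geq 1/(2M)$ forces the $\delta$-balls around consecutive slot positions to overlap, while $|\SLOTS| \leq (1-2\delta)M$ prevents $N$ from wrapping around and meeting itself. Once the arc structure of $N$ and its length are established, the uniformity of $\theta_i$ and the standard pigeonhole observation about equispaced points on the circle finish the proof.
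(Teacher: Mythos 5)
Your argument is essentially the paper's proof: both identify the union of $\delta$-balls around the slots in $\SLOTS$ as a single proper arc (using $\delta\geq 1/(2M)$ to merge consecutive balls and $|\SLOTS|\leq (1-2\delta)M$ to avoid wrap-around), compute the arc's length, use uniformity of $\theta_i$ and linearity for the expectation, and invoke the equispaced-points pigeonhole for the two-value bound. The one small slip is in the phrasing of the pigeonhole step --- the number of gap arcs an interval \emph{meets} and the number of AP points it \emph{contains} generally differ by one, so ``contains the corresponding number'' does not literally follow --- but the intended and correct conclusion (an arc of length $L$ contains between $\lfloor L A_i\rfloor$ and $\lfloor L A_i\rfloor+1$ points of a $1/A_i$-spaced progression) is exactly the fact the paper asserts.
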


\begin{proof}
  For each slot $\Slot \in \SLOTS$, let $J_{\Slot}$ be the interval
  $[(\Slot/M-\delta) \mod 1, (\Slot/M+\delta) \mod 1]$.  Then, $(i,j)$
  is adjacent to \Slot iff $\Pos{i}{j} \in J_{\Slot}$.  Define
  $J := \bigcup_{\Slot \in \SLOTS} J_{\Slot}$; then, $(i,j)$ is
  adjacent to a slot in \SLOTS iff $\Pos{i}{j} \in J$.  Because
  $\delta = \frac{1}{M} \cdot \sqrt{\frac{n \log M}{2}} \geq 1/(2M)$,
  $J$ is an interval, of the form
  $[(\ell/M-\delta) \mod 1, (r/M+\delta) \mod 1]$.

  The length of the interval is $|J|=2\delta + (r-\ell)/M$.  Because
  each \Pos{i}{j} is uniformly random in $[0,1]$,
  $\Expect{X_{\SLOTS,i}}=A_i \cdot |J|$.  Furthermore, because
  $d(\Pos{i}{j}, \Pos{i}{j+1}) = 1/A_i$, there can be no more than
  $1+\Floor{\frac{|J|}{1/A_i}}$ pairs $(i,j)$ with $\Pos{i}{j} \in J$,
  and no fewer than $\Floor{\frac{|J|}{1/A_i}}$.  Finally, note that
  $\frac{|J|}{1/A_i} = A_i(2\delta + (r-\ell)/M) = x_i$.
\end{proof}

We use Lemma~\ref{lem:one-type} to show that with 
high probability, $G$ has a perfect matching.

\begin{lemma} \label{lem:matching} Whenever
  $\Value{i} \leq \frac{\epsilon}{4+2\epsilon} \cdot \sqrt{\frac{2}{n
      \log M}}$
  for all $i$, with probability at least $1-1/M^2$, $G$ contains a
  perfect matching.
\end{lemma}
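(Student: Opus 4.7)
The plan is to verify Hall's condition for the bipartite graph $G$ with high probability, and then invoke Hall's theorem. The work splits into (a) establishing Hall's condition for every contiguous interval of slots on the circle via Hoeffding plus a union bound, and (b) reducing the case of an arbitrary slot set $\SLOTS \subseteq \ALLSLOTS$ to the interval case via a merging argument that exploits the circular geometry of $N(\SLOTS) = \bigcup_{s \in \SLOTS} J_s$.

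For part (a), I would fix an interval $\SLOTS = \{\ell, \ldots, r-1\}$ with $|\SLOTS| \leq (1-2\delta)M$, and sum Lemma~\ref{lem:one-type} across targets. Using $\sum_i A_i = M$, this gives
\[
\Expect{X_\SLOTS} = \sum_i A_i \cdot \bigl(2\delta + (r-\ell)/M\bigr) = |\SLOTS| + 2M\delta.
\]
Since the $\theta_i$ are independent, the random variables $X_{\SLOTS,i}$ are independent, and each lies in an interval of length $1$ (namely $\{\lfloor x_i\rfloor, \lfloor x_i\rfloor +1\}$). Hoeffding (Lemma~\ref{lem:hoeffding}) with $t = 2M\delta = \sqrt{2n\log M}$ yields
\[
\Prob{X_\SLOTS < |\SLOTS|} \; \leq \; \exp\!\bigl(-2\,(2M\delta)^2/n\bigr) \; = \; \exp(-4\log M) \; = \; M^{-4}.
\]
There are at most $M^2$ intervals on the circle, so a union bound gives that $|\Neigh{\SLOTS}| \geq |\SLOTS|$ holds simultaneously for all intervals with probability at least $1 - 1/M^2$. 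Intervals with $|\SLOTS| > (1-2\delta)M$ are trivial: their neighborhood-arc $N(\SLOTS)$ has length $> 1$, so it covers the entire circle and $|\Neigh{\SLOTS}| = M \geq |\SLOTS|$ deterministically.

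For part (b), I would reduce an arbitrary $\SLOTS$ to intervals as follows. Decompose $\SLOTS$ into its maximal circular arcs $\SLOTS_1, \ldots, \SLOTS_k$, and iteratively merge two arcs $\SLOTS_j, \SLOTS_{j+1}$ whenever the arcs $N(\SLOTS_j), N(\SLOTS_{j+1})$ overlap. A merge replaces these two arcs of slots by the single interval $\tilde \SLOTS$ spanning both (and the gap between them); since $\delta > 1/(2M)$ forces $N(\tilde\SLOTS) = N(\SLOTS_j) \cup N(\SLOTS_{j+1})$ in the overlapping case, we have $\Neigh{\tilde\SLOTS} = \Neigh{\SLOTS_j} \cup \Neigh{\SLOTS_{j+1}}$, while $|\tilde\SLOTS| \geq |\SLOTS_j|+|\SLOTS_{j+1}|$. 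After all possible merges we obtain disjoint intervals $T_1, \ldots, T_m$ with pairwise disjoint neighborhoods satisfying $|T_l| \geq \sum_{j : \SLOTS_j \subseteq T_l} |\SLOTS_j|$. Applying the interval Hall bound from part (a) to each $T_l$ and summing gives
\[
|\Neigh{\SLOTS}| = \sum_{l} |\Neigh{T_l}| \geq \sum_l |T_l| \geq |\SLOTS|.
\]

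The only delicate point is the merging argument in part (b): one must confirm that when $N(\SLOTS_j)$ and $N(\SLOTS_{j+1})$ overlap (equivalently, the gap between consecutive arcs is at most $2M\delta$ slot-positions), the merged interval $\tilde\SLOTS$ has neighborhood exactly $\Neigh{\SLOTS_j}\cup\Neigh{\SLOTS_{j+1}}$ and not more, so that the Hall bound on $\tilde\SLOTS$ transfers back to $\SLOTS$. Given $\delta > 1/(2M)$, this is essentially a routine verification that the intermediate slots' $\delta$-balls are already subsumed by the union, and the geometric claim about arcs on the circle. Once this is in place, Hall's theorem delivers a perfect matching in $G$ with probability at least $1 - 1/M^2$, as claimed.
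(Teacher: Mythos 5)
Your proof is correct and takes essentially the same route as the paper: you establish the Hall condition for intervals via Lemma~\ref{lem:one-type} and the Hoeffding bound with $\tau = 2M\delta = \sqrt{2n\log M}$, take a union bound over the $M^2$ candidate intervals, and then reduce arbitrary slot sets to the interval case by merging arcs whose $\delta$-neighborhoods overlap. The paper phrases this last reduction as an induction on the number of constituent intervals (replace two arcs with overlapping neighborhoods by a single larger interval $I'$ satisfying $\Neigh{I'} = \Neigh{\SLOTS_j} \cup \Neigh{\SLOTS_{j+1}}$, and recurse on a superset $\SLOTSP = \SLOTS \cup I'$ with fewer pieces), whereas you phrase it as iterating the merge until termination and then applying the interval bound to each resulting block $T_l$; these are the same argument in different clothing, and your geometric verification that intermediate slots' $\delta$-balls are subsumed is exactly the fact the paper relies on implicitly.
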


\begin{proof}
First, we show that when the Hall condition holds for all
\emph{intervals} \SLOTS of slots, it holds for all \emph{sets} \SLOTS.
We prove this by induction on the number of disjoint intervals
that \SLOTS comprises. The base case of \SLOTS being an interval
is true by definition.
For the induction step, suppose that $k \geq 2$ and
$\SLOTS = \bigcup_{j=1}^k \SLOTS[j]$, where the 
\SLOTS[j] are disjoint intervals.

If the neighborhoods of all the \SLOTS[j] are disjoint,
then $|\Neigh{\SLOTS}| = \sum_j |\Neigh{\SLOTS[j]}| 
\geq \sum_j |\SLOTS[j]| = |\SLOTS|$,
where the inequality was from the base case (intervals).
Otherwise, w.l.o.g., 
$\Neigh{\SLOTS[k]} \cap \Neigh{\SLOTS[k-1]} \neq \emptyset$.
Then, there exists an interval 
$I' \supset \SLOTS[k] \cup \SLOTS[k-1]$ with
$\Neigh{I'} = \Neigh{\SLOTS[k]} \cup \Neigh{\SLOTS[k-1]}$.
Let $\SLOTSP = \SLOTS \cup I'$.
We get that 
$|\Neigh{\SLOTS}| = |\Neigh{\SLOTSP}| \geq |\SLOTSP| \geq |\SLOTS|$, 
where the first inequality was by induction hypothesis 
(because \SLOTSP has at least one less interval).

Next, we establish that the Hall Condition holds with high probability
for all $M^2$ intervals.
First, focus on one interval $\SLOTS = [\ell, r)$, with $\ell, r \in \N$.
If $|\SLOTS| > (1-2\delta) M$, then \Neigh{\SLOTS} contains all
pairs $(i,j)$, so the Hall Condition is satisfied.
So focus on $|\SLOTS| \leq (1-2\delta) M$.
From Lemma~\ref{lem:one-type}, we get that
\[
\Expect{X_{\SLOTS}} 
\; = \; \sum_i A_i \cdot (2\delta + (r-\ell)/M)
\; = \; 2 \delta M + (r-\ell)
\; = \; \sqrt{2n \log M} + (r-\ell).
\]

Furthermore, $X_{\SLOTS}$ is the sum of independent random variables
$X_{\SLOTS,i}$ which each takes on one of two adjacent values.  From
the Hoeffding Bound (Lemma~\ref{lem:hoeffding}), we get that
\[
\Prob{X_{\SLOTS} < (r-\ell) + 2\sum_i A_i \delta - \tau}
\; < \; \e^{-2\tau^2/n}.
\]

Because $|\SLOTS| = r - \ell$, 
choosing $\tau = 2\sum_i A_i \delta = 2M \delta = \sqrt{2n \log M}$, 
we get that
\[
|\Neigh{\SLOTS}| \; = \; X_{\SLOTS}
\; \geq \; (r-\ell) + 2\sum_i A_i \delta - \tau
\; = \; r-\ell,
\]
with probability at least $1-\e^{-4n \log M/n} = 1-1/M^{4}$.

Taking a union bound over all $M^2$ candidate intervals \SLOTS, 
we obtain that the probability of having a perfect matching is at least
$1-1/M^{2}$.
Thus, with high probability, $G$ contains a perfect matching.
This completes the proof of Lemma~\ref{lem:matching} and thus also
Theorem~\ref{thm:matching}.
\end{proof}


\section{Future Work} \label{sec:conclusion}

Our work suggests a number of directions for future work.
Most immediately, it suggests trying to find optimal
\emph{ergodic} schedules for all value vectors 
(not only those covered by Theorem~\ref{thm:matching}).
A promising approach toward this goal is to use the randomized
rounding of Section~\ref{sec:optimal-schedule}, 
but re-round the probabilities every $T$ steps,
for some sufficiently large ``epoch size'' $T$.
The difficulty with this approach is ``stitching together'' the
schedules for different rounded frequencies at the boundary of epochs,
without violating the conditions of Theorem~\ref{thm:optimal-distributions}.

Throughout, we assumed that no target had value more than the sum of
all other targets' values, i.e., $\Value{i} \leq \half$ for all $i$.
When this assumption is violated,
the optimal schedule will wait at the highest-value target.
In the specific case of two targets of values
$\Value{1} > \half$ and $\Value{2} = 1- \Value{1}$,
it is fairly straightforward to calculate that the wait time at target 1 is
$2(\sqrt{\frac{\Value{1}}{\Value{2}}}-1)$.
We anticipate that a similar analysis will extend to more
than two targets.
The difficulty is that the waiting time at one target will result in
qualitatively different schedules, likely to complicate the analysis.

We assumed here that the game is zero-sum.
In general, the utilities of the attacker and defender may be different.
A general treatment is likely quite difficult.
One special case is motivated directly by the wildlife protection
application, and appears quite amenable to analysis.
Specifically, when a poacher kills animals (or chops down trees),
even if the poacher is captured, the damage is not reversed.
Thus, while the attacker's utility is as before, the defender's
utility from visiting target $i$ at time $\tau$ when the attacker
intends to stay for $t$ units of time is
$- \Value{i} \cdot \min(\tau, t)$.
Because the sum of utilities is thus not constant
(and typically negative), the defender's goal becomes more strongly
that of deterring (rather than just capturing) the attacker.
As a result, the ability to commit to a strategy first
(i.e., treating the game as a
\emph{Stackelberg Game}~\cite{von1934marktform, conitzer2006commit})
may now carry some advantage for the defender.
One can show that in the model in this paragraph,
whenever the attacker attacks target
$i$ for $t \leq \EL[i]/2$ units of time, the defender's utility is
$- \frac{3}{2} \AUtilTwo{\CDF[i]}{i}{t}$.
Since the optimal defender strategies of
Section~\ref{sec:optimal-schedule} and \ref{sec:matching} ensure such
a choice of $t$ by the attacker, the algorithms in those sections are
optimal in the non-zero sum model as well.

Among the other natural generalizations are the attacker's (and
defender's) utility function and more complex constraints on the
defender's schedule.
Throughout, we have assumed that the attacker's utility grows linearly
in the time spent at a target.
The security game formulations studied in much of the prior work in
the area \cite{tambe2011security} correspond to a step function at 0:
when the attacker reaches an unprotected target, he immediately causes
the maximum target-specific damage \Value{i}
(e.g., by blowing up the target).
Other natural utility functions suggest themselves:
if the resources to collect at targets are limited,
the utility function would be linear with a cap.
If a destructive attack takes a non-zero amount of time to set up, 
one obtains a step function at a time other than 0.
The latter leads to a scheduling problem with a harder constraint on
the inter-visit absence time from targets $i$ --- as in some of the
prior security games literature, the defender may ``sacrifice'' some
low-value targets to be able to fully protect the others.

The other natural generalization is to relax the assumption of uniform
travel time between targets. 
If an arbitrary metric is defined between targets, the problem becomes
significantly more complex: even if all targets have value 1,
the attacker's utility will be proportional to the cost of a minimum TSP
tour, and thus the defender's optimization problem is NP-hard. 
However, it is far from obvious how to adapt standard TSP
approximation techniques to the general problem with non-uniform
values: high-value targets should be visited more frequently, and TSP
approximation algorithms are not suited to enforce constraints that
these visits be spaced out over time.

As with TSP problems and past work on security games, a further
natural generalization is to consider multiple defenders,
as, e.g., in~\cite{KorzhykCP11}.

\subsubsection*{Acknowledgments}
We would like to thank Omer Angel, Sami Assaf, Shaddin Dughmi, Fei
Fang, Ron Graham,
Bobby Kleinberg, Jim Propp, and Milind Tambe for useful discussions,
and anonymous referees for useful feedback.

\newpage
\appendix
\newcommand{\geom}{\text{geom}}
\newcommand{\unif}{\text{unif}}

\section{Utility of an i.i.d.\ Defender} 
\label{sec:geometric}

One of the most natural random sequences to consider is the 
\emph{i.i.d.}~one, in which at each step $t$, the defender visits
target $i$ with probability $p_i$, independent of any past choices.
Intuitively, this strategy is suboptimal because it may visit a target
$i$ several times in close succession, or go for a long time without
visiting target $i$.  
Here, we calculate the approximation ratio of this strategy, showing:

\begin{proposition} \label{lem:markovian}
The i.i.d.~strategy is a $4/\e$-approximation for the defender, and
this is tight.
\end{proposition}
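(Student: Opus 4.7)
The plan is to compute the return-time distribution explicitly under the i.i.d.\ schedule, bound the attacker's utility with an exponential envelope, and then exhibit a family of instances where the bound is tight.

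First, I would derive $\Cdf[i]{t}$ explicitly. Under the i.i.d.\ schedule, the number $K$ of rounds before the first visit to target $i$ is geometric with parameter $p = \Value{i}$, and the visits lie at times $K + \tau_0, K+1+\tau_0,\ldots$ where $\tau_0\in[0,1]$ is the uniform shift. Writing $t = m + s$ with $m = \Floor{t}$ and $s \in [0,1)$, a straightforward sum over $k = 0,\ldots,m$ gives
\[
1 - \Cdf[i]{t} \;=\; (1-p)^m\,(1-ps),
\]
so by \eqref{eqn:expected-attacker-util} the attacker's utility is $\AUtilTwo{\LOCDIST}{i}{t} = \Value{i}\cdot t\cdot (1-p)^m(1-ps)$.

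Second, to get the upper bound $4/\e$, I would bound factor-by-factor using $(1-p)^m \leq \e^{-pm}$ and $1-ps \leq \e^{-ps}$, so that $\AUtilTwo{\LOCDIST}{i}{t} \leq \Value{i}\,t\,\e^{-pt}$. Since $x\mapsto x\e^{-x}$ is maximized at $x=1$ with value $1/\e$, setting $x = pt$ yields $\AUtilTwo{\LOCDIST}{i}{t} \leq \Value{i}/(p\e) = 1/\e$, the final equality using $p = \Value{i}$. Taking the maximum over $i$ and $t$, the attacker's utility against the i.i.d.\ schedule is at most $1/\e$. Meanwhile, the proof of Theorem~\ref{thm:optimal-distributions} shows that an optimal defender holds the attacker to exactly $\quarter$, so the approximation ratio is at most $(1/\e)/(1/4) = 4/\e$.

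Third, for tightness, I would take $n$ identical targets with $\Value{i} = 1/n$ and let $n\to\infty$. Choosing $t = n$ (so $m = n$, $s = 0$), the formula above gives attacker utility exactly $\Value{1}\cdot n\cdot (1-1/n)^n = (1-1/n)^n$, which tends to $1/\e$. Since the optimum is still $1/4$, the ratio approaches $4/\e$.

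There is no real obstacle; the one place to be careful is the piecewise expression for $\Cdf[i]{t}$ coming from the continuous attack duration against a sequence defined on integer visit times, so I would write out the $k=0,\ldots,m-1$ vs.\ $k=m$ split cleanly before collapsing the geometric sum. Everything else is a one-line calculus inequality and an explicit family witnessing tightness.
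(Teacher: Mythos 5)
Your proposal is correct and shares the paper's overall skeleton: compute $1-\Cdf[i]{t} = (1-p_i)^{\Floor{t}}(1-p_i(t\bmod 1))$, bound the resulting attacker utility $\Value{i}\,t\,(1-\Cdf[i]{t})$, and show tightness in the limit $p_i\to 0$. Where you diverge is in the upper bound. The paper first argues by a derivative test that the optimal attack duration has $(t\bmod 1)=0$, then optimizes $\Value{i}\,t\,(1-p_i)^t$ over real $t$ to get a local maximum $\frac{p_i}{\e\,\ln(1/(1-p_i))}$, and finally proves this expression is monotone decreasing in $p_i$ to conclude the supremum over $p_i$ is the limit $1/\e$. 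Your route is shorter: the envelope $(1-p)^m(1-ps)\le \e^{-pm}\e^{-ps}=\e^{-pt}$ together with $x\e^{-x}\le 1/\e$ gives the $1/\e$ bound in one stroke, with no case split on $t\bmod 1$ and no monotonicity lemma. Your tightness witness (take $t=n$ an integer, getting $(1-1/n)^n\to 1/\e$) is also cleaner than the paper's, which needs the observation that $-1/\ln(1-p_i)$ is an integer for infinitely many $p_i$ in order to justify its choice of $t$. The only caveat is that your envelope bound, being an inequality rather than an exact optimization, does not by itself tell you where the attacker's best response actually lies; but since you exhibit a matching lower bound, this has no bearing on the approximation guarantee.
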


\begin{proof}
From the attacker's viewpoint, 
the defender's next arrival time at target $i$ is the sum of two
independent random variables $\geom(p_i) + \unif([0,1])$.
Given a $t$, the defender will return within at most $t$ steps if and
only if $\geom(p_i) \leq \Floor{t}$ or 
$\geom(p_i) = 1+\Floor{t}$ and $\unif([0,1]) \leq (t \mod 1)$.
The two events are disjoint, the first one having probability
$1-(1-p_i)^{\Floor{t}}$, and the second having probability
$p_i \cdot (1-p_i)^{\Floor{t}} \cdot (t \mod 1)$. 
Hence, 
$\Cdf[i]{t} = 1 - (1 - p_i \cdot (t \mod 1)) \cdot (1-p_i)^{\Floor{t}}$,
and the attacker's utility from attacking target $i$ for $t$ time
units is 
\[
\Value{i} \cdot t \cdot (1-\Cdf[i]{t}) 
\; = \; \Value{i} \cdot t \cdot (1 - p_i \cdot (t \mod 1)) \cdot (1-p_i)^{\Floor{t}}.
\]
Writing $t = x + k$ for an integer $k = \Floor{t}$ and $x = (t \mod 1)
\in [0,1)$, a derivative test shows that the expression is monotone
decreasing in $x$ for any $k \geq 1$, whereas for $k = 0$, it has a
local maximum at $x = \frac{1}{2p_i} \geq 1$.
Because the latter is not feasible, we only need to consider the case
$(t \mod 1) = 0$ for the remainder, so the attacker's utility simplifies
to $\Value{i} \cdot t \cdot (1-p_i)^{\Floor{t}}$.

Taking a derivative with respect to $t$ and setting it to 0 gives us
that the unique local extremum is at $t = \frac{-1}{\ln(1-p_i)}$,
where the attacker's utility is $\frac{p_i}{\e \cdot \ln(1/(1-p_i))}$.
This local extremum is a maximum because the attacker's utility at $t=0$
and $t=\infty$ is 0. 

A derivative test and Taylor series bound shows that 
$\frac{p_i}{\e \cdot \ln(1/(1-p_i))}$
is monotone decreasing in $p_i$, so it is maximized as $p_i \to 0$,
where it converges to $1/\e$.
Notice that as $p_i \to 0$, there are infinitely many values of $p_i$
for which $\frac{-1}{\ln(1-p_i)}$ is an integer, so the choice of $t$
in our previous optimization is indeed valid.

Under an optimal schedule, the attacker's expected utility is
\quarter, completing the proof of the approximation guarantee.
\end{proof}

\section{Formalization of Notions about Schedules}
\label{sec:appendix-formalization}

\subsection{Canonical Schedules}
The general definition of defender schedules allows for strange
schedules that are clearly suboptimal, such as the defender leaving a
target $i$ and returning to it shortly afterwards, or visiting a
target infinitely often within a bounded time interval with shorter
and shorter return times. 
For ease of notation and analysis, we would like to rule out such
schedules. 
The following definition captures ``reasonable'' schedules.

\begin{definition}[canonical schedules] \label{def:canonical}
We say that a valid schedule \LOC is \emph{canonical}
if $\R^+$ can be partitioned into countably many disjoint intervals
$I_1, I_2, I_3, \ldots$ with the following properties:
\begin{enumerate}
\item All odd intervals $I_{2k-1}$ are open and of length exactly 1, 
  and $\Loc{t} = \perp$ if and only if $t \in \bigcup_k I_{2k-1}$.
\item All even intervals $I_{2k}$ are closed. 
(Even intervals could consist of a single point.) 
\end{enumerate}

A defender mixed schedule \LOCDIST is canonical if it is a distribution
over canonical deterministic schedules.
\end{definition}
Note that it follows from validity that any
canonical \LOC is constant on the even intervals.

Intuitively, a canonical schedule is one in which the defender travels
as quickly as possible (in one unit of time) from one target to the
next target, visits it for some (possibly zero) time, then travels to
the next (necessarily different) target, etc. 
That we may focus on canonical schedules w.l.o.g.~is captured by the
following proposition: 

\begin{proposition}
  \label{lem:canonical}
  For each valid schedule \LOC, there exists a canonical
  schedule $\LOC'$ that is at least as good for the defender, in the
  sense that for any choice \Attack{i}{\AStart}{\ALength} of the attacker,
\begin{align*}
  \AUtilFull{\LOC'}{\Attack{i}{\AStart}{\ALength}} \leq \AUtilFull{\LOC}{\Attack{i}{\AStart}{\ALength}}.
\end{align*}
\end{proposition}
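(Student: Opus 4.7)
The plan is to build $\LOC'$ from $\LOC$ by squishing every transit period down to exactly one unit and absorbing the saved time into extended visits. By validity of $\LOC$, any two visits to distinct targets are separated by at least one time unit, so in any bounded window there are only finitely many transitions between distinct targets. I would therefore enumerate in increasing time order the maximal \emph{blocks} $[a_k, b_k]$ on which $\LOC$ sits at some target $i_k$ (after merging any adjacent visits to the same target so that $i_k \neq i_{k+1}$); validity then gives $a_{k+1} - b_k \geq 1$ for every $k$.

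Given this enumeration, I define $\LOC'$ by setting $\LOC'(t) := i_k$ on the closed interval $[a_k, a_{k+1} - 1]$ (possibly a single point, when $a_{k+1} = a_k + 1$) and $\LOC'(t) := \perp$ on the open length-one interval $(a_{k+1}-1, a_{k+1})$. Definition~\ref{def:canonical} permits degenerate closed visit intervals, and the resulting partition of $\R^+$ alternates open length-one transit intervals with closed visit intervals, so $\LOC'$ is canonical. Since each new visit interval contains the original block $[a_k, b_k]$, we have the inclusion
\begin{equation*}
\Set{t}{\Loc{t} = i} \;\subseteq\; \Set{t}{\LOC'(t) = i} \qquad \text{for every } i.
\end{equation*}
In particular, any attack $\Attack{i}{\AStart}{\ALength}$ intercepted by $\LOC$ --- meaning some $\tau \in [\AStart, \AStart+\ALength]$ has $\Loc{\tau} = i$ --- is also intercepted by $\LOC'$, so \eqref{eqn:attacker-util} yields $\AUtilFull{\LOC'}{\Attack{i}{\AStart}{\ALength}} \leq \AUtilFull{\LOC}{\Attack{i}{\AStart}{\ALength}}$, as desired.

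The main obstacle is handling genuinely pathological $\LOC$ in which visits to a single target accumulate or occupy a topologically wild set, so that the maximal blocks $[a_k, b_k]$ are not straightforwardly defined. Since validity constrains only transitions between \emph{distinct} targets, the fix is to first clean up $\LOC$ by enlarging each $\Set{t}{\Loc{t} = i}$ to fill its maximal same-target region --- the largest surrounding interval whose endpoints are either $\pm\infty$ or visits to targets $j \neq i$. This replacement only grows the visit sets (preserving the direction of the set inclusion above), and afterwards the blocks are well-defined closed intervals that are countable by the earlier finiteness argument, enabling the main construction to proceed. Boundary conventions at $t = 0$ (for instance, $\LOC$ beginning in transit, or the first visit block occurring before time $1$) are handled by minor adjustments, such as replacing the initial transit prefix of $\LOC'$ with continued presence at $i_1$ on $[0, a_1]$.
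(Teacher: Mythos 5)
Your overall construction is the same as the paper's: squash each transit segment down to exactly one unit and absorb the remaining time into an extended visit to the preceding target. The paper does this by a pointwise reassignment of the $\perp$ times rather than by first enumerating blocks, but the result is identical, and your key correctness observation --- that $\Set{t}{\Loc{t} = i} \subseteq \Set{t}{\LOC'(t) = i}$ combined with~\eqref{eqn:attacker-util} gives the utility inequality --- is exactly the paper's.

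However, your cleanup step as stated overshoots. You enlarge $\LOC^{-1}(i)$ to ``the largest surrounding interval whose endpoints are either $\pm\infty$ or visits to targets $j \neq i$.'' Take $\LOC$ with $\LOC^{-1}(i) = \{0\}$, $\LOC^{-1}(j) = \{2\}$, and $\perp$ everywhere else: the enlarged region for $i$ is $[0,2)$ while the enlarged region for $j$ is $(0,\infty)$, so the two overlap and the cleaned-up schedule is ill-defined. Applying the enlargements sequentially to resolve the conflict is order-dependent and in any case destroys validity (you end up with $\LOC^{-1}(i) = [0,2)$ abutting $\LOC^{-1}(j) = [2,\infty)$), so your claimed constraint $a_{k+1} - b_k \geq 1$ no longer holds and the derived transit interval has no room. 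The correct cleanup is more conservative: extend $\LOC^{-1}(i)$ only to its closure and fill in gaps between two visits to the \emph{same} target --- precisely what the paper's rule does by setting $\LOC'(t) = i$ when $t \in \overline{\LOC^{-1}(i)}$ or when the nearest visits on both sides of $t$ are both to $i$ --- so that each block is the closed convex hull of a cluster of visits to $i$ uninterrupted by a visit to any $j \neq i$. With that block definition, validity really does give $a_{k+1} - b_k \geq 1$, the blocks are disjoint closed intervals, and your main construction and the boundary adjustment near $t = 0$ go through as you describe.
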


\begin{emptyproof}
  Given \LOC, define $\LOC'$ as follows.
  \begin{enumerate}
  \item For every $t$ with $\Loc{t} \neq \perp$ let
    $\LOC'(t)=\LOC(t)$.
  \item For every $t$ with $\LOC(t) = \perp$ 
    \begin{enumerate}
    \item If $t$ is in the closure of $\LOC^{-1}(i)$, set
      $\LOC'(t) = i$.
    \item Denote by $i(t)$ the last target visited before time $t$
      (setting $i(t)=1$ if none exists) and by $j(t)$ the first target visited
      after time $t$ (again setting $j(t)=1$ if none exists). Note that
      $i(t)$ and $j(t)$ are well-defined because \LOC is valid; this
      would not in general be true for an arbitrary
      $\LOC \colon \R^+ \to \{1,\ldots,n,\perp\}$.
    \item If $i(t)=j(t)$ then set $\LOC'(t) = i(t)$. That is, if in
      \LOC, the defender leaves a target $i$ and then comes back to it
      without visiting another, then in $\LOC'$, the defender just
      stays at $i$.
      (In addition to the defining property of being canonical,
      this ensures that no target $i$ is visited twice in a row.)
    \item If $i(t) \neq j(t)$ and the difference between $t$ and
      $\inf\Set{\tau > t}{\Loc{\tau} = j(t)}$ is at least 1, 
      then set $\LOC'(t)=i(t)$. That is, if the defender took more than one
      unit of time to reach target $j(t)$ from $i(t)$, then she might
      as well have stayed at $i(t)$ until one time unit before getting
      to $j(t)$.
    \item Otherwise, set $\LOC'(t) = \perp$.
    \end{enumerate}
  \end{enumerate}
It is easy to verify that $\LOC'$ is indeed canonical.
Consider any choice of attack \Attack{i}{\AStart}{\ALength}.
Because the preceding transformations only replaced $\perp$ (i.e.,
transit) times with times at targets,
whenever the attacker is not caught in $\LOC'$,
he was not caught in $\LOC$, so his utility can only decrease:
\begin{align*}
  \AUtilFull{\LOC'}{\Attack{i}{\AStart}{\ALength}}
  \leq \AUtilFull{\LOC}{\Attack{i}{\AStart}{\ALength}}.\QED
\end{align*}
\end{emptyproof}

\subsection{Shift Invariance}
To simplify the analysis, we would like to restrict our attention to
\emph{shift invariant} schedules for the defender: 
schedules such that the attacker's and defender's utilities depend
only on the duration $t'-t$ of the attack, but not on the start time
$t$. We formally define this notion as follows, and show that this
restriction is without loss of generality, as there is always an
optimal shift-invariant schedule.
For each $\tau \in \R^+$, define the \emph{shift operator}
$\SHIFTOP{\tau}: \STRAT \to \STRAT$ by
\begin{align*}
  [\ShiftOp{\tau}{\LOC}](t) = \Loc{t+\tau}.
\end{align*}
That is, the pure schedule \ShiftOp{\tau}{\LOC} is equal to \LOC, 
but leaves out the first $\tau$ time units of \LOC, 
shifting the remainder of the schedule forward in time.
Note that
\begin{align}
\AUtilFull{\ShiftOp{\tau}{\LOC}}{\Attack{i}{\AStart}{\ALength}}
& = \AUtilFull{\LOC}{\Attack{i}{\AStart+\tau}{\ALength}}.
\label{eqn:shift-definition}
\end{align}

The operator \SHIFTOP{\tau} extends naturally to act on mixed schedules
\LOCDIST.\footnote{A measurable map $P \colon X \to X$ can be extended
  to a linear operator on probability measures on $X$ as follows:
  For any measurable subset $A \subseteq X$, define
  $[P(\mu)](A) = \mu(P^{-1}(A))$. This defines a mapping 
  $\mu \mapsto P(\mu)$.}
We say that a mixed schedule
\LOCDIST is \emph{shift-invariant} if
$\ShiftOp{\tau}{\LOCDIST} = \LOCDIST$ for all $\tau \in \R^+$.  
The following lemma 
shows that an optimal schedule for the defender exists,
and that we may focus on shift-invariant schedules without loss of
generality.

\begin{lemma} \label{lem:shift-invariant}
The defender has an optimal mixed schedule that is shift-invariant.
\end{lemma}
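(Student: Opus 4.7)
The plan is to combine a Cesaro-averaging construction with a compactness argument. For any mixed schedule $\LOCDIST$ and $T > 0$, define the averaged mixed schedule $\bar\LOCDIST_T = \frac{1}{T}\int_0^T \ShiftOp{\tau}{\LOCDIST}\,\dd\tau$. Using~\eqref{eqn:shift-definition} and Fubini, for every attack $\Attack{i}{\AStart}{\ALength}$,
\[
\Expect[\LOC \sim \bar\LOCDIST_T]{\AUtilFull{\LOC}{\Attack{i}{\AStart}{\ALength}}}
= \frac{1}{T}\int_0^T \Expect[\LOC \sim \LOCDIST]{\AUtilFull{\LOC}{\Attack{i}{\AStart+\tau}{\ALength}}}\,\dd\tau
\leq \AUtil{\LOCDIST},
\]
so $\AUtil{\bar\LOCDIST_T} \leq \AUtil{\LOCDIST}$ after taking the supremum over attacks. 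Moreover
\[
\ShiftOp{\sigma}{\bar\LOCDIST_T} - \bar\LOCDIST_T
= \frac{1}{T}\Bigl(\int_T^{T+\sigma}\ShiftOp{\tau}{\LOCDIST}\,\dd\tau - \int_0^\sigma \ShiftOp{\tau}{\LOCDIST}\,\dd\tau\Bigr)
\]
has total variation at most $2\sigma/T \to 0$ for every fixed $\sigma$, so any weak-$*$ cluster point $\bar\LOCDIST$ of $\{\bar\LOCDIST_T\}_{T>0}$ satisfies $\ShiftOp{\sigma}{\bar\LOCDIST} = \bar\LOCDIST$ for every $\sigma$; this is essentially Krylov--Bogolyubov.

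To guarantee that such a cluster point exists and to extract a genuine minimizer, I would topologize canonical schedules as a Polish space by parametrizing each canonical schedule by its initial offset in $[0,1]$, its target sequence in $\SET{1,\ldots,n}^\N$, and its sequence of wait times in $[0,\infty)^\N$. One first notes that without loss of generality the wait times are uniformly bounded by a constant depending only on $n$: if the defender waits for time $w$ at some target $j$, the attacker can realize payoff of order $w/n$ by attacking any other target during that wait, so arbitrarily long waits are strictly suboptimal compared to, say, the round-robin schedule. After this truncation, the schedule space is compact, hence so is the space of probability measures on it in the weak-$*$ topology. Because $\LOC \mapsto \AUtilFull{\LOC}{\Attack{i}{\AStart}{\ALength}}$ is a nonnegative multiple of the indicator of the \emph{open} set of schedules avoiding $i$ throughout $[\AStart,\AStart+\ALength]$, it is lower semi-continuous in \LOC; the Portmanteau theorem then makes $\LOCDIST \mapsto \Expect[\LOC \sim \LOCDIST]{\AUtilFull{\LOC}{\Attack{i}{\AStart}{\ALength}}}$ weak-$*$ lower semi-continuous in \LOCDIST, and $\AUtil{\cdot}$, being a supremum of such maps, is itself lower semi-continuous. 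Combining these ingredients, $\AUtil{\bar\LOCDIST} \leq \liminf_T \AUtil{\bar\LOCDIST_T} \leq \AUtil{\LOCDIST}$, and minimizing the lower semi-continuous $\AUtil{\cdot}$ over the closed (hence compact) subset of shift-invariant mixed schedules produces the desired optimal shift-invariant schedule.

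The hard part will be the topological setup: choosing a Polish topology on canonical schedules so that the parametrization is both measurable and continuous in the right way, the ``not visited'' event is open (so the per-attack utility is lower semi-continuous), and the reduction to uniformly bounded wait times is rigorously without loss. Once the topology is fixed, both the Cesaro-averaging / Krylov--Bogolyubov step and the Weierstrass-style existence argument proceed in a standard fashion.
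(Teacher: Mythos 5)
Your overall architecture matches the paper's almost exactly: Cesaro averaging to manufacture shift-invariance, compactness plus lower semi-continuity of $\AUtil{\cdot}$ to extract a limit, and the observation that the supremum of lower semi-continuous maps is lower semi-continuous. Your Fubini computation is a nice streamlining of the paper's separate monotonicity and convexity lemmas, and your Krylov--Bogolyubov-style total-variation estimate $2\sigma/T$ is correct and is essentially what the paper's construction $\LOCDIST_m = \frac{1}{m}\int_0^m \ShiftOp{\tau}{\LOCDIST}\,\dd\tau$ is doing implicitly.

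The genuine gap is exactly the step you flagged as ``the hard part,'' and I do not think your sketch closes it. You parametrize canonical schedules by (offset, target sequence, wait-time sequence), and then need to restrict to uniformly bounded wait times to get compactness. Your justification for that restriction --- ``if the defender waits for time $w$ at target $j$, the attacker can realize payoff of order $w/n$ by attacking any other target during that wait'' --- is an argument about a single \emph{pure} schedule in which the attacker knows when the long wait occurs. For a \emph{mixed} schedule, the attacker commits to one attack $\Attack{i}{\AStart}{\ALength}$ in response to the whole distribution; if the long waits happen at different, defender-randomized times across the support, no single attack exploits them. So the reduction ``WLOG bounded waits'' is not established, and a per-realization surgery that removes long waits without increasing the attacker's utility for \emph{every} attack is not obviously monotone (unlike the paper's Proposition~\ref{lem:canonical}, which only ever replaces transit time by presence time, which is trivially monotone). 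The paper sidesteps this entirely by putting a different topology on \STRAT --- $\mathcal{L}^1$ convergence on compact time windows, tracking both $\LOC$ and the ``en route to'' function $\CLOC$, in the spirit of the Skorohod topology. Because the state space $\{1,\dots,n,\perp\}$ is finite and a canonical schedule restricted to $[0,T]$ has boundedly many transit intervals (each of length exactly $1$), this space is compact with \emph{no} truncation: a sequence of schedules whose first wait time diverges simply converges to the schedule that sits at that target forever. That choice is precisely what makes the existence argument go through without the problematic WLOG. Everything else you wrote (Portmanteau, supremum of l.s.c.\ is l.s.c., closedness of the shift-invariant simplex, the final Weierstrass step) is fine once the topology is in place.
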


\begin{proof}
  To prove this lemma, we introduce a natural topology on \STRAT, the
  space of valid canonical pure strategies. This topology is related
  to the Skorohod topology~\cite{skorokhod1957limit}.
  Given a $\LOC \in \STRAT$, define
  $\CLOC \colon \R^+ \to \{1,\ldots,n\}$ by setting $\CLOC(t)$
  to be either \Loc{t}, if $\Loc{t} \neq \perp$,
  or else setting it to be the first target visited after time $t$.
  Thus, $\CLOC(t)$ is the   target visited at time $t$,
  or the target that the defender is en route to visiting.
  Note that $\CLOC^{-1}(i)$ is the union of a
  countable set of intervals of length at least $1$, each open on the
  left and closed on the right.  Note also that the map
  $\LOC \mapsto \CLOC$ is ``almost'' invertible; since travel times
  are always 1, we know when each visit to each target began.  The
  exception is the first visit, and so \LOC is determined by \CLOC,
  together with the time of the beginning of the first target visit,
  which is always at most 1.

  The topology on \STRAT is the topology of convergence in
  $\mathcal{L}^1$ on compact sets.
    Specifically, for any $t_1, t_2 \in \R^+$, define
  $\Delta_{t_1,t_2}(\LOC',\LOC)$ to be the measure of the subset of
  $[t_1,t_2]$ on which at least one of the following two holds:
  (1) $\LOC' \neq \LOC$, or (2) $\CLOC' \neq \CLOC$.
  Then, we say that the limit of $\LOC_m$ for $m \to \infty$
  is equal to \LOC iff $\Delta_{t_1,t_2}(\LOC_{m},\LOC) \to 0$ 
  for all $t_1,t_2 \in \R^+$.
  It is straightforward to verify that this
  topology is compact and metrizable.\footnote{The metric is
    $\sum_{m=1}^\infty 2^{-m}d_{m}(\LOC_1,\LOC_2)$, where
    $d_{m}(\LOC_1,\LOC_2)$ is the measure of the subset of $[0,m]$ in
    which either $\LOC_1$ and $\LOC_2$ differ, or $\CLOC_1$ and
    $\CLOC_2$ differ.}  Hence the corresponding weak* topology on
  mixed strategies is also compact. Note also that the shift operator
  $\SHIFTOP{\tau} \colon S \to S$ is continuous in this topology.
  
  Note that if $\LOC_{m} \to_{m \to \infty} \LOC$, 
  and if target $i$ is visited in $[\AStart,\AStart+\ALength]$ in every
  $\LOC_{m}$, 
  then it is also visited in $[\AStart,\AStart+\ALength]$ in \LOC.
  Hence,
  \begin{align*}
    \lim_{m \to \infty} \AUtilFull{\LOC_{m}}{\Attack{i}{\AStart}{\ALength}}
    \geq \AUtilFull{\LOC}{\Attack{i}{\AStart}{\ALength}},
  \end{align*}
  and so $\AUtilFull{\cdot}{\Attack{i}{\AStart}{\ALength}}$
  is a lower semi-continuous map from $\STRAT$ to $\R^+$.
  It follows that 
  \begin{align*}
    \LOCDIST \mapsto \Expect[\LOC \sim \LOCDIST]{\AUtilFull{\LOC}{\Attack{i}{\AStart}{\ALength}}}
  \end{align*}
  is lower semi-continuous as well. Hence
  \begin{align*}
    \AUtil{\LOCDIST} = \sup_{i,\AStart,\ALength}\Expect[\LOC \sim \LOCDIST]{\AUtilFull{\LOC}{\Attack{i}{\AStart}{\ALength}}}
  \end{align*}
  is also lower-semicontinuous, and thus attains a minimum on the
  compact space of mixed strategies. Thus we have shown that an
  optimal schedule exists.
  
  When the attacker can obtain expected utility $u$ against
  \ShiftOp{\tau}{\LOC} by choosing \Attack{i}{\AStart}{\ALength},
  he can obtain the same utility $u$ against \LOCDIST
  by choosing \Attack{i}{\AStart+\tau}{\ALength}. 
  Therefore, the defender's utility is (weakly) monotone in $\tau$, 
  in the following sense:
  \begin{align}
    \AUtil{\ShiftOp{\tau}{\LOCDIST}} & \leq \AUtil{\LOCDIST}.
                                       \label{eq:monotone}
  \end{align}
  
  Let \LOCDIST[1] and \LOCDIST[2] be mixed strategies, 
  and let $\LOCDIST = \beta \LOCDIST[1] + (1-\beta) \LOCDIST[2]$
  be the schedule in which \LOCDIST[1] is carried out with probability
  $\beta$ and \LOCDIST[2] with probability $1-\beta$. 
  Since suprema are subadditive, the attacker's utility is convex:
  \begin{align}
    \AUtil{\LOCDIST} & \leq \beta \AUtil{\LOCDIST[1]} 
                       + (1-\beta)\AUtil{\LOCDIST[2]}.
                       \label{eq:concave}
  \end{align}
  
  Let \LOCDIST be an optimal mixed schedule. 
  For $m \in \N$ let
  \begin{align*}
  \LOCDIST[m] = \frac{1}{m}\int_0^{m} \ShiftOp{\tau}{\LOCDIST}\,\dd\tau.
  \end{align*}
  By the monotonicity (Eq.~\eqref{eq:monotone})
  and convexity (Eq.~\eqref{eq:concave}) of \AUtil{\LOCDIST},
  we have that $\AUtil{\LOCDIST[m]} \leq \AUtil{\LOCDIST}$.
  
  Since \STRAT is compact, the sequence $(\LOCDIST[m])_m$ has a
  converging subsequence that converges to some \LOCDIST[\infty]. 
  By the lower semi-continuity of \AUtil{\LOCDIST}, 
  \begin{align*}
    \AUtil{\LOCDIST[\infty]}
    & \leq \lim_{m \to \infty} \AUtil{\LOCDIST[m]} 
    \; \leq \; \AUtil{\LOCDIST};
  \end{align*}
  therefore \LOCDIST[\infty] is also optimal. 
  Finally, \LOCDIST[\infty] is by construction shift-invariant.
\end{proof}

\subsection{Transitive and Ergodic Schedules}
\label{sec:tran-erg}

We say that a shift-invariant mixed schedule \LOCDIST is
\emph{transitive} if almost every pure
schedule $\LOC_0$ chosen from \LOCDIST is periodic with some
period $\tau$ (i.e.,
$\ShiftOp{\tau}{\LOC_0}=\LOC_0$) and
\begin{align*}
  \LOCDIST = \frac{1}{\tau}\int_0^{\tau}\delta_{\ShiftOp{t}{\LOC_0}}\,\dd t,
\end{align*}
where $\delta_{\LOC}$ is the point mass on \LOC. Intuitively,
\LOCDIST simply repeats the same periodic schedule, with a phase
chosen uniformly at random.

A weaker property of a shift-invariant mixed schedule \LOCDIST is
\emph{ergodicity}: \LOCDIST is ergodic if almost every
pure schedule $\LOC_0$ chosen from \LOCDIST satisfies
\begin{align*}
  \LOCDIST = \lim_{\tau \to \infty}\frac{1}{\tau}\int_0^{\tau}\delta_{\ShiftOp{t}{\LOC_0}}\,\dd t.
\end{align*}
In fact, this is not the usual definition of an ergodic measure, but
the conclusion of the Ergodic Theorem. An equivalent property is that
$\LOCDIST$ cannot be written as the convex combination $\LOCDIST =
\beta \LOCDIST_1 + (1-\beta)\LOCDIST_2$ of two different 
shift-invariant measures. That is, $\LOCDIST$ is an extremal point
in the convex set (simplex, in fact) of shift-invariant measures.

\subsection{Times Between Visits to Targets}

We now more formally define the notion of the (random) time
between visits to a target $i$. While the notion is intuitively clear,
for arbitrary defender strategies \LOCDIST, a precise definition
requires some subtlety. We give a general definition for arbitrary
mixed schedules, not just random sequences.

Having defined schedules on $\R^+$, we now extend the
definition to schedules on $[-\tau, \infty)$ and eventually to $\R$,
using a standard construction called the \emph{bi-infinite extension}.
We define a modified shift operator $\ShiftOpT{\tau}{\cdot}$,
mapping schedules ($\LOC \colon \R^+ \to \SET{1,\ldots,n,\perp}$)
to \emph{$\tau$-schedules} 
$\LOC' \colon [-\tau, \infty) \to \SET{1,\ldots,n,\perp}$, via
$[\ShiftOpT{\tau}{\LOC}](t) = \LOC(t+\tau)$.
Thus, $\ShiftOpT{\tau}{\LOC}$ is simply a version of \LOC shifted
$\tau$ units to the left.
The map $\ShiftOpT{\tau}{\cdot}$ extends to a map on mixed schedules in the
obvious way.
For any shift-invariant mixed schedule \LOCDIST,
$\ShiftOpT{\tau}{\LOCDIST}$ is also shift-invariant, and furthermore,
for any $\tau' < \tau$, the distribution $\ShiftOpT{\tau}{\LOCDIST}$,
projected to $[-\tau',\infty)$, is the same distribution as
$\ShiftOpT{\tau'}{\LOCDIST}$. It follows that

\begin{align}
  \label{eq:bi-infinite}
  \LOCDIST_\infty = \lim_{\tau \to \infty}\ShiftOpT{\tau}{\LOCDIST}
\end{align}
is a well defined measure on pure schedules that are functions
$\LOC_\infty \colon \R \to \{1,\ldots,n,\perp\}$. We call
$\LOCDIST_\infty$ the \emph{bi-infinite extension} of \LOCDIST. 
It is straightforward to verify that it, too, is shift-invariant. 
Note that the distribution of the first visit to $i$ at non-negative
times, $\RTIME[i] = \min \Set{t \geq 0}{\Loc{t} = i}$, 
has the same distribution under $\LOCDIST_\infty$ as under \LOCDIST,
since the restriction of $\LOCDIST_\infty$ to non-negative times is
equal to \LOCDIST.

Given a target $i$ and a shift-invariant mixed schedule \LOCDIST,
let $\tilde \LOC \colon (-\infty, \infty) \to \{1,\ldots,n,\perp\}$
be a random schedule with distribution $\LOCDIST_\infty$.  
Let \TBV[i] be the (random) time between the last visit to $i$ before
time zero, until the first visit to $i$ after time zero:
\begin{align*}
  \TBV[i] = (\inf \Set{t \geq 0}{\tilde\LOC(t) = i}) 
         - (\sup \Set{t \leq 0}{\tilde\LOC(t) = i}).
\end{align*}
The choice of time $0$ here is immaterial because of shift
invariance. 
\TBV[i] could be infinite, but this will never happen in an optimal
\LOCDIST, because it would imply that the attacker's expected utility
for choosing $i$ is infinite; 
we hence assume henceforth that $\Prob{\TBV[i] = \infty} = 0$. 
Finally, contrary to what one might intuitively guess, 
even for transitive \LOCDIST, the distribution of \TBV[i] is \emph{not}
the same as the long-run empirical distribution of times between
visits, as gaps are chosen at time $0$ with probability proportional
to their length. 
The same holds for general \LOCDIST.

\section{Tightness of the $2$-Quasi-Regularity Result}
\label{app:tightness}

In this section, we prove Proposition~\ref{thm:2-quasi-tight}.
  For convenience, we restate the proposition here:

\begin{rtheorem}{Proposition}{\ref{thm:2-quasi-tight}}
  Let $n=3$ and $\FreqVec=(1/2,1/3,1/6)$.
  Then, for every $\epsilon>0$,
  there are no $(2-\epsilon)$-quasi-regular random sequences.
\end{rtheorem}

\begin{proof}
  Let \SEQ be a $(2-\epsilon)$-quasi-regular random sequence. 
  We claim that $\TBV[1] = 2$ with probability 1, and
  $\TBV[2] = 3$ with probability 1.
  For suppose that with positive probability $\TBV[1] \leq 1$.
  Then, because $\EL[1] = 2$, we also would have to have
  $\TBV[1] \geq 3$ with positive probability, and vice versa.
  Similarly, $\TBV[2] \leq 2$ with positive probability iff 
  $\TBV[2] \geq 4$ with positive probability.
  Either of those cases would lead to a ratio ($3/1$ or $4/2$) larger
  than $2-\epsilon$, violating $(2-\epsilon)$-quasi-regularity.

  Hence, with probability one, target $1$ appears in every other time
  period and target $2$ appears in every third time period, which is
  impossible.
\end{proof}

\section{Proof of Theorem~\ref{THM:THREE-FIBONACCI}}
\label{sec:slater-proof}

In this section, we prove Theorem~\ref{THM:THREE-FIBONACCI},
  restated here for convenience:

\begin{rtheorem}[Slater \cite{slater1950gaps}]{Theorem}{\ref{THM:THREE-FIBONACCI}}
Assume that $\FFR \leq \half$.
Let $k$ be smallest such that 
\begin{align*}
|\FIB{k+1}/\GR - \FIB{k}| \; \leq \; \FFR.
\hfill 
\end{align*}

Then, the distribution of return times is
\begin{equation*}
  \begin{split}
    \Prob{\TBV[i] = \FIB{k+1}} & = \FIB{k+1} \cdot \left(\FFR -
      (1/\GR)^{k+1} \right),\\
    \Prob{\TBV[i] = \FIB{k+2}} & = \FIB{k+2} \cdot \left(\FFR -
      (1/\GR)^{k+2} \right),\\
    \Prob{\TBV[i] = \FIB{k+3}} & = \FIB{k+3} \cdot \left(-\FFR +
      (1/\GR)^{k} \right),\\
    \Prob{\TBV[i] = t} & = 0 \mbox{ for all other } t.
\end{split}
\end{equation*}
\end{rtheorem}

We begin with a few simple, but useful, technical lemmas.
First, we give a closed form for expressions of the form
$\FIB{k+1} - \GR \FIB{k}$.

\begin{lemma} \label{lem:fibonacci-difference}
For any $k$, we have that
$
\FIB{k+1} - \GR \FIB{k}
\; = \; (-1/\GR)^k.
$
\end{lemma}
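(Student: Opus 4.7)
The cleanest plan is to use Binet's formula, which is already provided in Lemma~\ref{lem:fibonacci-basic}, item~\ref{sqrt-five}. Writing $\FIB{k} = (\GR^k - (-1/\GR)^k)/\sqrt{5}$ and $\FIB{k+1} = (\GR^{k+1} - (-1/\GR)^{k+1})/\sqrt{5}$, I would expand $\FIB{k+1} - \GR \FIB{k}$: the $\GR^{k+1}$ terms cancel, leaving
\[
\FIB{k+1} - \GR \FIB{k} \;=\; \frac{-(-1/\GR)^{k+1} + \GR \cdot (-1/\GR)^k}{\sqrt{5}} \;=\; \frac{(-1/\GR)^k \cdot (\GR + 1/\GR)}{\sqrt{5}}.
\]
The proof then closes by noting the identity $\GR + 1/\GR = \sqrt{5}$, which is immediate from $\GR^2 = \GR+1$ (multiply through by $1/\GR$ to get $\GR = 1 + 1/\GR$, and substitute $\GR = (1+\sqrt{5})/2$).

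As an alternative, and perhaps more in keeping with the paper's elementary style, the identity admits a one-line induction. The base case $k=0$ gives $\FIB{1} - \GR \FIB{0} = 1 = (-1/\GR)^0$, and $k=1$ gives $\FIB{2}-\GR\FIB{1} = 1-\GR = -1/\GR$, using $\GR^2=\GR+1$. For the inductive step, the Fibonacci recurrence yields
\[
\FIB{k+2} - \GR\FIB{k+1} \;=\; (\FIB{k+1}-\GR\FIB{k}) + (\FIB{k}-\GR\FIB{k-1}) \;=\; (-1/\GR)^k + (-1/\GR)^{k-1},
\]
and one checks that this equals $(-1/\GR)^{k+1}$ iff $1 - 1/\GR = 1/\GR^2$, i.e., iff $\GR^2 = \GR+1$.

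There is no real obstacle here; the only choice is stylistic. I would present whichever version integrates more smoothly with the surrounding calculations in the Golden Ratio section, which already makes heavy use of Binet's formula --- so the Binet-based derivation is probably preferable, since it reuses a tool the reader has just seen.
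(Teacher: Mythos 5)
Your Binet-formula derivation is essentially the paper's own proof: same expansion, same cancellation of the $\GR^{k+1}$ terms, same use of $\GR + 1/\GR = \sqrt{5}$. In fact your write-up is slightly cleaner than the paper's, whose final line contains a typo (it writes the result as $(-1/\GR)^k\,\GR$ rather than $(-1/\GR)^k$, evidently from misreading $\frac{1/\GR + \GR}{\sqrt{5}} = 1$ as $\GR$). The inductive alternative you sketch is also correct and somewhat more elementary, since it bypasses Binet and relies only on $\GR^2 = \GR + 1$ and the Fibonacci recurrence; given that the paper has already introduced Binet's formula as item~\ref{sqrt-five} of Lemma~\ref{lem:fibonacci-basic} and reuses it elsewhere, the authors' choice to go via Binet is natural, but either would do.
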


\begin{emptyproof}
Using the closed-form expression for Fibonacci Numbers 
(Part~\ref{sqrt-five} of Lemma~\ref{lem:fibonacci-basic}),
we can write

\begin{align*}
\FIB{k+1} - \GR \FIB{k} 
& =  \frac{  \left( \GR^{k+1} - (-1/\GR)^{k+1} \right)
          - \left( \GR^{k+1} - \GR (-1/\GR)^k \right)}{\sqrt{5}}\\
& =  \frac{-(-1/\GR)^{k+1} + \GR (-1/\GR)^{k}}{\sqrt{5}}\\
& =  (-1/\GR)^k \cdot \frac{1/\GR + \GR}{\sqrt{5}}
\; = \; (-1/\GR)^k \GR. \QED
\end{align*}
\end{emptyproof}

\begin{lemma} \label{lem:circle-fibonacci}
\begin{enumerate}
\item \label{lem:circle-fibonacci-equivalence}
For all $\delta \in (-\half,\half]$ and integers 
$N \geq 1$, the following two are equivalent:
\begin{itemize}
\item $\delta = (\GR N) \mod 1$.
\item There exists a positive integer $D$ with
$N/D - \GR = \delta \GR/D$.
\end{itemize}

\item \label{lem:circle-fibonacci-unique}
Let $\delta = (\GR \FIB{k}) \mod 1$ for $k \geq 2$
(where we consider the range of the $\mod$ operation to be $(-\half, \half]$).
Then, $\delta = \FIB{k}/\GR - \FIB{k-1}$.
\end{enumerate}
\end{lemma}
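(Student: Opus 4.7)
The plan is to treat the two parts in turn, with Part~1 a direct algebraic unpacking anchored at $\GR^2=\GR+1$, and Part~2 an immediate corollary of Lemma~\ref{lem:fibonacci-difference}.

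For Part~1, the forward direction starts from $\delta=(\GR N)\bmod 1$, which unfolds to $\delta=\GR N-M$ for some integer $M$ with $\delta\in(-\half,\half]$. I would set $D:=M-N$ and verify that (a) $D$ is a positive integer and (b) $N/D-\GR=\delta\GR/D$. Positivity will follow from $\GR N-N=N/\GR\geq 1/\GR>\half\geq|\delta|$, which forces $M>N$. For identity~(b), clearing the denominator reduces it to $N-\GR D=\delta\GR$; substituting $D=M-N$ and applying $\GR^2=\GR+1$ rewrites the left-hand side as $\GR(\GR N-M)$, which is $\GR\delta$ by hypothesis. The reverse direction runs the same manipulation backward: from $N=\GR(D+\delta)$ one computes $\GR N=\GR^2(D+\delta)=(\GR+1)(D+\delta)=(N+D)+\delta$, an integer plus $\delta\in(-\half,\half]$, which witnesses $(\GR N)\bmod 1=\delta$.

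For Part~2, I would apply Lemma~\ref{lem:fibonacci-difference} with $N=\FIB{k}$ to write $\GR\FIB{k}-\FIB{k+1}=-(-1/\GR)^k$, whose absolute value is $1/\GR^k$. For $k\geq 2$ this is at most $1/\GR^2<\half$, so the quantity already lies in $(-\half,\half]$ and must therefore coincide with $\delta=(\GR\FIB{k})\bmod 1$. The Fibonacci recurrence $\FIB{k+1}=\FIB{k}+\FIB{k-1}$ together with the identity $\GR-1=1/\GR$ then rewrites $\delta=\GR\FIB{k}-\FIB{k+1}$ as $(\GR-1)\FIB{k}-\FIB{k-1}=\FIB{k}/\GR-\FIB{k-1}$, as required.

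Neither step is delicate; both rest on the single identity $\GR^2=\GR+1$. The only points worth flagging are the range hypotheses: in Part~1, $N\geq 1$ is what makes $D=M-N$ strictly positive rather than merely nonzero (via $N/\GR>\half$), and in Part~2, the bound $k\geq 2$ is what keeps the error $1/\GR^k$ below $\half$, so that $\FIB{k+1}$ (rather than some other integer) realizes the modular reduction; for $k=1$ the formula would fail because $\GR-\FIB{2}=1/\GR>\half$ lies outside the chosen fundamental domain.
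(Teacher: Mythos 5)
Your proof of Part~1 is correct and follows the same route as the paper: both peel off the integer part of $\GR N$, set the divisor to be that integer minus $N$, check positivity via $N/\GR > \half \geq \delta$, and use $\GR^2 = \GR + 1$ to massage the identity. (Your reverse direction is written out more explicitly than the paper's one-line ``multiply by $\GR/(D'-N)$,'' which is a small clarity gain.)

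Your Part~2 is also correct, but takes a slightly more direct route than the paper. The paper invokes Part~1 with $N=\FIB{k}$ to conclude that some positive integer $D$ satisfies $\FIB{k}/\GR - D = \delta$, then uses Lemma~\ref{lem:fibonacci-difference} and the bound $(1/\GR)^k < \half$ to argue that $D = \FIB{k-1}$ is the only candidate. You instead skip Part~1 entirely: you identify $\FIB{k+1}$ as the integer nearest $\GR\FIB{k}$ via the same bound, read off $\delta = \GR\FIB{k} - \FIB{k+1}$ directly, and then use the Fibonacci recursion together with $\GR - 1 = 1/\GR$ to put this in the stated form $\FIB{k}/\GR - \FIB{k-1}$. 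Both proofs hinge on exactly the same fact --- the error $(1/\GR)^k$ drops below $\half$ once $k \geq 2$ --- so this is more a repackaging than a new idea, but your version is shorter and avoids the detour through the divisor $D$. Your closing remark about the failure at $k=1$ (where $1/\GR > \half$ escapes the fundamental domain) is correct and is exactly the reason the lemma's hypothesis requires $k\geq 2$.
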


\begin{emptyproof}
\begin{enumerate}
\item Because $\GR = 1 + 1/\GR$, 
the first condition can be rewritten as 
$(1+1/\GR) N = \delta + D'$ for some integer $D'$.
When $\delta < 0$, we must have $D' > N$;
when $\delta \geq 0$, because $N/\GR > \half \geq \delta$,
we again have $D' > N$.
The preceding equality can therefore be rearranged to
$N/\GR = \delta + (D'-N)$. 
Multiplying by $\GR/(D'-N)$ now gives equivalence with the second
condition, writing $D = D'-N$.
\item In the first part of the lemma, set $N=\FIB{k}$. 
Then, the condition is equivalent to the existence of a positive
integer $D$ with $\FIB{k}/D - \GR = \delta \GR/D$,
implying that $|\FIB{k} - D \GR| = |\delta| \GR$.
By choosing $D = \FIB{k-1}$,
according to Lemma~\ref{lem:fibonacci-difference}, we get that
\[
|\FIB{k}/\GR - \FIB{k-1}| 
\; = \; (1/\GR)^k
\; \stackrel{k \geq 2}{\leq} \; \half.
\]
Therefore, for any $D \neq \FIB{k-1}$,
we get that $|\FIB{k}/\GR - D| > 1-\half = \half$,
meaning that no $D \neq \FIB{k-1}$ can satisfy
$\FIB{k}/D - \GR = \delta \GR/D$.
By substituting the unique choice $D = \FIB{k-1}$,
we obtain the second part of the lemma.\QED
\end{enumerate}
\end{emptyproof}

Because the Fibonacci numbers are the convergents of the Golden Ratio,
they provide the best rational approximation, in the following sense.

\begin{theorem}
\label{thm:fibonacci-approximation}
Let $\hat{M} \geq 1$ be arbitrary.
Let $k$ be the largest even number with $\FIB{k} \leq \hat{M}$, 
and $k'$ the largest odd number with $\FIB{k'} \leq \hat{M}$.
Then, for all $M \leq \hat{M}$ and all $N$, we have the following:
\begin{enumerate}
\item $N/M > \GR$ implies $\FIB{k+1}/\FIB{k} \leq N/M$.
\label{thm:fibonacci-approximation:greater}
\item $N/M < \GR$ implies $\FIB{k'+1}/\FIB{k'} \geq N/M$.
\label{thm:fibonacci-approximation:smaller}
\end{enumerate}
\end{theorem}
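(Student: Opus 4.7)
Both parts are classical best-approximation statements for the convergents $\FIB{k+1}/\FIB{k}$ of $\GR$, and the plan is to prove them by a single Farey/mediant argument, combined with the Fibonacci identity from Lemma~\ref{lem:fibonacci-basic}(1). The key tool I would invoke is the following standard lemma: if $a/b$ and $c/d$ are positive rationals with $a/b < c/d$ and $bc - ad = 1$, then every rational $p/q$ with $a/b < p/q < c/d$ satisfies $q \geq b+d$. This follows from the observation that the $2\times 2$ integer matrix with rows $(a,b)$ and $(c,d)$ is unimodular, so there exist integers $\alpha, \beta$ with $(p,q) = \alpha(a,b) + \beta(c,d)$; both $\alpha, \beta$ must be strictly positive because $p/q$ lies strictly between $a/b$ and $c/d$, hence $q = \alpha b + \beta d \geq b+d$.

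For part 1, suppose for contradiction there exist integers $N$ and $M \leq \hat{M}$ with $\GR < N/M < \FIB{k+1}/\FIB{k}$ (the case $k=0$ does not arise, since $\hat{M}\geq 1$ forces $k\geq 2$, so every ratio is well-defined). Since $k+1$ is odd, Lemma~\ref{lem:fibonacci-basic}(3) gives $\FIB{k+2}/\FIB{k+1} < \GR$, so $N/M$ is sandwiched as
\[
\frac{\FIB{k+2}}{\FIB{k+1}} \;<\; \frac{N}{M} \;<\; \frac{\FIB{k+1}}{\FIB{k}}.
\]
By Lemma~\ref{lem:fibonacci-basic}(1), $\FIB{k+1}^2 - \FIB{k}\FIB{k+2} = (-1)^k = 1$, so the Farey lemma forces $M \geq \FIB{k+1} + \FIB{k} = \FIB{k+2}$. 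But $k+2$ is an even index strictly greater than $k$, so maximality of $k$ gives $\FIB{k+2} > \hat{M} \geq M$, a contradiction.

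Part 2 follows by the symmetric argument: assume $N/M < \GR$, $M \leq \hat{M}$, and $N/M > \FIB{k'+1}/\FIB{k'}$; use Lemma~\ref{lem:fibonacci-basic}(4) to place $\FIB{k'+2}/\FIB{k'+1}$ above $\GR$, and apply the Farey lemma to the resulting sandwich. The Fibonacci determinant is now $\FIB{k'}\FIB{k'+2} - \FIB{k'+1}^2 = (-1)^{k'+1} = 1$ (since $k'$ is odd), so we again obtain $M \geq \FIB{k'+2} > \hat{M}$, a contradiction.

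The only delicate point — not really an obstacle — is the parity bookkeeping: the sign in $\FIB{k+2}\FIB{k} - \FIB{k+1}^2 = (-1)^{k+1}$ flips with parity, so one must verify in each case that the Fibonacci ratio placed above $\GR$ really is the larger element of the bracketing pair, and that the determinant evaluates to $+1$ rather than $-1$. With $k$ even in part 1 and $k'$ odd in part 2 these align, and I foresee no deeper difficulty.
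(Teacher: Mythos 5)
Your proof is correct and self-contained, and it takes a genuinely different route from the paper. The paper does not supply its own proof of this theorem at all: it simply observes that the statement "follows directly from standard results stating that the convergents provide the best approximation to real numbers" and cites a textbook reference, with the extra remark that the intermediate case for best approximations cannot occur for the Golden Ratio. Your argument, by contrast, reproves the needed one-sided best-approximation property from scratch via the Farey/mediant lemma (unimodularity of the matrix with rows $(a,b),(c,d)$, forcing $q \geq b+d$), using only Lemma~\ref{lem:fibonacci-basic}. I checked the details: in part~1 the determinant $\FIB{k+1}^2 - \FIB{k}\FIB{k+2} = (-1)^k = 1$ is correct for even $k$, the sandwich $\FIB{k+2}/\FIB{k+1} < N/M < \FIB{k+1}/\FIB{k}$ is valid since $k+1$ is odd, and the mediant bound gives $M \geq \FIB{k+1}+\FIB{k} = \FIB{k+2} > \hat{M}$; part~2 is symmetric with the determinant $\FIB{k'}\FIB{k'+2}-\FIB{k'+1}^2 = (-1)^{k'+1} = 1$ for odd $k'$. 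The edge-case observation that $\hat{M}\geq 1$ forces $k\geq 2$ (and $k'\geq 1$), so the Fibonacci ratios are always well-defined, is also right. What your approach buys is a short elementary proof that avoids the citation and the small side remark about the intermediate case; what the paper's approach buys is brevity. Either is fine here.
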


Theorem~\ref{thm:fibonacci-approximation} follows directly from
standard results stating that the convergents provide the best
approximation to real numbers (e.g., \cite[p.~11]{lovasz},
noting that the second (intermediate) case cannot happen for the
Golden Ratio).

We are now ready to prove the characterization of the distribution of
\TBV[i] under the Golden Ratio schedule.

\begin{extraproof}{Theorem~\ref{THM:THREE-FIBONACCI}}
We begin by showing that the support of return times consists only of
Fibonacci Numbers.
Consider the interval $I = [0,\FFR)$.
Let $m \geq 1$ be a return time. 
Let $x \in I$ be arbitrary, 
and $y = (x + \GR m) \mod 1$, which is in $I$ by assumption.
Define $\delta = y-x$.
Because both $x,y \in I$, we have that $\delta \in [-x, \FFR-x)$.
By Part~\ref{lem:circle-fibonacci-equivalence} of
Lemma~\ref{lem:circle-fibonacci},
there is a positive integer $D$ such that
\begin{align*}
m/D - \GR & = \delta\GR/D \in [-x\GR/D, (\FFR-x)\GR/D).
\end{align*}

We now distinguish two cases:
\begin{itemize}
\item If $\delta > 0$, then $m/D > \GR$,
so Case~\ref{thm:fibonacci-approximation:greater} of
Theorem~\ref{thm:fibonacci-approximation} implies
that the largest even $j$ such that $\FIB{j} \leq D$ satisfies
$\FIB{j+1}/\FIB{j} > \GR$ and $\FIB{j+1}/\FIB{j} \leq m/D$.
Thus, $(x + \GR \FIB{j+1}) \mod 1 \in I$,
meaning that the defender returns to the target in $\FIB{j+1}$ steps.
Because $D \geq \FIB{j}$ and $m/D \geq \FIB{j+1}/\FIB{j}$,
we get that $m \geq \FIB{j+1}$;
unless $m = \FIB{j+1}$, this would contradict the definition of $m$ as
a return time, so we have shown that $m$ is a Fibonacci number.

\item Similarly, if $\delta < 0$, and thus $m/D - \GR < 0$,
then Case~\ref{thm:fibonacci-approximation:smaller} of
Theorem~\ref{thm:fibonacci-approximation} implies
that the largest odd $j$ such that $\FIB{j} \leq D$ satisfies
$\FIB{j+1}/\FIB{j} - \GR < 0$ and $\FIB{j+1}/\FIB{j} \geq m/D$.
By the same argument, we obtain now that $m=\FIB{j+1}$.
\end{itemize}

\medskip

Next, we prove the second part of the theorem.
First, notice that the $k$ defined in the theorem actually exists.
By Lemma~\ref{lem:fibonacci-difference}, we get that
$|\FIB{k+1} - \GR \FIB{k}| = (1/\GR)^k \to 0$ as $k \to \infty$,
so there exists a $k$ (and thus a smallest $k$) with
$|\FIB{k+1} - \GR \FIB{k}| \leq \GR \FFR$.

We show that there cannot be a return time $m < \FIB{k+1}$.
If there were, then by the previous part of the proof,
$m$ would be a Fibonacci number, say, $m = \FIB{\ell}$.
And because $m \geq 1$, we get that $\ell \geq 2$.
By Part~\ref{lem:circle-fibonacci-unique} of
Lemma~\ref{lem:circle-fibonacci}, that means that 
$\FIB{\ell}/\GR - \FIB{\ell-1} = y-x$,
and hence
$|\FIB{\ell}/\GR - \FIB{\ell-1}| = |y-x| < \FFR$,
contradicting the definition of $k$ as smallest with that property.

Consider a return to $I$ within $m$ steps,
starting from $x \in I$ and ending at $y \in I$,
so that $\delta_{\ell} := y-x$ satisfies
$|\delta_{\ell}| < \FFR$.
By the preceding analysis,
$m = \FIB{\ell}$ for some $\ell \geq k+1$.
Again, by Part~\ref{lem:circle-fibonacci-unique} of
Lemma~\ref{lem:circle-fibonacci}, we obtain that
$\delta_{\ell} = \FIB{\ell}/\GR - \FIB{\ell-1}$.

When $\delta_{\ell} < 0$, the $x \in I$ with 
$x+\delta_{\ell} \in I$ are exactly captured by the interval
$J_{\ell} := [|\delta_{\ell}|,\FFR]$, 
while for $\delta_{\ell} > 0$, they are exactly the interval
$J_{\ell} := [0,\FFR-\delta_{\ell})$.
In either case, the interval $J_{\ell}$ has size exactly
$|J_{\ell}| = \FFR - |\delta_{\ell}|$.

We will show that $J_{k+2} \cup J_{k+3} = I$.
By Lemma~\ref{lem:fibonacci-basic}, the signs of $\delta_{\ell}$ are
alternating, meaning that the intervals $J_{\ell}$ alternate being of
the form $[0,x]$ and $[y,\FFR)$. 
In particular, to show that $J_{k+2} \cup J_{k+3} = I$, it suffices to
show that $|J_{k+2}| + |J_{k+3}| \geq \FFR$.
Because
$|J_{k+2}| + |J_{k+3}| = 2\FFR - |\delta_{k+2}| - |\delta_{k+3}|$,
this is equivalent to showing that 
$|\delta_{k+2}| + |\delta_{k+3}| \leq \FFR$.
We distinguish two cases, based on whether $k$ is even or odd.
\begin{itemize}
\item If $k$ is even, then
$\delta_{k+2} = \FIB{k+2}/\GR - \FIB{k+1} < 0$
and $\delta_{k+3} = \FIB{k+3}/\GR - \FIB{k+2} > 0$, so we obtain that
\[
|\delta_{k+3}| + |\delta_{k+2}|
\; = \; \FIB{k+3}/\GR - \FIB{k+2} - \FIB{k+2}/\GR + \FIB{k+1}
\; = \; \FIB{k+1}/\GR - \FIB{k}
\; = \; |\FIB{k+1}/\GR - \FIB{k}|
\; \leq \; \FFR,
\]
by the definition of $k$.

\item If $k$ is odd, then
$\delta_{k+2} = \FIB{k+2}/\GR - \FIB{k+1} > 0$
and $\delta_{k+3} = \FIB{k+3}/\GR - \FIB{k+2} < 0$, so we obtain that
\[
|\delta_{k+2}| + |\delta_{k+3}|
\; = \; \FIB{k+2}/\GR - \FIB{k+1} - \FIB{k+3}/\GR + \FIB{k+2}
\; = \; \FIB{k} - \FIB{k+1}/\GR
\; = \; |\FIB{k+1}/\GR - \FIB{k}|
\; \leq \; \FFR.
\]
\end{itemize}

Thus, we have shown that the support of the distribution is indeed
contained in $\SET{\FIB{k+1}, \FIB{k+2}, \FIB{k+3}}$.
Finally, we can work out the frequencies.
Conditioned on being in the interval of size \FFR,
the probability of being in $J_{\ell}$ is 
$q_{\ell} = |J_{\ell}|/\FFR$.
To arrive at the attacker's observed distribution of \TBV[i],
we notice that the probability of time 0 being in an interval of
length $\FIB{\ell}$ is 
\[
\frac{q_{\ell} \FIB{\ell}}{\sum_j q_j \FIB{j}}
= \frac{q_{\ell} \FIB{\ell}}{1/\FFR}
= \FIB{\ell} \cdot |J_{\ell}|.
\]
Thus, we obtain that

\begin{align*}
\Prob{\TBV[i] = \FIB{k+1}} & = \FIB{k+1} \cdot |J_{k+1}| 
\; = \; \FIB{k+1} \cdot (\FFR - |\FIB{k+1}/\GR - \FIB{k}|),\\
\Prob{\TBV[i] = \FIB{k+2}} & = \FIB{k+2} \cdot |J_{k+2}| 
\; = \; \FIB{k+2} \cdot (\FFR - |\FIB{k+1} - \FIB{k+2}/\GR|),\\
\Prob{\TBV[i] = \FIB{k+3}} & = 1 - q_1 - q_2 \\
& = \FIB{k+3} \cdot (-\FFR + |\FIB{k+1}/\GR - \FIB{k} + \FIB{k+1} - \FIB{k+2}/\GR|)\\
& = \FIB{k+3} \cdot (-\FFR + |\FIB{k-1} - \FIB{k}/\GR|).\\
\end{align*}

Notice that we arranged the terms inside absolute values such that for
even $k$, they are all positive, while for odd $k$, they are all negative. 
This allowed us to simply add inside the absolute value.
Applying Lemma~\ref{lem:fibonacci-difference} to all three terms
now completes the proof.
\end{extraproof}

\section{Computational Considerations for the Golden Ratio Schedule}
\label{sec:finite-precision}

As phrased, Algorithm~\ref{alg:golden-ratio} requires precise
arithmetic on irrational numbers, and drawing a uniformly random
number from $[0,1]$. Here, we discuss how to implement the algorithm
such that each target $i$ visited in step $t$ can be computed in time
polynomial in the input size.

Let $\Freq{i} = a_i/b_i$ for each $i$, and write
$M = \lcm(b_1, \ldots, b_m)$ for the common denominator.
Notice that $\log M \leq \sum_i \log b_i$ is polynomial in the
input size.

For each $i$, the number $P_i = \sum_{i' < i} p_{i'}$ is rational.
To decide whether target $i$ is visited in step $t$,
the algorithm needs to decide if 
$(\OFFSET + t/\GR) \mod 1 \in [P_i, P_{i+1}]$,
or --- equivalently --- if there is an integer $D$ with
$\OFFSET + t/\GR \in [D+P_i, D+P_{i+1}]$.
To decide whether $\OFFSET + t/\GR < D+P_j$ or $\OFFSET + t/\GR > D+P_j$ 
(for $j \in \SET{i,i+1}$), the algorithm needs to decide if
$\GR < \frac{t}{D+P_j-\OFFSET}$ or $\GR > \frac{t}{D+P_j-\OFFSET}$.
The key question is how many digits of \GR the algorithm needs to
evaluate for this decision, and how many digits of the uniformly
random offset \OFFSET it needs to decide on.

Suppose that the algorithm has generated the first $k$ random digits
of \OFFSET, having committed to 
$\frac{\ell}{10^k} \leq \OFFSET < \frac{\ell+1}{10^k}$
for some $\ell \in \SET{0,1,\ldots,10^k-1}$. 
Writing $P_j = N_j/M$ (using the denominator $M$ defined above),
a decision about target $P_j$ can be made whenever 
$\GR < \frac{t M 10^k}{10^k \cdot M D + 10^k N_j- M \cdot \ell}$
or
$\GR > \frac{t M 10^k}{10^k \cdot M D + 10^k N_j- M \cdot (\ell+1)}$.
In both cases, the right-hand side is a rational approximation to \GR
with denominator bounded by $\hat{M} := 2 \cdot 10^k \cdot M D$.

It is well known 
(see, e.g., \cite[Theorems~193--194]{hardy:wright})
that $|\GR - \frac{\hat{N}}{\hat{M}}| \geq
\frac{1}{(\sqrt{5}-\epsilon) \hat{M}^2}$ for all $\epsilon > 0$.
In particular, this implies that evaluating \GR to 
within $O(\log \hat{M}^2) = O(k + \log M + \log D)$ digits
is sufficient to test whether 
$\GR < \frac{t M 10^k}{10^k \cdot M D + 10^k N_j- M \cdot \ell}$,
and whether 
$\GR > \frac{t M 10^k}{10^k \cdot M D + 10^k N_j- M \cdot (\ell+1)}$.
In either of these cases, the algorithm has resolved whether
$\GR < \frac{t}{D+P_j-\OFFSET}$.

The only case where the algorithm cannot resolve whether
$\GR < \frac{t}{D+P_j-\OFFSET}$ is when
\[
\frac{t M 10^k}{10^k \cdot M D + 10^k N_j- M \cdot \ell} 
\; < \; \GR \; < \; \frac{t M 10^k}{10^k \cdot M D + 10^k N_j- M \cdot (\ell+1)}.
\]
In this case, the number of digits for \OFFSET is insufficient.
Notice that there is a unique value of $\ell$ for which this happens,
so the probability of failure is at most $10^{-k}$.
Taking a union bound over all $n$ interval boundaries and all $t$
rounds, we see that in order to succeed with high probability, 
the number of digits of \OFFSET the algorithm needs to generate is
$O(\log n + \log t)$.

In particular, the computation and required randomness are polynomial.

\bibliographystyle{plain}
\bibliography{../bibliography/names,../bibliography/conferences,../bibliography/poaching}

\end{document}